\definecolor{asparagus}{rgb}{0.0, 0.65, 0.31}
\definecolor{main}{RGB}{66,125,58}
\definecolor{sub}{RGB}{221,244,217}
\newtcolorbox{boxH}{
    sharpish corners, 
    colback = sub, 
    colframe = main, 
    boxrule = 0pt, 
    leftrule = 2pt 
}
\definecolor{equationcolor}{RGB}{222,94,100}
\definecolor{alecolor}{RGB}{238,33,80}
\definecolor{equationcolor}{RGB}{222,94,100}
\renewcommand{\v}[1]{\ensuremath{\boldsymbol #1}}
\newcommand{\ms}[1]{\textsf{#1}}
\newcommand{\iden}{\mathbbm{1}}
\newcommand{\E}[1]{\mathcal{E}}
\def\E{ {\cal E} }
\def\S{ {\cal S} }
\def\T{ {\cal T} }
\theoremstyle{plain}
\newtheorem{thm}{Theorem}
\newtheorem{lem}[thm]{Lemma}
\newtheorem{prop}[thm]{Proposition}
\newtheorem{conj}[thm]{Conjecture}
\newtheorem{cor}[thm]{Corollary}
\newtheorem{obs}[thm]{Observation}
\newtheorem{defn}{Definition}
\begin{document}

\preprint{APS/123-QED}

\title{Catalytic transformations for thermal operations}

\author{Jakub Czartowski}
	\affiliation{Doctoral School of Exact and Natural Sciences, Jagiellonian University, 30-348 Kraków, Poland}
	\affiliation{Faculty of Physics, Astronomy and Applied Computer Science, Jagiellonian University, 30-348 Kraków, Poland}
 \author{A. de Oliveira Junior}
	\affiliation{Center for Macroscopic Quantum States bigQ, Department of Physics,
Technical University of Denmark, Fysikvej 307, 2800 Kgs. Lyngby, Denmark}

\date{August 22, 2024}

\begin{abstract}
What are the fundamental limits and advantages of using a catalyst to aid thermodynamic transformations between quantum systems? In this work, we answer this question by focusing on transformations between energy-incoherent states under the most general energy-conserving interactions among the system, the catalyst, and a thermal environment. The sole constraint is that the catalyst must return unperturbed and uncorrelated with the other subsystems. More precisely, we first upper bound the set of states to which a given initial state can thermodynamically evolve (the catalysable future) or from which it can evolve (the catalysable past) with the help of a strict catalyst. Secondly, we derive lower bounds on the dimensionality required for the existence of catalysts under thermal process, along with bounds on the catalyst's state preparation. Finally, we quantify the catalytic advantage in terms of the volume of the catalysable future and demonstrate its utility in an exemplary task of generating entanglement and cooling a quantum system using thermal resources.
\end{abstract}

\maketitle


\section{Introduction \label{Sec:introduction}}

Classical thermodynamics is a theory of macroscopic systems in equilibrium, whose description relies on quantities with fluctuations negligible compared to their average values~\cite{fermi1956thermodynamics,callen1985thermodynamics}. As a byproduct, it gives a clear picture of what state transformations are allowed in terms of a small number of macroscopic quantities, such as work and entropy. Specifically, at constant temperature and volume, transformations between equilibrium states are governed by a simple function, the so-called Helmholtz free energy~\cite{Helmholtz1891}. However, going beyond the original scenario of equilibrium thermodynamics has profound consequences. For finite-dimensional quantum systems far from equilibrium, state transformations are no longer governed by a single second law but by an entire family of conditions known as the ``second laws of quantum thermodynamics''~\cite{brandao2015second}. 

Under this paradigm, the existence of a family of generalised free energies, which indicate what states can be reached from a given one (referred to as the present state), naturally decomposes the space of states into three distinct parts: the set of states to which the present state can evolve, known as the future thermal cone $\mathcal{T}_+$; the set of states from which the present state can be reached, or the past thermal cone $\mathcal{T}_-$; and the set of states that are neither in the past nor the future thermal cone, constituting the incomparable region $\mathcal{T}_{\emptyset}$~\cite{Korzekwathesis,de2022geometric} -- three regions analogous to the structure of the light cone in special relativity, which divides the space–time into future, past, and space-like region~(see Fig.~\ref{F-scheme}). Consequently, the issue of thermodynamic transformations in finite-size systems is characterised by the emergence of a finite-size effect known as incomparability. 

Interestingly, one can reduce this effect by introducing a catalyst -- a system that allows us to perform otherwise impossible transformations, but is not modified itself, i.e., at the end of the process it is returned unchanged~\cite{Datta2022,Patrykreview}. Remarkably, if we allow the catalyst to become correlated with the main system while keeping its local state intact, transformations are once again described by a single function, the standard nonequilibrium free energy~\cite{muller2018correlating,wilming2017axiomatic, rethinasamy2020relative, shiraishi2021quantum}. 
\begin{figure}[t]
    \centering
    \vspace{1.1cm}
    \includegraphics{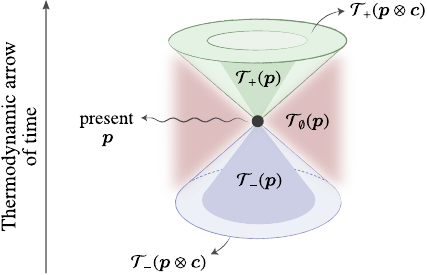}
    \caption{\textbf{Thermodynamic arrow of time}. The second law of thermodynamics introduces an ordering in the space of states such that, for a given initial state, it divides the space into past (blue), incomparable (red), and future (green) regions. By allowing a catalyst $\v c$, the future and past regions grow as the incomparable region shrinks.}
    \label{F-scheme}
\end{figure}

Over the last two decades, 
catalytic transformations
have become increasingly relevant in the context of quantum information science~\cite{Datta2022,Patrykreview}. Their first appearance focused on entanglement manipulation~\cite{Plenio1999,vanDam2003,turgut2007catalytic,daftuar2001mathematical,sun2005existence,feng2005catalyst,PhysRevLett.127.080502,Kondra_2021,Datta2022} which then broadened to include quantum thermodynamics~\cite{brandao2015second,Ng_2015,Wilming2017,muller2018correlating,Lipka-Bartosik2021,shiraishi2021quantum,Gallego2016,boes2019bypassing,Henao_2021,Henao2022,son2022catalysis,son2023hierarchy,Lipka_Bartosik_2023,czartowski2023thermal}, and many other facets of quantum theory~\cite{Aberg2014,Vaccaro2018,lostaglio2019coherence,takagi2022correlation,char2023catalytic,van2023covariant, Marvian2019,wilming2021entropy,Wilming2022correlationsin,rubboli2022fundamental,boes2019neumann,Marvian2019,wilming2021entropy,Wilming2022correlationsin,rubboli2022fundamental,boes2019neumann,de2023quantum}. The only constraint of a catalytic process--returning the catalyst unperturbed--leads to the classification of catalytic processes into two main different types: \emph{(i)} allowing it to become correlated with the main system, or \emph{(ii)} keeping it uncorrelated.
As pointed out before, the former case of correlated catalysis is powerful enough to close the gap between the future and past thermal regions. The latter, referred to as `strict catalysis', is much closer to the original idea of returning the catalyst unperturbed. It is thus natural to ask about limitations of strict catalysts -- a problem we will discuss in this manuscript.

Arguably, one of the fundamental problems in quantum thermodynamics concerns quantification of the catalytic advantages in a given transformation. This issue can be approached by investigating the behaviour of the thermal cones when a catalyst is introduced. In this scenario, the catalyst cannot change future into the past or vice-verse, nor can it shrink both regions. Consequently, the growth of the past and future thermal cones occurs at the expense of the incomparable region. However, determining the set of states that can be achieved with the aid of a catalyst has been considered a highly non-trivial problem. 

A complementary problem is that almost nothing is known about the precise form of the catalyst state required for a given transformation. Although in many cases the necessary (and sometimes sufficient) conditions for the existence of a catalyst are known, the theory remains agnostic regarding the precise form and properties of the catalytic state itself. The first steps towards addressing this problem were made in Ref.~\cite{Sanders_2009, grabowecky2018bounds} within the realm of entanglement theory, where the authors focused on pure state transformations governed by entanglement-assisted local operations and classical communication (ELOCC). Therein, the authors consider bounds for the entanglement of a catalyst necessary to catalyse transformation between a given pair of otherwise incomparable states. Furthermore, they established a lower bound for the dimension of the catalyst required for a specific ELOCC transformation. As these results rely on the notion of majorisation, they automatically extend to the resource theory of coherence, as well as to so-called noisy operations, or incoherent thermal operations at infinite temperature. For finite temperatures, state transformations between specific quantum states (those that are diagonal in the energy eigenbasis) are described by a more general notion known as thermomajorisation order. This concept does not have an equivalent physical counterpart in either entanglement or coherence theories. Currently, understanding any relationship between the dimensionality or form of the catalyst in a given thermal process aided by a strict catalyst remains an open problem.

In this work, we extend our understanding of the interplay between quantum thermodynamics and catalysis. Our main question can be framed as:
\begin{boxH}
		\emph{What are the ultimate limits for the set of states achievable from a given initial state, with the aid of catalyst, under the most general energy-conserving interactions between the system and the thermal bath?}	
\end{boxH}

This question can be tackled by defining the \emph{catalytic future thermal cone $\mathcal{T}_{\mathcal{C}+}$}. This set represents the states to which a given initial state $\v{p}$ can be transformed with the aid of a catalyst. Similarly, one can also define the \emph{catalytic past thermal cone} $\mathcal{T}_{\mathcal{C}-}$ by considering $\v{p}$ as the target state of a transformation to be catalysed. We shed light on the answer to this question by bounding the catalytic past and future cones -- in other words, we construct a \emph{catalysable past} and \emph{future}. More precisely, 
for a $d$-dimensional energy-incoherent state, we present an explicit construction of the catalysable past $\mathcal{C}_{-}$ and the catalysable future $\mathcal{C}_{+}$, along with a characterization of its extreme points. These results reveal fundamental limits on the advantages achievable with uncorrelated catalysts~[see Fig.~\ref{Fig:warming-up-example} for a warm-up example, where we present the future and past regions with and without catalyst for a three-and four-level systems]. As these results are general and do not rely on any specific assumptions regarding the catalytic state, our second focus is on the dimensionality and populations of the catalyst. We derive lower bounds on the dimensionality required for catalysts under thermal operations, together with bounds on the catalysts population. Our results allows us to construct a catalytic state given two incomparable states. Surprisingly, our findings show that the no-go result from~\cite{grabowecky2018bounds} formulated for the resource theory of entanglement and applicable for the infinite-temperature setting, stating that catalysis has no effect under local operations and classical communication for a main system of dimension three, ceases to apply in the context of thermodynamics at finite temperatures.
 \begin{figure}[t]
    \centering
    \includegraphics{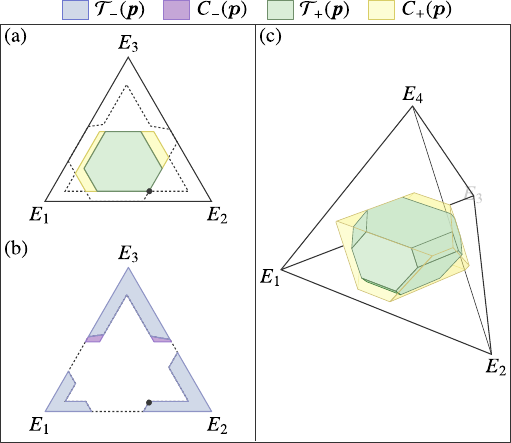}
    \caption{\textbf{Catalysable regions.} 
    For a three-level system with population given by $\v p = (0.42,0.51,0.07)$, represented by a black dot~$\bullet$, and energy spectrum $E_1 = 0, E_2 = 1$, and $E_3 = 2$ for $\beta = 0.2$, we plot its (a) catalysable future $C_+(\v p)$ (yellow) and (b) past  $C_{-}(\v p)$ (purple), together with their respective past and future thermal cones. In (c) we show the catalysable future for the four-level system with $\v p = (0.05,0.28,0.40,0.27)$ and energy spectrum $E_1 = 0, E_2 = 1$, $E_3 = 2$, and $E_4 = 3$.}
    \label{Fig:warming-up-example}
\end{figure}

Using the above framework, we also quantify the catalytic advantage in terms of the volume of the catalysable future. In this context, we offer a detailed discussion on its behaviour as a function of the ambient temperature. As an application, we demonstrate the benefits of employing a catalyst in the generation of entanglement. Recently Ref.~\cite{de2024ent} identified the set of states that cannot become entangled under thermal operations. In this analysis, we elucidate the catalytic advantages by identifying the set of states that, once catalysed, enable entanglement generation under thermal operations.

The paper is organised as follows. In Section~\ref{Sec:framework}, we introduce the resource-theoretic approach to thermodynamics, review known results concerning the conditions for state transformation under thermal operations, and establish the notation of the paper. Section~\ref{Sec:results}, collecting our main results, begins
with
the construction of the catalysable past and future regions, giving a simple method for generating all the extreme points of the catalysable future, thus providing the closest analogue to the Birkhoff theorem for the problem of catalysis in thermal operations. Next, we state our second main result, which establishes a lower bound on the dimensionality of the catalyst and introduces a constraint on its population vector for any pair of incomparable states. In Section~\ref{sec:volume} we discuss a quantification of the catalytic advantage by volume of the catalysable future and past regions. Then, in Section~\ref{Sec:applications-1}, we provide an application of our results to entanglement generation under thermal operations. Then, in Section~\ref{sec:cool} we provide additional application in catalytic cooling. Finally, we conclude with an outlook in Section~\ref{sec:outlook}. The technical derivation of our results can be found in the Appendices.


\section{Framework \label{Sec:framework}}

In this work, our aim is to investigate how the structure of thermal cones is modified when a catalyst is introduced. We approach this problem using the resource-theoretic framework of thermodynamics~\cite{gour2015resource,Lostaglio2019}. In this section, we briefly review well-known results without going into details and set the scene by introducing the relevant notation.

\subsection{Thermal operations}
We consider a composite system comprised of a finite-dimensional system and a thermal environment at temperature $T = \beta^{-1}$, where we set $k_B = 1$. The main system is described by a Hamiltonian $H = \sum^{d}_{i=1} E_i\ketbra{E_i}{E_i}$ and is prepared in a state $\rho$; while, the thermal environment, with Hamiltonian $H_{\ms R}$, is assumed to be in a thermal equilibrium state, $\gamma_{\ms R} = e^{-\beta H_{\ms R}}/Z_{\ms R}$, where $Z_{\ms R} = \tr(e^{-\beta H_{\ms R}})$ is the partition function. 

The evolution of the composite system is modeled by considering the set of \emph{thermal operations} (TOs)~\cite{Janzing2000,horodecki2013fundamental,Lostaglio2019}. These are quantum channels defined upon minimal assumptions, such as that the joint system is closed and evolves via an energy-preserving unitary. This is captured by completely positive trace-preserving (CPTP) maps that act on $\rho$ as
	\begin{equation}
		\label{Eq:thermal_operations}
		\E(\rho)=\Tr_{\ms R}\left[U\left(\rho \otimes \gamma_{\ms R}\right)U^{\dagger}\right],
	\end{equation}
	where $U$ is a joint unitary that commutes with the total Hamiltonian of the system and the bath $[U, H\otimes \iden_{\ms R}+ \iden\otimes H_{\ms R}] = 0$.
 
The fundamental question within the resource-theoretic approach is to identify the set of states that a given state $\rho$ can be transformed to under thermal operations. The reachability of states under TOs can
be studied by introducing the notion of thermal cones~\cite{de2022geometric}. Given $\rho$, the set of states achievable under a thermal operation is referred to as the future thermal cone $\T_+(\rho)$. Conversely, the set of states that can evolve into $\rho$ is called the past thermal cone $\T_-(\rho)$. The set of states that are neither in the past nor the future of $\rho$ is termed the incomparable thermal region $\T_{\emptyset}(\rho)$.  The general characterization of thermal cones is not known beyond the simplest qubit case~\cite{lostaglio2015quantum,PRLHorodeckicoherence}.

However, for states that are block-diagonal in the energy eigenbasis, also known as \emph{energy-incoherent states}, a simple construction based on thermomajorisation relation exists. For this class of states, the existence of a thermal operation between two states $\rho$ and $\sigma$ is equivalent to the existence of Gibbs preserving (GP) matrices acting on their eigenvalues $\v p = \operatorname{eig(\rho)}$ and $\v q = \operatorname{eig}(\sigma)$~\cite{Janzing2000,horodecki2013quantumness}. Furthermore, the existence of a GP matrix connecting two states can be expressed by thermomajorisation relations between $\v p$ and $\v q$~\cite{horodecki2013fundamental}. 

The framework of thermal operations naturally incorporates the phenomenon of catalysis by simply assuming that the system $\rho$ is given by a joint and uncorrelated system $\rho = \rho_{\ms S}\otimes \omega_{\ms C}$ with Hamiltonian $H = H_{\ms S}\otimes \iden_{\ms C}+\iden_{\ms S}\otimes H_{\ms C}$\footnote{It is known in the literature that, without loss of generality, one can
always choose a trivial Hamiltonian for the catalyst, $H_\ms{C}\propto \iden$~\cite{horodecki2013fundamental}. However, since our results will not rely on this and, furthermore, in practical settings one is unlikely to have a trivial Hamiltonian, we will not use this assumption.}. Thus, we consider catalytic thermal operations (CTOs) to be transformations of the following form
\begin{equation}\label{Eq:CTO}
    \mathcal{E}(\rho \otimes \omega_{\ms C}) = \sigma\otimes \omega_{\ms C}.
\end{equation}
When $\rho$ and $\sigma$ are energy-incoherent states, the necessary conditions for the existence of a transformation as in Eq.~\eqref{Eq:CTO} is captured by a set of quantities called $\alpha$-free energies
\begin{equation}
    F_{\alpha}(\rho,\gamma) \geq F_{\alpha}(\sigma,\gamma) \:\:, \:\: \forall \alpha \geq 0,
\end{equation}
where $F_{\alpha}(\rho):=[D_{\alpha}(\rho\|\sigma)-\log Z]/\beta$, with $D_{\alpha}(\rho\|\sigma)$ being the quantum Rényi divergence~\cite{renyi1961measures,6832827}. Nevertheless, a complete characterization of the catalytic future thermal cone has not yet been addressed. 

\subsection{Mathematical preliminaries \label{Sub:Mathematical-preliminaries}}
In our analysis we focus on $d$-dimensional states that are diagonal in the energy eigenbasis and are equivalently described by probability vectors corresponding to populations assigned to respective energy levels,
\begin{equation}
    \rho = \sum^{d}_{i=1} p_i\ketbra{E_i}{E_i} \longmapsto \v p = (p_1, ..., p_d).
\end{equation}
Thus, the states under consideration live in the $d$-dimensional probability simplex,
\begin{equation}\label{Eq:probability-simplex}
    \Delta_d= \qty{\v{p} = (p_1, ..., p_d) \in \mathbbm{R}_{\geq0}^d : \sum_i p_i = 1} .  
\end{equation}
Unless stated otherwise, throughout this manuscript, we will work under the assumption that the energy levels are non-degenerate, i.e., $i\neq j \Rightarrow E_i \neq E_j$.

Furthermore, we define the thermal distribution $\v{\gamma}$ and slope vector $\v{s}(\v{p})$ associated with a probability vector $\v{p}$ as
\begin{align}
    \v \gamma := \frac{1}{Z}(e^{-\beta E_1}, ..., e^{-\beta E_d}) \:\:\: \text{and} \:\:\:
    \v s (\v p) := \underbrace{\qty(\frac{p_1}{\gamma_1}, ..., \frac{p_d}{\gamma_d})^\downarrow}_{\qty(s_1(\v{p}),\hdots,s_d(\v{p}))},
\end{align}
where the down arrow denotes the vector arranged in non-increasing order. Next, we define the $\beta$-order of $\v p$: 

\begin{defn}[$\beta$-ordering] Given $\v p$ and $\v \gamma$, the $\beta$-ordering of $\v p$ is defined by a permutation $\v \pi_{\v p}$ that satisfies
\begin{equation} \label{eq_beta-ordering}
    s_{\v \pi_{\v p}(i)}(\v{p}) = \frac{p_i}{\gamma_i}.
\end{equation}
Thus, the $\beta$-ordered version of $\v p$ is given by
	\begin{equation}
	\v{p}^{\, \beta}=\left(p_{\v \pi_{\v{p}}^{-1}(1)},\dots ,p_{\v \pi_{\v{p}}^{-1}(d)}\right).
	\end{equation}
Note that each permutation belonging to the symmetric group, $\v \pi \in \mathcal S_d$, defines a different $\beta$-ordering on the energy levels of the Hamiltonian $H$    
\end{defn}

The above definition allows us to define a thermomajorisation curve:

\begin{defn}[Thermomajorisation curve]
Given $\v p$, a thermomajorisation curve $f^{\beta}_{\v p}: \{0,1\} \to \{0,1\}$ is a piecewise linear function composed of segments connecting the point $(0,0)$ and the points defined by consecutive subsums of the $\beta$-ordered form of the probability $\v{p}^\beta$ and the Gibbs state $\v{\gamma}^\beta$,
		\begin{equation}
		\left(\sum_{i=1}^k\gamma^{\, \beta}_i,~\sum_{i=1}^k p^{\, \beta}_i\right):=\left(\sum_{i=1}^k\gamma_{\v \pi^{-1}_{\v{p}}(i)},~\sum_{i=1}^k p_{\v \pi^{-1}_{\v{p}}(i)}\right),
		\end{equation}
		for $k\in\{1,\dots,d\}$.   
\end{defn}

Finally, we define the notion of thermomajorisation in terms of the respective thermomajorisation curves:
\begin{defn}[Thermomajorisation]
    Given two $d$-dimensional probability distributions $\v p$ and $\v q$, and a fixed inverse temperature $\beta$, we say that $\v p$ \emph{thermomajorises} $\v q$ and denote it as $\v p \succ_{\beta} \v q$, if the thermomajorisation curve $f^{\, \beta}_{\v{p}}$ is above $f^{\, \beta}_{\v{q}}$ everywhere, i.e.,
		\begin{equation}
			\v p \succ_{\beta} \v q \iff \forall x\in[0,1]:~ f^{\, \beta}_{\v{p}}(x) \geq f^{\, \beta}_{\v{q}}(x) \, .
		\end{equation}
\end{defn}
Let us make a few comments about the above definition. First, at the infinite-temperature limit (or when $\beta = 0$), the thermal distribution vector $\v{\gamma}$ becomes the uniform state
\begin{equation}
    \v{\eta} =\frac{1}{d}(1, ..., 1).
\end{equation}
Consequently, the concept of $\beta$-ordering simplifies to arranging the probability vectors in non-increasing order. Thus, thermomajorisation reduces to the well-known concept of majorisation~\cite{marshall1979inequalities}. Second, thermomajorisation does not introduce a total order (or even fails to be a partial order in some cases, as proven in Ref.~\cite{vomEnde2024}), a given pair of states $\v p$ and $\v q$ is said to be incomparable when neither $\v p$ thermomajorises $\v q$, nor $\v q$ thermomajorises $\v p$. We denote incomparable states as $\v p \perp^{\beta} \v q$. 

The concept of incomparability can be studied by introducing the family of tangent vectors, a concept stemming from the idea of tangent function. To gain some intuition about their properties and interpretation, we first present its construction for the simple case of $\beta=0$ and then generalise to $\beta > 0$.

For any vector $\v p$, a tangent vector $\v{t}(\v{p}) \equiv \v{t}$ is defined by imposing that all its components, except the first and the last are equal, $t_i = t_j$ for all $1 < i < j < d$. Furthermore, we require that the majorisation function $f_{\v t}$ agrees with $f_{\v p}$ at no more than two consecutive elbows, i.e., $f_{\v t}(i/d) = f_{\v p}(i/d)$ and $f_{\v t}(x) > f_{\v p}(x)$ for $x\in\qty(0,i/d)\cup(i/d,1)$ or $f_{\v t}(x) = f_{\v p}(x)$ for $x\in[i/d,(i+1)/d]$ and $f_{\v t}(x) > f_{\v p}(x)$ elsewhere. 
The two imposed conditions follow the intuition of tangency and, by construction, satisfy the majorisation relation $\v t \succ \v p$.

\begin{figure}[t]
    \centering

    \includegraphics{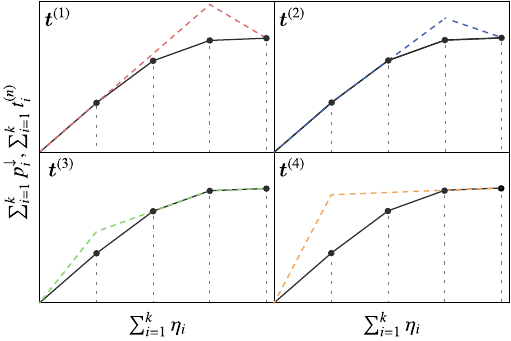}
    \caption{\label{fig-majorisationd4-examples2} \textbf{Majorisation curves of the tangent vectors}. For the initial state $\v p =(0.43, 0.37, 0.18, 0.02)$, we plot the majorisation curve of $\v p$ (black curve) along with the tangent vectors $\v{t}^{(1)}$, $\v{t}^{(2)}$, $\v{t}^{(3)}$ and $\v{t}^{(4)}$.} 
\end{figure}

Assuming equality between $f_{\v t}(x)$ and $f_{\v p}(x)$ on a single linear segment, $x\in[i/d,(i+1)/d]$, 
limits the tangent vectors to a set of $d$ unique probability vectors $\v{t}^{(n)}$, defined as follows:
\begin{equation}
\label{Eq:tangent-vectors}
    \v{t}^{(n)}=\left(t^{(n)}_1, p_n^{\downarrow}, ..., p_n^{\downarrow}, t^{(n)}_d\right) , 
\end{equation}
for $1 \leq n \leq d$, where the first and last components are given by
\begin{align}
\label{eq:extremalpointspastan}
    t_1^{(n)} & = \sum_{i=1}^{n-1} p_i^{\downarrow} - (n-2) p_n^{\downarrow}, &
    t_d^{(n)} & = 1 - t^{(n)}_1 - (d-2)p_n^{\downarrow}.
\end{align}
An exemplary set of tangent vectors is depicted in Fig.~\ref{fig-majorisationd4-examples2}.

Note that the tangent vectors $\v{t}^{(n)}$, which agrees with the majorisation curve of $\v p$ at two successive points, can be used to construct all possible tangent vectors $\v t$ that meet the condition of agreement at least at a single point, $f_{\v t}(i/d) = f_{\v p}(i/d)$.
The fact that $\v{t}^{(n)}$ may be a quasi-probability distribution does not pose a problem as this vector can always be projected back onto the probability simplex.

The intuition behind tangent vectors can now finally be generalised to the case of finite temperatures. This is achieved as follows:

\begin{defn}[Thermal tangent vectors]
Given an energy-incoherent state $\v p$ and a thermal state $\v \gamma$, consider distributions $\v t^{(n,\v \pi)}(\v p) \equiv t^{(n,\v \pi)}$ in their $\beta$-ordered form, constructed for each permutation $\v \pi \in \S_{d}$ and $n \in \{1, ..., d\}$,
\begin{equation}
\label{eq_thermaltangentvectors}
\left[\v{t}^{(n,\v \pi)}\right]^{\, \beta} = \left(
    t^{(n,\v \pi)}_{\v \pi(1)}, s_n
    \gamma_{\v \pi (2)}, ..., 
    s_n \gamma_{\v \pi (d-1)},
    t^{(n,\v \pi)}_{\v \pi(d)}\right) ,  
\end{equation}
with
\begin{subequations}
\begin{align}\label{Eq:tangent-vectors-1}
    t^{(n,\v \pi)}_{\v \pi(1)} & = \sum_{i=1}^n p^{\, \beta}_i-s_n\left(\sum_{i=1}^n \gamma^{\,\beta}_i-\gamma_{\v \pi(1)} \right), \\ 
    t^{(n,\v \pi)}_{\v \pi(d)} & = 1- t^{(n,\v \pi)}_{\v \pi(1)}-s_n\sum_{i=2}^{d-1} \gamma_{\v \pi(i)},
    \label{Eq:tangent-vectors-2}
\end{align}
\end{subequations} 
where $s_n := \v s_n(\v p).$
\end{defn}
Given that the previous definition naturally includes the case of $\beta = 0$, we will, from now on, simply refer to thermal tangent vectors as tangent vectors.

Finally, notice that the same intuition as the case of $\beta =0$ holds for $\beta \geq 0$: a tangent vector $\v{t}^{(n,\pi)}$ is a probability distribution with $\beta$-order given by of almost-constant slope, such that $s[\v{t}^{(n,\pi)}]_i = s_n(\v{p})$ for $1<i<d$, with the first and last slope adjusted such that $\v{t}^{(n,\pi)} \succ_{\beta} \v{p}$ and there exists a point $x$ at which $f_{\v{t}^{(n,\pi)}}(x) = f_{\v{p}}(x)$. For a very large dimension $d$ such a state indeed looks like a tangent at point $x$.


\section{Main results \label{Sec:results}}

In this section, we present our main results, which are divided into two parts. First, we provide a condition that allows one to characterise the set of states that can be catalysable for a given initial state. These results are general and do not rely on any specific assumptions regarding the catalyst's structure, including its dimension, Hamiltonian, or state. Subsequently, the second part of our results complements the first by establishing bounds on dimensionality and the region of the state space within which a catalyst can be found.

\subsection{Catalysable past and future regions}

We start by defining the catalytic thermal cones:

\begin{defn}[Catalytic thermal cones]
        For a $d$-dimensional energy-incoherent state $\v p$, we define the catalytic future/past thermal cone $\mathcal{T}_{\mathcal{C}{+/-}}$ as subsets of the incomparable region $\mathcal{T}_\emptyset(\v{p})$ such that
        \begin{align}
          \mathcal{T}_{\mathcal{C}+}(\v{p})& \equiv \qty{\v{q}\in\Delta_d: \v{q}\perp_\beta \v{p},\exists_{k\geq2,\v{c}\in\Delta_k}: \v{p}\otimes\v{c}\succ_{\beta}\v{q}\otimes\v{c}}, \\
            \mathcal{T}_{\mathcal{C}-}(\v{p})& \equiv \qty{\v{q}\in\Delta_d: \v{q}\perp_\beta \v{p},\exists_{k\geq2,\v{c}\in\Delta_k}: \v{q}\otimes\v{c}\succ_{\beta}\v{p}\otimes\v{c}},
        \end{align}
\end{defn}
\noindent In what follows, if $\v{q}\in\mathcal{T}_\emptyset(\v{p})$, we will say that transformation $\v{p}\rightarrow\v{q}$ is catalysed, or that $\v{q}$ is catalysed, omitting explicit relation to $\v{p}$ whenever it is clear from the context.

The following result establishes a necessary condition for any state $\v{q}$ to belong to the catalytic future thermal cone of a state~$\v{p}$:
\begin{lem}[Catalytic condition] \label{lem:catalysable_future_cond}
    Consider a pair of incomparable states $\v{q}\perp_\beta\v{p}$. The state $\v{q}$ is catalysed (with respect to $\v{p}$), ie. $\v{q}\in\mathcal{T}_{\mathcal{C}+}(\v{p})$ [or, equivalently, $\v{p}\in\mathcal{T}_{\mathcal{C}-}(\v{q})$], only if it satisfies:
    \begin{equation}\label{eq:nec_cond_beta}
         s_1\qty(\v{p}) > s_1\qty(\v{q}) \quad\text{and}\quad s_d\qty(\v{p}) < s_d\qty(\v{q}) .
    \end{equation}
\end{lem}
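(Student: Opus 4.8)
The plan is to reduce the catalytic condition to a statement about the two \emph{extreme} slopes $s_1$ and $s_d$, which are precisely the initial and final slopes of the thermomajorisation curve. The key structural input is that slopes multiply under tensor products. Since the joint Hamiltonian is additive, the thermal distribution factorises, $\v\gamma_{\v p\otimes\v c}=\v\gamma\otimes\v\gamma_{\v c}$, and the ratios obey $(p_ic_j)/\!\left(\gamma_i(\gamma_{\v c})_j\right)=(p_i/\gamma_i)\,(c_j/(\gamma_{\v c})_j)$. As all factors are positive, the largest and smallest ratios of the product factorise as well, giving $s_1(\v p\otimes\v c)=s_1(\v p)\,s_1(\v c)$ and $s_{dk}(\v p\otimes\v c)=s_d(\v p)\,s_k(\v c)$ for $\v c\in\Delta_k$.

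Next I would record a purely geometric monotonicity fact for thermomajorisation: if $\v a\succ_\beta\v b$ then $s_1(\v a)\ge s_1(\v b)$ and $s_{\min}(\v a)\le s_{\min}(\v b)$. Both curves $f^\beta_{\v a}$ and $f^\beta_{\v b}$ are concave, pass through $(0,0)$ and $(1,1)$, and satisfy $f^\beta_{\v a}\ge f^\beta_{\v b}$ everywhere. Comparing the leading linear pieces as $x\to0^+$, where $f^\beta_{\v a}(x)=s_1(\v a)\,x$ and $f^\beta_{\v b}(x)=s_1(\v b)\,x$, forces $s_1(\v a)\ge s_1(\v b)$; comparing the trailing pieces as $x\to1^-$, where $f(x)=1-s_{\min}(1-x)$, forces $s_{\min}(\v a)\le s_{\min}(\v b)$. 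Applying this to $\v a=\v p\otimes\v c$ and $\v b=\v q\otimes\v c$ and cancelling the common catalyst factors $s_1(\v c)>0$ and $s_k(\v c)>0$ yields the \emph{non-strict} inequalities $s_1(\v p)\ge s_1(\v q)$ and $s_d(\v p)\le s_d(\v q)$, valid for every admissible catalyst.

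The main obstacle is upgrading these to the strict inequalities in Eq.~\eqref{eq:nec_cond_beta}, and this is exactly where incomparability $\v q\perp_\beta\v p$ must be used. I would argue by contradiction: suppose $s_1(\v p)=s_1(\v q)=:s_1$. Then the two product curves leave the origin with the same slope $S_1=s_1\,s_1(\v c)$, so the catalyst provides no ``headroom'' at the steepest coordinate. Since $f^\beta_{\v p\otimes\v c}\ge f^\beta_{\v q\otimes\v c}$ and both are linear of slope $S_1$ until their first elbow, the curve of $\v q\otimes\v c$ cannot bend strictly before that of $\v p\otimes\v c$ (otherwise $f^\beta_{\v q\otimes\v c}>f^\beta_{\v p\otimes\v c}$ just past that elbow). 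Translating the elbow position back through the factorisation shows that the Gibbs weight of $\v p$'s steepest level dominates that of $\v q$, which places $\v p$'s first elbow further along the common slope-$s_1$ line and hence gives $f^\beta_{\v p}\ge f^\beta_{\v q}$ on an initial interval.

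The hard part, which I expect to carry the technical weight of the proof, is propagating this local domination at the extreme slope into a global contradiction with $\v q\perp_\beta\v p$. The natural route is to peel off the saturated top segment—on which $\v p$ already thermomajorises $\v q$—and induct on the remaining lower-dimensional comparison, showing that tightness of $s_1$ leaves no slack for the catalyst anywhere the two curves could cross, thereby forcing $\v p\succ_\beta\v q$ outright and contradicting incomparability. The inequality on $s_d$ follows by the mirror-image argument at $x=1$ (equivalently, by applying the $s_1$ analysis in the time-reversed, past-cone picture). I would also check the degenerate case in which the extreme slope is attained by several coordinates, where the first segment has positive length and the elbow-ordering argument must be phrased in terms of the total Gibbs weight carried at slope $S_1$.
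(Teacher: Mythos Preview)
Your core argument—multiplicativity of the extreme slopes under the tensor product, then cancellation of the positive catalyst factors $s_1(\v c)$ and $s_k(\v c)$—is exactly the paper's approach. The paper's proof is only a few lines: it simply asserts that $\v p\otimes\v c\succ_\beta\v q\otimes\v c$ forces the \emph{strict} inequalities $s_1(\v p\otimes\v c)>s_1(\v q\otimes\v c)$ and $s_{dk}(\v p\otimes\v c)<s_{dk}(\v q\otimes\v c)$, identifies the extremal product slopes as $s_1(\v p)s_1(\v c)$ and $s_d(\v p)s_k(\v c)$, and divides out. It does not separately argue strictness from the incomparability hypothesis.

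Your caution about strictness is reasonable—thermomajorisation alone gives only the weak inequalities, so the paper is implicitly leaning on $\v p\perp_\beta\v q$ without spelling out how. But the route you sketch to close this gap has problems. The elbow-ordering claim is stated backwards: if both product curves leave the origin with common slope $S_1$ and $f^\beta_{\v p\otimes\v c}\ge f^\beta_{\v q\otimes\v c}$, then it is $\v p\otimes\v c$ that cannot bend first (whichever curve bends first drops strictly below the other), not $\v q\otimes\v c$ as you write. More importantly, the ``peel off the top segment and induct'' strategy is left entirely schematic; you flag it as the technical heart but give no mechanism for why equality at the extreme slope prevents the catalyst from repairing a crossing elsewhere once the saturated piece is removed. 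In short: you match the paper on the substantive step and are more scrupulous in noticing the strictness issue, but the completion you outline is flawed in its local reasoning and is not what the paper does.
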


\begin{figure*}
    \centering
    \includegraphics{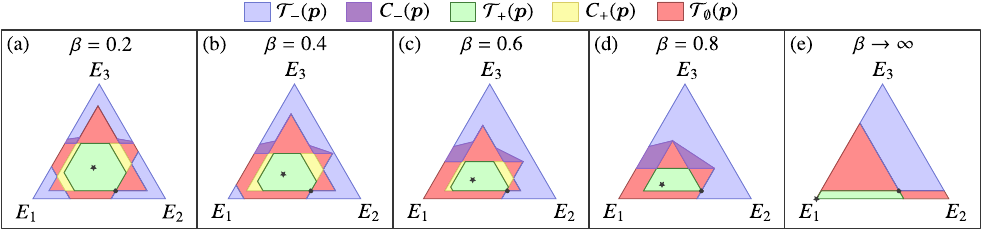}
    \caption{\textbf{Catalysable regions for $d=3$}. For a three-level system with a state given by $\v{p} = (0.34, 0.59, 0.07)$, represented by a black dot~$\bullet$, and a thermal state $\v{\gamma}$, represented by a black star $\bigstar$, with an energy spectrum $E_1 = 0, E_2 = 1$, and $E_3 = 2$, we plot its catalysable thermal cone for $\beta$ ranging from (a) $\beta = 0.2$ to $\beta \to \infty$.}
    \label{Fig:catalysable-thermal-cones-example}
\end{figure*}

\begin{proof}
    Let us take a potential catalyst state $\v{c}\in\Delta_k$. The slope vectors will be given as
    \begin{equation}
        \begin{aligned}
            \v{s}\qty(\v{p}\otimes\v{c}) = \biggl(& 
            s_1(\v{p}) s_1(\v{c}),\hdots,
            s_1(\v{p}) s_k(\v{c}),\hdots, \\
            & s_d(\v{p}) s_1(\v{c}),\hdots,
            s_d(\v{p}) s_k(\v{c})
            \biggr)^{\downarrow},
        \end{aligned}
    \end{equation}
    and same for $\v{q}\otimes \v{c}$. In order for one state to thermomajorise the other, $\v{p}\otimes\v{c}\succ_\beta\v{q}\otimes\v{c}$, we need
    \begin{align}
        s_1\qty(\v{p}\otimes\v{c}) > s_1\qty(\v{q}\otimes\v{c}), &&
        s_{dk}\qty(\v{p}\otimes\v{c}) < s_{dk}\qty(\v{q}\otimes\v{c}).
    \end{align}
    From the definition of $\beta$-ordering we know instantly that $s_i(\v{p}) \geq s_j(\v{p})$ whenever $i \leq j$, similarly for $\v{q}$ and $\v{c}$. Consequently, we can state that
    \begin{align}
        s_1\qty(\v{p}\otimes\v{c}) = s_1(\v{p}) s_1(\v{c}), &&
        s_{dk}\qty(\v{p}\otimes\v{c}) = s_d(\v{p}) s_k(\v{c}),
    \end{align}
    and, again, analogously for $\v{q}$. Thus, we have
    \begin{equation}
        \begin{aligned}
           s_1\qty(\v{p}\otimes\v{c}) > s_1\qty(\v{q}\otimes\v{c}) \Rightarrow
            s_1\qty(\v{p}) > s_1\qty(\v{q}), \\
           s_{dk}\qty(\v{p}\otimes\v{c}) < s_{dk}\qty(\v{q}\otimes\v{c}) \Rightarrow
            s_d\qty(\v{p}) < s_d\qty(\v{q}),
        \end{aligned}
    \end{equation}
    which completes the proof.
\end{proof}
Combining Lemma~\ref{lem:catalysable_future_cond} with the tangent vectors introduced in Sec.~\ref{Sub:Mathematical-preliminaries} enables us to provide a simple upper bound for the catalytic thermal cones and characterise the catalysable regions as:

\begin{thm}[Catalysable regions\label{Thm:catalysable-set}] Consider an energy-incoherent state $\v p$ and its associated tangent vectors $\v{t}^{(n,\pi)}(\v{p})$. We define two auxiliary sets
    \begin{equation}
        T_i(\v p) = \operatorname{conv}\qty[\qty{\v{t}^{(i,\pi)}(\v p)}_{\pi\in\mathcal{S}_d}],
    \end{equation}
    where $\mathcal{S}_d$ is the set of all possible $d$-element permutations. The catalysable future and past regions are bounded by
    \begin{align}
        \mathcal{C}_+(\v p) &= \qty[T_1(\v p)\cap T_d(\v p)]\setminus\mathcal{T}_+(\v p) , \\
       \mathcal{C}_-(\v p) &= \mathcal{T}_\emptyset(\v p)\setminus\qty[T_1(\v p)\cup T_d(\v p)].
    \end{align}
That is, $\mathcal{C}_{-/+}(\v p) \subset \mathcal{T}_{\mathcal{C}-/+}$.
\end{thm}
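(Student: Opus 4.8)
The statement is the sufficiency (inner) direction: every state in the explicitly constructed region $\mathcal{C}_\pm(\v p)$ is genuinely catalysable. The plan is therefore, for an arbitrary $\v q\in\mathcal{C}_+(\v p)$, to exhibit a catalyst $\v c\in\Delta_k$ realising $\v p\otimes\v c\succ_\beta\v q\otimes\v c$ (and dually for $\mathcal{C}_-$). First I would unpack membership in $\mathcal{C}_+(\v p)=[T_1(\v p)\cap T_d(\v p)]\setminus\mathcal{T}_+(\v p)$ at the level of slopes. Each tangent vector $\v t^{(1,\pi)}$ has all of its slopes bounded above by $s_1(\v p)$, and each $\v t^{(d,\pi)}$ has all of its slopes bounded below by $s_d(\v p)$; writing $\v q=\sum_\pi\lambda_\pi\v t^{(1,\pi)}=\sum_\pi\mu_\pi\v t^{(d,\pi)}$ and using $q_i/\gamma_i=\sum_\pi\lambda_\pi\,t^{(1,\pi)}_i/\gamma_i$ then forces $s_1(\v q)\le s_1(\v p)$ and $s_d(\v q)\ge s_d(\v p)$, i.e.\ exactly the necessary conditions of Lemma~\ref{lem:catalysable_future_cond}. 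Removing $\mathcal{T}_+(\v p)$ places $\v q$ strictly in the incomparable region, so on the interior of $\mathcal{C}_+$ these separations are strict (the measure-zero boundary where equality holds being handled by continuity).

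For the construction itself I would take $\v c$ whose slope spectrum finely and densely samples the interval $(s_d(\v p),s_1(\v p))$ --- the canonical choice being a tensor power $\v c=\v r^{\otimes N}$ of a reference whose slopes span this range (e.g.\ $\v r=\v p$), with $N$ large, so that the slope multisets of $\v p\otimes\v c$ and $\v q\otimes\v c$ become densely spaced, concentrated combs differing only through their $\v p$- versus $\v q$-factors. A useful structural observation is that, for each fixed $\v c$, the set $\{\v q:\v p\otimes\v c\succ_\beta\v q\otimes\v c\}$ is convex, being the preimage of the convex thermomajorisation future of $\v p\otimes\v c$ under the linear map $\v q\mapsto\v q\otimes\v c$. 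Hence, should a single catalyst catalyse the extreme points of $\mathcal{C}_+$ --- generated, as the theorem advertises, from the tangent vectors, giving the Birkhoff-type description --- convexity would upgrade the claim to the whole region.

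The remaining task is to verify $\v p\otimes\v c\succ_\beta\v q\otimes\v c$. Near the endpoints this is immediate from the strict separations above: the leading product slope obeys $s_1(\v p)\,s_1(\v c)>s_1(\v q)\,s_1(\v c)$ and the trailing one $s_d(\v p)\,s_k(\v c)<s_d(\v q)\,s_k(\v c)$, so the curve of $\v p\otimes\v c$ strictly dominates that of $\v q\otimes\v c$ in neighbourhoods of $x=0$ and $x=1$. The hard part --- and what I expect to be the main obstacle --- is extending domination to the interior of $[0,1]$. Here one must use the full polytope content of $\v q\in T_1\cap T_d$, not merely the two extreme slopes: the plan would be to bound the intermediate cumulative subsums of $\v q\otimes\v c$ against those of $\v p\otimes\v c$ through the convex representations $\v q=\sum_\pi\lambda_\pi\v t^{(1,\pi)}=\sum_\pi\mu_\pi\v t^{(d,\pi)}$, exploiting the concentration of the tensor-power catalyst. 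Equivalently, interior domination can be reduced to the family of $\alpha$-free-energy inequalities $F_\alpha(\v p,\gamma)\ge F_\alpha(\v q,\gamma)$, which factor through the catalyst because $D_\alpha$ is additive, and one then argues that membership in $T_1\cap T_d$ together with exclusion from $\mathcal{T}_+$ forces them; controlling these interior monotones, rather than the endpoint conditions of Lemma~\ref{lem:catalysable_future_cond} alone, is where the geometry of the tangent-vector polytope must be used in full.

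Finally, the past region follows by duality. For $\mathcal{C}_-(\v p)=\mathcal{T}_\emptyset(\v p)\setminus[T_1(\v p)\cup T_d(\v p)]$, a point $\v q$ lies outside both hulls, which by the same slope bookkeeping yields the reversed strict separations $s_1(\v q)>s_1(\v p)$ and $s_d(\v q)<s_d(\v p)$; running the identical construction with the roles of $\v p$ and $\v q$ interchanged produces a catalyst with $\v q\otimes\v c\succ_\beta\v p\otimes\v c$, i.e.\ $\v q\in\mathcal{T}_{\mathcal{C}-}(\v p)$, completing the argument.
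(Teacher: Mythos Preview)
You have read the containment in the wrong direction. Despite the way the final sentence of the theorem is typeset, the result is an \emph{outer} bound, $\mathcal{T}_{\mathcal{C}\pm}(\v p)\subseteq\mathcal{C}_\pm(\v p)$, not the inner inclusion you are attempting. This is unambiguous from the surrounding text: the abstract announces that the authors ``upper bound the set of states to which a given initial state can thermodynamically evolve'', and immediately after the theorem they write that ``given a state $\v q$ belonging to such a region, whether there exists a catalyst $\v c$ and a thermal operation $\mathcal{E}$ such that $\mathcal{E}(\v p\otimes\v c)=\v q$ still remains an open problem''. In other words, the very statement you set out to prove is declared open.

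The paper's actual argument is short and purely geometric. Lemma~\ref{lem:catalysable_future_cond} shows that any $\v q\in\mathcal{T}_{\mathcal{C}+}(\v p)$ must obey $s_1(\v q)<s_1(\v p)$ and $s_d(\v q)>s_d(\v p)$. The proof of the theorem then identifies $T_1(\v p)\cap T_d(\v p)$ with precisely the set of probability vectors obeying those two slope inequalities: for each $\beta$-order $\v\pi$, the tangent vector $\v t^{(1,\v\pi)}(\v p)$ thermomajorises every $\v q$ of that order with $s_1(\v q)\le s_1(\v p)$, and $\v t^{(d,\v\pi)}(\v p)$ thermomajorises every $\v q$ with $s_d(\v q)\ge s_d(\v p)$; taking the union over $\v\pi$ and passing to convex hulls via $\bigcup_{\v\pi}\mathcal{T}_+[\v t^{(i,\v\pi)}]=\operatorname{conv}\{\v t^{(i,\v\pi)}\}_{\v\pi}$ yields $T_i$. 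Hence $\mathcal{T}_{\mathcal{C}+}(\v p)\subseteq [T_1\cap T_d]\setminus\mathcal{T}_+(\v p)=\mathcal{C}_+(\v p)$, and the past case is the complement. No catalyst is ever constructed.

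Your proposal, by contrast, tries to exhibit a catalyst for every $\v q\in\mathcal{C}_+(\v p)$ via a tensor-power ansatz $\v c=\v r^{\otimes N}$ and a reduction to the $\alpha$-free energies. Even setting aside the direction issue, this route cannot work as sketched: the $F_\alpha$ inequalities are \emph{necessary} but not sufficient for strict catalysis, so showing that membership in $T_1\cap T_d$ forces them would still not deliver a catalyst; and you yourself flag the interior-domination step as ``the main obstacle'' without resolving it. The correct proof requires none of this machinery---only the observation that the tangent-vector hulls $T_1$ and $T_d$ are the geometric encodings of the two necessary slope conditions from Lemma~\ref{lem:catalysable_future_cond}.
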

\begin{proof} 
    The above follows from the fact that $\v{t}^{(i,\pi)}(\v p)$ by construction thermomajorises all $\v{q}$ with $\beta$-order $\v \pi$ such that $s_i(\v{q}) = s_i(\v{p})$. In particular, $\v{t}^{(1,\pi)}(\v p)$ and $\v{t}^{(d,\pi)}(\v p)$ thermomajorise all $\v{q}$ that satisfy first and second part of Lemma~\ref{lem:catalysable_future_cond}, respectively. As a consequence, $\v{q}$ thermomajorised by both $\v{t}^{(d,\pi)}(\v p)$ and $\v{t}^{(d,\pi)}(\v p)$ satisfies Lemma~\ref{lem:catalysable_future_cond}. The proof is completed by noticing, that 
    \begin{equation}\label{Eq:t-vectors}
        \bigcup_{\pi\in\mathcal{S}_d} \mathcal{T}_+[\v{t}^{(i,\pi)}(\v p)] = \operatorname{conv}\qty[\qty{\v{t}^{(i,\pi)}(\v p)}_{\pi\in\mathcal{S}_d}].
    \end{equation}
\end{proof}
Theorem~\ref{Thm:catalysable-set} establishes a fundamental limit for when an incoherent strict catalyst is allowed to be used in a thermodynamic process. In other words, there is no thermodynamic process aided by a strict catalyst that can bring the initial state out of the catalysable future $\mathcal{C}_{+}$. Hence, the advantages gained from using a catalyst are captured by such a region. However, it is important to emphasize the limitations of such a result, namely given a state $\v{q}$ belonging to such a region, whether there exists a catalyst $\v{c}$ and a thermal operation $\mathcal{E}$, such that $\mathcal{E}(\v{p} \otimes \v{c}) = \v{q}$ still remains an open problem.

In what follows, when we appeal to the sets $T_i(\v{p})$, we will drop the argument whenever it is clear from the context.

Finally, the extremal points of the catalysable set can be derived directly from its thermomajorisation curve:
\begin{cor}[Extreme points of $\mathcal{C}_+$\label{Cor:extreme-points-catalysable}] 
    Consider an energy-incoherent state $\v{p}$ and an extreme point $\v{v}^{\pi}$ of its catalysable future region $\mathcal{C}_+(\v{p})$, corresponding to the $\beta$-order $\v{\pi}$. The elbows of its thermomajorisation curve $f_{\v{v}^{\v{\pi}}}$ are determined by
    \begin{equation}
        f_{\v{v}^\pi}\qty(\Gamma_i) = \min\qty[
        f_{\v{t}^{(1,\v{\pi})}(\v p)}\qty(\Gamma_i),f_{\v{t}^{(d,\v{\pi})}(\v p)}\qty(\Gamma_i)],
    \end{equation}
    where we have $\Gamma_i = \sum_{j=1}^i \gamma_{\pi(j)}$.
\end{cor}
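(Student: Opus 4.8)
The plan is to reduce the whole problem to a single $\beta$-order chamber and there identify the thermomajorisation-maximal admissible state. Denote by $\mathcal{D}_\pi$ the set of energy-incoherent states whose $\beta$-order is $\pi$; this is a convex region (it is cut out by the linear inequalities $q_{\pi(1)}/\gamma_{\pi(1)}\ge\dots\ge q_{\pi(d)}/\gamma_{\pi(d)}$), and inside it every thermomajorisation curve is concave with elbows precisely at the abscissae $\Gamma_i=\sum_{j=1}^i\gamma_{\pi(j)}$. Since the extreme point $\v{v}^\pi$ of $\mathcal{C}_+(\v p)$ tagged by the order $\pi$ lies in the closure of $\mathcal{D}_\pi$, I would fix that chamber and argue entirely with curves supported on the grid $\{\Gamma_i\}$. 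First I would translate the set description $\mathcal{C}_+(\v p)=[T_1(\v p)\cap T_d(\v p)]\setminus\mathcal{T}_+(\v p)$ of Theorem~\ref{Thm:catalysable-set}, together with the identity $T_i(\v p)=\bigcup_{\pi}\mathcal{T}_+[\v{t}^{(i,\pi)}(\v p)]$ from Eq.~\eqref{Eq:t-vectors}, into a pointwise statement about these curves.

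The key lemma I must establish is that, restricted to $\mathcal{D}_\pi$, membership in $T_1$ and $T_d$ is governed by the single tangent vector sharing that same $\beta$-order, i.e.\ $\mathcal{D}_\pi\cap T_1(\v p)=\mathcal{D}_\pi\cap\mathcal{T}_+[\v{t}^{(1,\pi)}(\v p)]$ and likewise for $T_d$. To prove the nontrivial inclusion I would exploit the defining slope structure of the tangent vectors. By the construction in Eqs.~\eqref{Eq:tangent-vectors-1}--\eqref{Eq:tangent-vectors-2}, $\v{t}^{(1,\pi)}$ carries slope $s_1(\v p)$ on its first $d-1$ segments, so its curve coincides with the straight line $y=s_1(\v p)\,x$ up to $\Gamma_{d-1}$; any $\v q\in T_1$ satisfies $s_1(\v q)\le s_1(\v p)$ (its curve sits below some $\v{t}^{(1,\pi')}$ whose maximal slope is $s_1(\v p)$), hence by concavity $f_{\v q}$ never rises above that line, and matching both curves at the terminal point $(1,1)$ forces $f_{\v q}\le f_{\v{t}^{(1,\pi)}}$ on the whole interval. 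The mirror argument near $(1,1)$, using that $\v{t}^{(d,\pi)}$ has terminal slope $s_d(\v p)$ and that $\v q\in T_d$ forces $s_d(\v q)\ge s_d(\v p)$, disposes of $T_d$. These two slope bounds are exactly the content of Lemma~\ref{lem:catalysable_future_cond}.

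With the chamber lemma in hand, inside $\mathcal{D}_\pi$ one has $T_1\cap T_d=\{\v q:f_{\v q}\le\min(f_{\v{t}^{(1,\pi)}},f_{\v{t}^{(d,\pi)}})\}$, so at each elbow the admissible height is capped by $m_i:=\min[f_{\v{t}^{(1,\pi)}}(\Gamma_i),f_{\v{t}^{(d,\pi)}}(\Gamma_i)]$. I would then take $\v{v}^\pi$ to be the state whose curve is the piecewise-linear interpolant of the points $(\Gamma_i,m_i)$ and check two things. It is a legitimate thermomajorisation curve: the pointwise minimum of the two concave tangent curves is concave, and the consecutive chord slopes of any concave function are non-increasing, so the interpolant is itself concave with elbows exactly on the grid. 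And it is maximal: any $\v q\in\mathcal{D}_\pi\cap T_1\cap T_d$ is concave with elbows on the same grid and obeys $f_{\v q}(\Gamma_i)\le m_i$, whence $f_{\v q}\le f_{\v{v}^\pi}$ everywhere by linear interpolation between dominated elbow values. Being the thermomajorisation-maximum of the chamber-restricted region, $\v{v}^\pi$ lies on the outer boundary $T_1\cap T_d$ and is therefore an extreme point of $\mathcal{C}_+(\v p)$; a one-line check that $m_i\ge f_{\v p}(\Gamma_i)$ (both tangent vectors thermomajorise $\v p$) confirms $\v{v}^\pi\notin\mathcal{T}_+(\v p)$, so it survives the set difference. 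This yields precisely $f_{\v{v}^\pi}(\Gamma_i)=\min[f_{\v{t}^{(1,\pi)}}(\Gamma_i),f_{\v{t}^{(d,\pi)}}(\Gamma_i)]$.

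The step I expect to be the main obstacle is the chamber lemma of the second paragraph, namely ruling out that a tangent vector $\v{t}^{(1,\pi')}$ of a \emph{different} $\beta$-order could push the boundary of $T_1$ above $f_{\v{t}^{(1,\pi)}}$ inside $\mathcal{D}_\pi$; the slope/concavity argument above is what makes this work and is where the special structure of the tangent vectors is genuinely used. Everything downstream — the $\min$ formula, concavity of the interpolant, and extremality — is then routine. The only remaining bookkeeping is to note that the interpolant chords straight over any crossing-point kink that the finer envelope $\min(f_{\v{t}^{(1,\pi)}},f_{\v{t}^{(d,\pi)}})$ may develop strictly between two grid points, so that the extreme-point curve is pinned down exactly by its grid values $m_i$, and to confirm that distinct permutations enumerate the vertices without double counting.
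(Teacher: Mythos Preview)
Your proposal is correct and is precisely the paper's approach: the paper's entire proof reads ``It follows directly from translating Theorem~\ref{Thm:catalysable-set} to thermomajorisation curves,'' and what you have written is that translation carried out in full, including the chamber-restriction lemma and the concavity/interpolation checks the paper leaves implicit. The only cosmetic wrinkle is that your final check $m_i\ge f_{\v p}(\Gamma_i)$ establishes that $\v{v}^\pi$ thermomajorises $\v p$ rather than directly that $\v{v}^\pi\notin\mathcal{T}_+(\v p)$, but this is peripheral bookkeeping that the paper's one-line proof does not address either.
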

\begin{proof}
    It follows directly from translating Theorem~\ref{Thm:catalysable-set} to thermomajorisation curves.
\end{proof}
It is interesting to note that the full catalysable future $\mathcal{C}_+$ of any given state will have much fewer extreme points than the corresponding future thermal cone.

\begin{lem}[Number of extreme points]
    Consider a state $\v{p}\in\Delta_d$. The number of extreme points of the full catalysable future thermal cone $\mathcal{CT}_+(\v{p}) \equiv \mathcal{T}_+(\v{p})\cup \mathcal{C}_+(\v{p})$ of any given state $\v{p}\in\Delta_d$ is upper-bounded by
    \begin{equation}
        \abs{\operatorname{Ext}(\mathcal{CT}_+(\v{p}))} \leq \left\lceil\frac{d}{2}\right\rceil\binom{d}{\left\lceil\frac{d}{2}\right\rceil} \ll d!,
    \end{equation}
    where $d! \geq \abs{\operatorname{Ext}(\mathcal{T}_+(\v{p}))}$ is the upper bound on the number of vertices of the future thermal cone of $\v{p}$.
\end{lem}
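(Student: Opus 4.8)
The plan is to convert the set-theoretic description of $\mathcal{CT}_+$ into an explicit half-space description, thereby reducing the problem to counting vertices of a box intersected with the probability hyperplane, and then to control that count by an edge-isoperimetric estimate on the Boolean lattice. First I would show that $\mathcal{CT}_+(\v p) = T_1(\v p)\cap T_d(\v p)$. By Theorem~\ref{Thm:catalysable-set}, $\mathcal{C}_+ = [T_1\cap T_d]\setminus\mathcal{T}_+$, so $\mathcal{CT}_+ = \mathcal{T}_+\cup(T_1\cap T_d)$, and it suffices to check $\mathcal{T}_+(\v p)\subseteq T_1\cap T_d$. This holds because $\v p\succ_\beta\v q$ forces $s_1(\v q)\le s_1(\v p)$ and $s_d(\v q)\ge s_d(\v p)$ (the extreme slopes can only contract under thermomajorisation), i.e. $s_d\gamma_\ell\le q_\ell\le s_1\gamma_\ell$ for every level $\ell$, writing $s_1 := s_1(\v p)$ and $s_d := s_d(\v p)$. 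Next I would read off the vertices of $T_1,T_d$ from Eqs.~\eqref{Eq:tangent-vectors-1}--\eqref{Eq:tangent-vectors-2}: since $s_1 = p_1^\beta/\gamma_1^\beta$ one finds $t^{(1,\pi)}_{\pi(1)} = s_1\gamma_{\pi(1)}$, so $\v t^{(1,\pi)}$ depends only on $a=\pi(d)$ and has $[\v t^{(1)}_a]_\ell = s_1\gamma_\ell$ for $\ell\neq a$, and dually $[\v t^{(d)}_b]_\ell = s_d\gamma_\ell$ for $\ell\neq b$. This yields the clean half-space descriptions $T_1 = \{\v q\in\Delta_d : q_\ell\le s_1\gamma_\ell\ \forall\ell\}$ and $T_d = \{\v q\in\Delta_d : q_\ell\ge s_d\gamma_\ell\ \forall\ell\}$, hence
\begin{equation}
  \mathcal{CT}_+(\v p) = \qty{\v q\in\R^d : \textstyle\sum_\ell q_\ell = 1,\ s_d\gamma_\ell\le q_\ell\le s_1\gamma_\ell\ \forall\ell},
\end{equation}
the intersection of the probability hyperplane with a box carrying exactly $2d$ facets.

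Second I would count the vertices of this polytope. In dimension $d-1$ a vertex pins all but one coordinate to a bound: a free index $j$, an ``upper'' set $S\subseteq[d]\setminus\{j\}$ with $q_\ell = s_1\gamma_\ell$, and a ``lower'' complement with $q_\ell = s_d\gamma_\ell$. Imposing $s_d\gamma_j < q_j < s_1\gamma_j$ and using $\sum_\ell\gamma_\ell = 1$ reduces feasibility to the single band condition $\Gamma_S\in(x^*-\gamma_j,\,x^*)$, with $\Gamma_S := \sum_{\ell\in S}\gamma_\ell$ and $x^* := (1-s_d)/(s_1-s_d)$. Equivalently, each vertex corresponds to an edge $\{S,\,S\cup\{j\}\}$ of the hypercube $Q_d$ whose endpoints straddle the level $x^*$ of the linear height $\Gamma(T) = \sum_{\ell\in T}\gamma_\ell$; that is, $\abs{\operatorname{Ext}(\mathcal{CT}_+)}$ equals the edge boundary of the down-set $D = \{T\subseteq[d] : \Gamma(T) < x^*\}$.

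Finally I would bound this edge boundary. The decisive input is the edge-isoperimetric inequality for monotone sets: over all down-sets of $Q_d$ (in particular all linear-threshold ones) the edge boundary is maximised by a union of the bottom levels, giving the ``middle cut''
\begin{equation}
  \abs{\partial D} \le d\binom{d-1}{\lfloor (d-1)/2\rfloor} = \left\lceil\frac d2\right\rceil\binom{d}{\lceil d/2\rceil},
\end{equation}
where the identity is the absorption rule $\lceil d/2\rceil\binom{d}{\lceil d/2\rceil} = d\binom{d-1}{\lceil d/2\rceil-1}$ together with $\lceil d/2\rceil-1 = \lfloor(d-1)/2\rfloor$. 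This extremal value is attained in the infinite-temperature (uniform $\v\gamma$) limit, where $\Gamma(T) = |T|/d$ and the crossing edges are exactly those between two adjacent middle layers. Comparing with the at most $d!$ $\beta$-orderings that index the extreme points of $\mathcal{T}_+$ then yields the stated separation.

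The main obstacle is precisely this last step: the reduction is elementary, but the sharp count genuinely requires the isoperimetric fact that the middle cut maximises the edge boundary among monotone sets. A naive per-index Sperner bound fails, since for a large weight $\gamma_j$ the band $(x^*-\gamma_j,x^*)$ may contain comparable sets, so individual directions can exceed $\binom{d-1}{\lfloor(d-1)/2\rfloor}$; the argument must instead control the total boundary globally, for instance via a symmetric-chain or compression argument, or by invoking the maximality of the majority function's total influence, so that the contributions across the $d$ directions aggregate to the central-binomial bound. Degenerate positions of the hyperplane (a coordinate coinciding with a bound, or the hyperplane meeting a box corner) only merge vertices and hence cannot violate the bound.
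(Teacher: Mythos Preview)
Your proof is correct and takes a genuinely different route from the paper's. The paper's argument (Appendix~\ref{app:geometric_fancy}) works geometrically: it treats $T_1$ and $T_d$ as a regular simplex and a rescaled inverted copy, tracks how the boundary of $t\nabla_d$ meets the $k$-dimensional faces of $\Delta_d$ as $t$ grows, observes that this yields at most $k\binom{d}{k}$ intersection points, and maximises over $k$. The extension to non-concentric simplices is handled only heuristically. Your approach instead converts $\mathcal{CT}_+ = T_1\cap T_d$ into the explicit hyperplane-meets-box description $\{\v q:\sum_\ell q_\ell=1,\ s_d\gamma_\ell\le q_\ell\le s_1\gamma_\ell\}$, bijects vertices with edges of the Boolean cube $Q_d$ crossing the linear threshold $\Gamma(T)=x^*$, and then invokes the fact that among monotone Boolean functions the total influence is maximised by majority (via $I(f)=\sum_i\hat f(\{i\})=\mathbb{E}[f(x)\sum_i x_i]\le\mathbb{E}\lvert\sum_i x_i\rvert$), which gives exactly $d\binom{d-1}{\lfloor(d-1)/2\rfloor}=\lceil d/2\rceil\binom{d}{\lceil d/2\rceil}$ for the edge boundary of any down-set. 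Your argument is fully rigorous where the paper's is confessedly heuristic, it makes transparent why the bound is attained at $\beta=0$ (uniform weights collapse the threshold to a Hamming level), and it ties the result to a standard theorem in analysis of Boolean functions; the paper's version, on the other hand, supplies a concrete geometric picture of how the vertex count changes as the two simplices slide relative to one another, which your combinatorial reduction does not directly display.
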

The proper proof of this geometric insight is given in the Appendix~\ref{app:geometric_fancy}. However, an intuition for the upper bound can be found by considering a ball $\mathcal{B}(R, \v{c})$ of radius $R$ centred at the centre $\v{c}$ of the simplex $\Delta_d$. First, for some $R = R_{d-1}$ the ball becomes inscribed, with a single common point with each hyperface of the simplex. Next, we have $R = R_{d-2}$, when the ball touches each hyperedge at exactly one point. For $R_{d-1} < R < R_{d-2}$, we find a $d-1$-dimensional ball on each of the hyperfaces. Replacing the ball with an inscribed inverted simplex yields the desired result.

The analogy drawn between special relativity and Gibbs-preserving matrices introduces a causal structure into the probability simplex $\Delta_d$, suggesting a ``light cone'' for each point within $\Delta_d$ that divides the space into past, incomparable, and future regions. This analogy implies that, much like in special relativity, a specific division of space-time exists. This division separates space-time into future, past, and space-like regions. This allows for the identification of the generating event, which is referred to as the present. Furthermore, there exists a one-to-one correspondence between events and the space-time divisions they induce. This concept is mirrored in the thermal cones for $\beta > 0$: given a specific configuration of the incomparable region along with future and past thermal cones, one can precisely determine the current state of the system [see black dot~$\bullet$ in Fig.\hyperref[Fig:catalysable-thermal-cones-example]{\ref{Fig:catalysable-thermal-cones-example}a-d}]. This is in sharp contrast to the situation with $\beta = 0$, where each division into past, future, and incomparable has a $d!$-fold symmetry, making it impossible to deduce the present state of the system based solely on this division without additional details such as the permutation that arranges the probabilities in non-decreasing order. Interestingly, in certain temperature regimes, the structure of the catalysable future mimics this feature, where the past and future regions meet again~[see Fig.~\hyperref[Fig:catalysable-thermal-cones-example]{\ref{Fig:catalysable-thermal-cones-example}a-d}].

The concept of catalysable regions applies also to other majorisation-based resource theories, such as entanglement and coherence theories, where they proceed from the \cite{grabowecky2018bounds} directly in combination with the tangent vectors. These theories are defined by sets of free operations and free states: local operations and classical communication (LOCC) with separable states in entanglement theory~\cite{Horodecki2009}, and incoherent operations (IO) with incoherent states in coherence theory~\cite{Plenio2014}. In each theory, quantum states can be represented by probability distributions, which are crucial for defining state transformation conditions under free operations. However, in both cases, the partial order that emerges is precisely the opposite of the thermodynamic order in the infinite temperature limit~\cite{nielsen1999conditions,Du2015}. As a result, what constitutes the catalysable thermodynamic past and future in thermodynamics becomes the future and past in entanglement and coherence theories, while the incomparable region remains unchanged. Nevertheless, it is important to highlight that there is no fully equivalent physical situation in either entanglement or coherence theories where thermomajorisation is applicable.

\subsection{Dimensionality bounds for thermal catalyst}

Our second main result concerns general bounds on the catalyst dimension for thermal operations, extending the bounds described in Ref.~\cite{grabowecky2018bounds} for local operations and classical communication. Notably, unlike the entanglement scenario, catalysis is possible even when the main system is a three-level system.

\begin{restatable}[Dimensionality criteria]{thm}{thmCatBeta}\label{thm:cat_beta}
   For two incomparable energy-incoherent states $\v{p} \perp_\beta \v{q}$, we define a maximal interval $\mathcal{L}$ as
    \begin{equation}
        \mathcal{L} = \qty{x\in(0,1)\mid f_{\v{p}}(x) - f_{\v{q}}(x)<0},
    \end{equation}
    and denote $m = \min(\mathcal{L})$ and $n = \max(\mathcal{L})$. Consider a catalyst $\v{r} \in \Delta_k$ with Hamiltonian $H_c$. The transformation can be catalysed by $\v{r}$, that is, $\v{p} \otimes \v{r} \succ_\beta \v{q} \otimes \v{r}$, only if its dimension satisfies the required conditions:
    \begin{align}
        k & > k^* = \frac{\log b}{\log a} + 1, \label{eq:catalyst_dim_bound}
    \end{align}
    with the coefficients $a, b$ defined as 
    \begin{align}
        a & = \min\qty(\frac{s_1\qty(\v{p})}{f'_{\v{p}}(m_-)},\frac{f'_{\v{p}}(n_+)}{s_d\qty(\v{p})})> \max_{v\in(0,1)}\qty(\frac{f'_{\v{r}}(v_-)}{f'_{\v{r}}(v_+)}), \\
        b & = \max_{l\in\mathcal{L}}\qty(\frac{f'_{\v{p}}(l_-)}{f'_{\v{p}}(l_+)}) <  \frac{s_1\qty(\v{r})}{s_k\qty(\v{r})},
    \end{align}
    where for brevity we use $f'(x)\equiv\dv*{f(x)}{x}$ and $f'(x_i)\equiv \eval{f'(x)}_{x = x_i}$. Furthermore, we denote the left and right derivatives as $f'(x_\pm) = \lim_{y\rightarrow x_\pm} f'(y)$.
\end{restatable}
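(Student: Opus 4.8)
The plan is to separate the statement into an easy reduction and a hard core. The dimension bound $k>k^\ast$ is the easy part: it is a consequence of the two chained inequalities appended to the definitions of $a$ and $b$, so I would prove those inequalities first (as necessary consequences of $\v{p}\otimes\v{r}\succ_\beta\v{q}\otimes\v{r}$) and then deduce the bound. For the deduction, note that the slope of $f_{\v{r}}$ runs through the values $s_1(\v{r})\geq\dots\geq s_k(\v{r})$, so the left/right derivative ratio at its $j$-th elbow is $s_j(\v{r})/s_{j+1}(\v{r})$ and $\max_v f'_{\v{r}}(v_-)/f'_{\v{r}}(v_+)=\max_j s_j(\v{r})/s_{j+1}(\v{r})$. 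The first chained inequality bounds every such factor strictly by $a$, so that $s_1(\v{r})/s_k(\v{r})=\prod_{j=1}^{k-1}s_j(\v{r})/s_{j+1}(\v{r})<a^{\,k-1}$; together with the second chained inequality $b<s_1(\v{r})/s_k(\v{r})$ this gives $b<a^{\,k-1}$, and since $a>1$ taking logarithms yields $k-1>\log b/\log a$, i.e.\ $k>k^\ast$.

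The real work is the two necessary conditions, which I would obtain by analysing the thermomajorisation curves of the products directly. The slope sequence of $\v{p}\otimes\v{r}$ is the non-increasing rearrangement of the $dk$ products $s_i(\v{p})s_j(\v{r})$, each segment carrying the corresponding product of Gibbs weights as its horizontal extent, and likewise for $\v{q}\otimes\v{r}$. Writing $\mathcal{L}=(m,n)$ for the interval on which $f_{\v{p}}<f_{\v{q}}$, catalytic dominance $f_{\v{p}\otimes\v{r}}\geq f_{\v{q}\otimes\v{r}}$ must in particular repair the deficit that this interval induces in the products.

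For $b<s_1(\v{r})/s_k(\v{r})$ I would isolate the elbow $l^\ast\in\mathcal{L}$ realising $b=s_{i^\ast}(\v{p})/s_{i^\ast+1}(\v{p})$ and follow this kink into $f_{\v{p}\otimes\v{r}}$. The two blocks of product slopes $\{s_{i^\ast}(\v{p})s_j(\v{r})\}_j$ and $\{s_{i^\ast+1}(\v{p})s_j(\v{r})\}_j$ interleave precisely when $s_{i^\ast}(\v{p})s_k(\v{r})<s_{i^\ast+1}(\v{p})s_1(\v{r})$, that is when $b<s_1(\v{r})/s_k(\v{r})$; if instead $b\geq s_1(\v{r})/s_k(\v{r})$ the blocks separate, the product curve keeps a kink as sharp as that of $\v{p}$, and it cannot clear the inherited deficit, contradicting dominance. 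For $a>\max_j s_j(\v{r})/s_{j+1}(\v{r})$ I would instead compare slopes at the endpoints of $\mathcal{L}$: the ratios $s_1(\v{p})/f'_{\v{p}}(m_-)$ and $f'_{\v{p}}(n_+)/s_d(\v{p})$ measure the steepness reserve of $f_{\v{p}}$ on entering and on leaving the deficit, and for the steepest available products to let $f_{\v{p}\otimes\v{r}}$ climb fast enough to overtake $f_{\v{q}\otimes\v{r}}$ just inside $\mathcal{L}$, each catalyst kink must be finer than this reserve. Strictness of the incomparability on $\mathcal{L}$ is what upgrades both relations to strict inequalities.

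The main obstacle is making these slope arguments rigorous, since product thermomajorisation curves have no closed form. Controlling where $f_{\v{p}\otimes\v{r}}$ lies relative to $f_{\v{q}\otimes\v{r}}$ requires careful bookkeeping of the merged slope sequence and its cumulative horizontal weights, together with a map sending the crossing points $m,n$ to their rescaled images in the product. The delicate step is to argue that a single un-smoothed kink necessarily drives the product curve below its competitor somewhere inside the deficit region, rather than merely reproducing a kink that might still lie above $f_{\v{q}\otimes\v{r}}$; I expect to handle this by localising to a two-slope comparison near the image of $\mathcal{L}$, which is where most of the effort would go.
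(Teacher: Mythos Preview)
Your deduction of the dimension bound from the two chained inequalities via the telescoping product
\[
\frac{s_1(\v{r})}{s_k(\v{r})}=\prod_{j=1}^{k-1}\frac{s_j(\v{r})}{s_{j+1}(\v{r})}<a^{\,k-1}
\]
is correct and is exactly what the paper does. The substantial difference is in how the chained inequalities themselves are obtained. The paper does \emph{not} prove them by a direct analysis of product thermomajorisation curves. Instead it reduces the entire statement to the already-established $\beta=0$ theorem of Grabowecky--Gour, in two equivalent ways: (i) heuristically, by noting that the ratios $p^\downarrow_i/p^\downarrow_{i+1}$ appearing in the $\beta=0$ bounds are precisely ratios of consecutive slopes of the majorisation curve, and that this reformulation in terms of $f'_{\v{p}}(x_\pm)$ carries over verbatim to thermomajorisation curves; (ii) rigorously, via the embedding map $\Xi$ that sends $\v{p}\in\Delta_d$ to $\hat{\v{p}}\in\Delta_D$ with $\hat{\v{p}}\succ\hat{\v{q}}\Leftrightarrow\v{p}\succ_\beta\v{q}$, so that the $\beta=0$ theorem applied to $\hat{\v{p}},\hat{\v{q}},\hat{\v{r}}$ yields the thermal bounds after observing that $\hat{p}^\downarrow_l/\hat{p}^\downarrow_{l+1}\to s_{\pi(\varphi(l))}(\v{p})/s_{\pi(\varphi(l+1))}(\v{p})$ in the limit $D\to\infty$.

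Your route---proving the chained inequalities from scratch by tracking how the kink at $l^\ast$ and the endpoints $m,n$ propagate into $f_{\v{p}\otimes\v{r}}$---is in effect a re-derivation of the Grabowecky--Gour result inside the thermal setting. The block-interleaving criterion you isolate for the $b$-inequality is the right mechanism, and your honesty about the bookkeeping is warranted: the step ``a single un-smoothed kink necessarily drives the product curve below its competitor'' is precisely where the original $\beta=0$ proof does its work, and it is not short. The paper's reduction sidesteps this entirely by treating that work as already done. What your approach would buy is a self-contained argument that does not cite \cite{grabowecky2018bounds}; what the paper's approach buys is a two-line translation once the $\beta=0$ case is in hand.
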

    The full proof of the above theorem is given in Appendix~\ref{app:proofs}. Therein, we present a heuristic proof in which we treat (thermo)majorisation curves as continuous objects, and an exact proof based on the notion of an embedding map~\cite{horodecki2013fundamental}, where thermomajorisation $\succ_\beta$ is equivalent to majorisation $\succ$ in the (embedded) higher-dimensional space. Both approaches agree in the obtained extension of the results presented in~\cite{grabowecky2018bounds}. Crucial insight leading to the results, however, is that catalysability of thermal states is encoded in the slopes of thermomajorisation functions, which may be obtained directly from the relation between the state of the system and the underlying Gibbs distribution. This stands in contrast with the standard understanding that transition from majorisation to thermomajorisation requires generically infinite-dimensional system.

We may further simplify the expressions above to remove optimisation over an interval in favour of a finite set of points.
\begin{lem}[Interval to point optimisation]\label{lem:less_points}
    For a given state $\v{p}$, we define an auxiliary set of indices $\mathcal{L}'\subset\qty{1,\hdots,d}$ as
    \begin{equation}
        \mathcal{L}' = \qty{l\in\qty{1,\hdots,d}\mid  \Gamma_l = \sum_{i=1}^l \gamma_i^\beta, f_{\v{p}}\qty(\Gamma_l) - f_{\v{q}}(\Gamma_l)<0},
    \end{equation}
    where $\v{\gamma}^\beta$ is assumed to be ordered according to the $\beta$-order of $\v{p}$. Then we find that 
    \begin{equation}
        \max_{l'\in\mathcal{L}'} \frac{s_{l'}\qty(\v{p})}{s_{l'+1}\qty(\v{p})}= \max_{l\in\mathcal{L}}\qty(\frac{f'_{\v{p}}(l_-)}{f'_{\v{p}}(l_+)}) = b.
    \end{equation}
\end{lem}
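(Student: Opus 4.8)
The plan is to exploit the piecewise-linear, concave structure of the thermomajorisation curve $f_{\v p}$ and thereby reduce the supremum of a ratio of one-sided derivatives over the continuum $\mathcal{L}$ to a maximum over the finite set of elbows. Recall that $f_{\v p}$ is built from the $\beta$-ordered data of $\v p$: it is affine on each interval $[\Gamma_{l-1},\Gamma_l]$ with slope $s_l(\v p)$, where $\Gamma_l = \sum_{i=1}^l \gamma_i^\beta$, and the slopes are non-increasing, $s_1(\v p)\geq\cdots\geq s_d(\v p)$, by the definition of the $\beta$-order. Consequently the only points at which the left and right derivatives of $f_{\v p}$ differ are the elbows $\Gamma_1,\dots,\Gamma_{d-1}$, which is what makes the discretisation possible.

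First I would evaluate the local ratio $f'_{\v p}(x_-)/f'_{\v p}(x_+)$ pointwise. At any $x$ that is not an elbow, $f_{\v p}$ is locally affine, so $f'_{\v p}(x_-) = f'_{\v p}(x_+)$ and the ratio equals $1$. At an elbow $x = \Gamma_l$, the left derivative is the incoming slope $s_l(\v p)$ and the right derivative is the outgoing slope $s_{l+1}(\v p)$, so the ratio equals $s_l(\v p)/s_{l+1}(\v p)$, which is at least $1$ by monotonicity of the slopes. Hence over any subset of $(0,1)$ the maximum of $f'_{\v p}(x_-)/f'_{\v p}(x_+)$ is attained at an elbow contained in that subset.

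Next I would match the two index sets. Since $\mathcal{L}$ is the open sublevel set $\qty{x : f_{\v p}(x) - f_{\v q}(x) < 0}$, an elbow $\Gamma_l$ lies in $\mathcal{L}$ exactly when $f_{\v p}(\Gamma_l) - f_{\v q}(\Gamma_l) < 0$, which is precisely the condition defining $\mathcal{L}'$; thus $\mathcal{L}' = \qty{l : \Gamma_l \in \mathcal{L}}$. Combining this identification with the pointwise evaluation yields
\begin{equation}
    b = \max_{x\in\mathcal{L}}\frac{f'_{\v p}(x_-)}{f'_{\v p}(x_+)} = \max_{l'\in\mathcal{L}'}\frac{s_{l'}(\v p)}{s_{l'+1}(\v p)},
\end{equation}
as claimed.

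The hard part will be the bookkeeping at the boundary of $\mathcal{L}$ and the degenerate case. Because $\mathcal{L}$ is open, the crossing points $m,n$ (where $f_{\v p} = f_{\v q}$) are excluded, so any elbow sitting exactly at $m$ or $n$ is correctly omitted from $\mathcal{L}'$; the slope jump there is instead captured by the coefficient $a$ in Theorem~\ref{thm:cat_beta}, and I would verify that these two mechanisms do not double-count any breakpoint. I would also handle the degenerate situation in which $\mathcal{L}$ contains no elbow of $f_{\v p}$: then $f_{\v p}$ is affine throughout $\mathcal{L}$ and both sides collapse to the trivial value $1$, consistent with there being no genuine slope change to drive the catalyst requirement.
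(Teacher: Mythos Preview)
Your argument is correct and follows the same route as the paper: the ratio $f'_{\v p}(x_-)/f'_{\v p}(x_+)$ equals $1$ wherever $f_{\v p}$ is differentiable, so nontrivial values occur only at the elbows $\Gamma_l$, which are precisely the points indexed by $\mathcal{L}'$. Your additional care about boundary points and the degenerate elbow-free case goes beyond what the paper writes, but the core idea is identical.
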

\begin{proof}
    The ratio between left and right derivative is equal to one whenever the function is differentiable. Therefore, the only points where we might find nontrivial values are for the elbows of the thermomajorisation curve $f_{\v{p}}$, described entirely by the set $\mathcal{L}'$.
\end{proof}

Theorem~\ref{thm:cat_beta}  has notable implications for the catalysability of low-dimensional systems. Specifically, we derive the following two corollaries:
\begin{obs}[Thermal non-catalysability for $d=2$]
    Consider two states described by population vectors $\v{p},\,\v{q}\in\Delta_2$ such that $\v{p}\perp_\beta\v{q}$. There exists no state $\v{r}\in\Delta_k$ which can catalyse the transformation.
\end{obs}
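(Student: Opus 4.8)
The plan is to reduce the claim to the necessary catalytic condition of Lemma~\ref{lem:catalysable_future_cond} and then show that this condition cannot be met by an incomparable qubit pair. Suppose, toward a contradiction, that some $\v{r}\in\Delta_k$ catalyses the transformation, so that either $\v{p}\otimes\v{r}\succ_\beta\v{q}\otimes\v{r}$ or $\v{q}\otimes\v{r}\succ_\beta\v{p}\otimes\v{r}$. The two possibilities are exchanged by relabelling $\v{p}\leftrightarrow\v{q}$, so I would treat only the first; combined with $\v{p}\perp_\beta\v{q}$ it states exactly that $\v{q}\in\mathcal{T}_{\mathcal{C}+}(\v{p})$. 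Lemma~\ref{lem:catalysable_future_cond} then forces $s_1(\v{p})>s_1(\v{q})$ and $s_2(\v{p})<s_2(\v{q})$ (for $d=2$ the bottom slope is $s_d=s_2$). The target is to prove that these two slope inequalities by themselves already imply $\v{p}\succ_\beta\v{q}$, which contradicts incomparability.

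The key observation I would exploit is that a qubit thermomajorisation curve is nothing but the lower envelope of two lines. Any $\v{w}\in\Delta_2$ has $\beta$-ordered slopes satisfying $s_1(\v{w})\ge 1\ge s_2(\v{w})$, since their $\v{\gamma}^\beta$-weighted average equals $1$; its curve has a single elbow and joins $(0,0)$ to $(1,1)$, so
\[
 f_{\v{w}}(x)=\min\!\big\{\, s_1(\v{w})\,x,\; 1-s_2(\v{w})(1-x)\,\big\},\qquad x\in[0,1].
\]
Feeding in the two slope inequalities termwise gives $s_1(\v{p})\,x\ge s_1(\v{q})\,x$ and $1-s_2(\v{p})(1-x)\ge 1-s_2(\v{q})(1-x)$ for every $x$, and because the minimum is monotone in each of its arguments this at once yields $f_{\v{p}}(x)\ge f_{\v{q}}(x)$ on all of $[0,1]$. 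Hence $\v{p}\succ_\beta\v{q}$; the symmetric branch rules out $\v{q}\succ_\beta\v{p}$, and either way we contradict $\v{p}\perp_\beta\v{q}$, so no catalyst exists.

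The only genuine work---and the place where $d=2$ is used essentially---is to justify the two-line representation: I would check that the normalisation $\sum_i\gamma_i^\beta s_i(\v{w})=1$ forces $s_1\ge 1\ge s_2$, so that the origin line of slope $s_1$ stays below the $(1,1)$-line of slope $s_2$ up to their crossing and exactly reproduces the elbowed curve. I expect this to be the main, if modest, obstacle, alongside confirming that the statement is not vacuous---incomparable qubit pairs do occur, necessarily with opposite $\beta$-orders. It is worth emphasising that this collapse of endpoint slope dominance to full curve dominance is special to two levels: for $d\ge 3$ the thermomajorisation curve has several interior elbows, so controlling only $s_1$ and $s_d$ no longer pins down the curve in between, and this is precisely the slack that makes catalysis possible already at $d=3$.
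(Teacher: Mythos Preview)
Your argument is correct and, like the paper, ultimately reduces the claim to showing that the slope conditions of Lemma~\ref{lem:catalysable_future_cond} are incompatible with $\v{p}\perp_\beta\v{q}$ in dimension two. The difference lies in how that incompatibility is established. The paper assumes (without loss of generality) that $\v{p}$ and $\v{q}$ carry opposite $\beta$-orders, writes out the two elbow inequalities that encode incomparability, and manipulates them directly into $s_1(\v{q})\ge s_1(\v{p})$ and $s_2(\v{q})\ge s_2(\v{p})$, contradicting Eq.~\eqref{eq:nec_cond_beta}. You instead run the contrapositive: you posit the slope conditions and exploit the identity $f_{\v{w}}(x)=\min\{s_1(\v{w})x,\,1-s_2(\v{w})(1-x)\}$ to conclude $f_{\v{p}}\ge f_{\v{q}}$ pointwise, hence $\v{p}\succ_\beta\v{q}$. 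Your route is more geometric and avoids any case split on $\beta$-orderings; the paper's route is more hands-on but makes explicit which coordinate inequalities do the work. Both are short, and your two-line representation nicely isolates \emph{why} $d=2$ is special---with a single interior elbow, the curve is completely determined by its terminal slopes, which is exactly the observation you flag in your final paragraph.
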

\begin{proof}
    A heuristic argument can be given based on the embedding lattice introduced in \cite{de2022geometric} -- any two states $\v{p},\v{q}$ of dimension $d= 2$, for $\beta > 0$ and different $\beta$-orders, have an effective dimension $\tilde{d} = \abs{\qty{\sum_{i\in I}\gamma_i|I\in2^{\qty{1,2}}}}-1=3$, where by $\abs{\cdot}$ we denote the size of the set, and they fall within the scope of Property 2 from \cite{grabowecky2018bounds}.

    For the more formal proof, we start by assuming that $\v{p}$ and $\v{q}$ have different $\beta$-orders. From their incomparability we find that they have to satisfy
    \begin{align}
        q_2 \geq p_1 \frac{\gamma_2}{\gamma_1}, &&
        q_2 + q_1 \frac{\gamma_1-\gamma_2}{\gamma_1} \leq p_1,
    \end{align}
    By simple manipulation, the second inequality is turned into
    \begin{align}
        1 - q_1 + q_1 \frac{\gamma_1-\gamma_2}{\gamma_1} & \leq 1-p_2, \\
        q_1\frac{\gamma_2}{\gamma_1} & \geq p_2.
    \end{align}
    Together with the first inequality we have 
    \begin{align}
        s_1(\v{q}) = \frac{q_2}{\gamma_2} \geq \frac{p_1}{\gamma_1} = s_1(\v{p})&&
        s_2(\v{q}) = \frac{q_1}{\gamma_1} \geq \frac{p_2}{\gamma_2} = s_2(\v{p}). 
    \end{align}
    which contradicts the necessary condition expressed in Eq.~\eqref{eq:nec_cond_beta} for catalysability.
\end{proof}

The more interesting case occurs in dimension $d = 3$:

\begin{obs}[Thermal catalysability in $d=3$]
    There exist pairs of states described by population vectors $\v{p},\,\v{q}\in\Delta_3$ such that $\v{q}\in\mathcal{C}_+(\v{p})$ and a catalyst $\v{r}\in\Delta_k$ such that $\v{p}\otimes\v{r}\succ_\beta\v{q}\otimes\v{r}$.
\end{obs}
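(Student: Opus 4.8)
The plan is to prove the statement by explicit construction: it suffices to exhibit a single inverse temperature $\beta>0$, an energy spectrum fixing $\v\gamma$, a pair of incomparable states $\v p,\v q\in\Delta_3$, and a catalyst $\v r\in\Delta_k$ for which $\v p\otimes\v r\succ_\beta\v q\otimes\v r$. Once such a catalyst is produced, the membership $\v q\in\mathcal{C}_+(\v p)$ follows automatically: the relation $\v p\otimes\v r\succ_\beta\v q\otimes\v r$ forces $\v q$ to satisfy the necessary condition of Lemma~\ref{lem:catalysable_future_cond}, and hence to lie in $T_1(\v p)\cap T_d(\v p)\setminus\mathcal{T}_+(\v p)$ by Theorem~\ref{Thm:catalysable-set}, while incomparability guarantees $\v q\notin\mathcal{T}_+(\v p)$. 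The genuine content of the Observation is therefore the actual existence of a working catalyst, in sharp contrast to the $d=2$ case ruled out by the preceding Observation.

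First I would fix $E_1=0,\,E_2=1,\,E_3=2$ and a moderate $\beta$, and select $\v p,\v q$ whose thermomajorisation curves $f_{\v p}$ and $f_{\v q}$ genuinely cross, so that $\v p\perp_\beta\v q$, while respecting the slope inequalities $s_1(\v p)>s_1(\v q)$ and $s_3(\v p)<s_3(\v q)$ demanded by Lemma~\ref{lem:catalysable_future_cond}. Incomparability is checked directly by exhibiting one value of $x$ where $f_{\v p}(x)>f_{\v q}(x)$ and another where the inequality reverses; the slope conditions are read off from the extreme entries of the slope vectors $\v s(\v p)$ and $\v s(\v q)$.

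The key mechanism, and the reason the construction can succeed despite the Grabowecky--Sanders no-go for $d=3$, is the embedding map of Ref.~\cite{horodecki2013fundamental}: thermomajorisation $\succ_\beta$ of $d=3$ states is equivalent to ordinary majorisation $\succ$ in an enlarged space whose effective dimension $\tilde d$ generically exceeds $3$ (this is already exploited in the $d=2$ Observation, where $\tilde d=3$). In the embedded picture the problem becomes a standard majorisation--trumping problem in dimension $\tilde d\ge 4$, which escapes the regime where Property~2 of~\cite{grabowecky2018bounds} forbids catalysis. I would therefore pull back a Jonathan--Plenio-type trumping catalyst to the thermal setting, or equivalently choose a candidate $\v r$ (a qubit or qutrit with a suitably chosen Gibbs weight) and verify the claim directly by forming the sorted product-slope vectors $\v s(\v p\otimes\v r)$ and $\v s(\v q\otimes\v r)$ exactly as in the proof of Lemma~\ref{lem:catalysable_future_cond}, then confirming $f_{\v p\otimes\v r}(x)\ge f_{\v q\otimes\v r}(x)$ for all $x\in[0,1]$.

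The main obstacle I anticipate is the tension between two requirements pulling in opposite directions: $\v p$ and $\v q$ must remain incomparable before tensoring (their curves must cross), yet the interleaving of the product slopes $s_a(\v p)s_b(\v r)$ against $s_a(\v q)s_b(\v r)$ must resolve that crossing into a clean domination after tensoring. This leaves only a narrow window of admissible triples $(\v p,\v q,\v r)$, and the delicate part is tuning the catalyst's slope ratio $s_1(\v r)/s_k(\v r)$ so that it lies in the range mandated by Theorem~\ref{thm:cat_beta}, i.e. large enough relative to the coefficient $b$ while keeping $k$ finite. Controlling the catalyst dimension through the bound $k>k^\ast$ of Theorem~\ref{thm:cat_beta} then certifies that a genuinely finite-dimensional catalyst suffices, completing the existence argument.
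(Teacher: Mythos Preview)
Your proposal is correct and follows essentially the same strategy as the paper: both prove the Observation by explicit construction of a single triple $(\v p,\v q,\v r)$ and direct verification of $\v p\otimes\v r\succ_\beta\v q\otimes\v r$. The paper is simply more terse---it writes down concrete numbers ($\v p=(0.42,0.51,0.07)$, a specific $\v q$, $\beta=0.2$, equidistant spectrum, and a qubit catalyst $\v r=(0.55,0.45)$ with trivial Hamiltonian) and asserts the thermomajorisation check passes---whereas you describe the search procedure and add the conceptual embedding-map explanation for why the $d=3$ no-go of \cite{grabowecky2018bounds} is evaded; that extra justification is sound but not needed for the bare existence claim.
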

\begin{proof}
 An explicit example can be given by considering a pair of incomparable vectors $\v{p} = \qty(0.42,0.51,0.07)$ and $\v{q} = \qty(0.52,0.13,0.05)$, both with energy spectrum $E_0 =0$, $E_1=1$ and $E_2=2$ and inverse temperature $\beta = 0.2$. Then, a 2-dimensional catalyst, described by a state $\v{r} = (0.55,0.45)$ with trivial Hamiltonian $H_{\v c} = 0$, catalyses the transformation, since $\v{p}\otimes\v{r}\succ_\beta\v{q}\otimes\v{r}$.
More generally, it is simple to see that as soon as $\beta >0$, we find that for a generic state $\v{p}$, the catalysable future is non-empty $\mathcal{C}_+(\v{p})\neq\emptyset$. 
   \end{proof}

Furthermore, using Theorem \ref{thm:cat_beta} we can formulate certain bounds on qubit catalysts, limiting the corresponding probability vector $\v{r}$ to a specific interval.

\begin{cor}[Trivial qubit catalysts]\label{cor:qubit_cat_beta}
    Consider a pair of incomparable states $\v{p},\,\v{q}\in\Delta_d$ such that $\v{q}\in\mathcal{C}_+(\v p)$. A qubit catalyst in a state $\v{r} = (1-t,t)$ with $t \leq 1/2$ and described by a trivial Hamiltonian $H_c = 0$ can catalyse the transformation, $\v{p}\otimes\v{r} \succ \v{q}\otimes\v{r}$ only if
    \begin{equation}
        \frac{1}{1+a} \leq t \leq \frac{1}{1+b}
    \end{equation}
    with $a,b$ defined as in Theorem \ref{thm:cat_beta}.
\end{cor}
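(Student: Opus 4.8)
The plan is to treat this as a direct specialisation of Theorem~\ref{thm:cat_beta} to a two-level catalyst: the claimed interval should fall out of the two catalyst-dependent necessary inequalities bundled in that theorem, namely $a > \max_{v\in(0,1)}\bigl(f'_{\v{r}}(v_-)/f'_{\v{r}}(v_+)\bigr)$ and $b < s_1(\v{r})/s_k(\v{r})$, once the right-hand sides are evaluated for $\v{r} = (1-t,t)$ with $H_c = 0$. Since the corollary is an ``only if'' statement, it suffices to push these necessary conditions through, with $a,b$ the $\v{p},\v{q}$-dependent quantities from the theorem.

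First I would fix the $\beta$-geometry of the qubit catalyst. With a trivial Hamiltonian the catalyst's Gibbs distribution is the uniform vector $(1/2,1/2)$, so $\beta$-ordering reduces to ordinary ordering and the slope vector is $\v{s}(\v{r}) = \bigl(2(1-t),2t\bigr)^{\downarrow}$. The assumption $t\le 1/2$ guarantees $1-t\ge t$, hence $s_1(\v{r}) = 2(1-t)$ and $s_2(\v{r}) = 2t$, giving $s_1(\v{r})/s_2(\v{r}) = (1-t)/t$. The thermomajorisation curve $f_{\v{r}}$ is then piecewise linear with a single elbow at $x = 1/2$, where the left and right slopes are exactly $s_1(\v{r})$ and $s_2(\v{r})$; everywhere else $f_{\v{r}}$ is differentiable and the left-to-right slope ratio equals one. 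Consequently the maximisation over $v$ is attained at the elbow, so $\max_{v\in(0,1)}\bigl(f'_{\v{r}}(v_-)/f'_{\v{r}}(v_+)\bigr) = (1-t)/t$, which coincides with $s_1(\v{r})/s_2(\v{r})$.

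Next I would substitute these two values into the catalyst-dependent inequalities. The condition $a > (1-t)/t$ rearranges to $t(1+a) > 1$, i.e. $t > 1/(1+a)$, while $b < (1-t)/t$ rearranges to $t(1+b) < 1$, i.e. $t < 1/(1+b)$. Together these give the open interval $1/(1+a) < t < 1/(1+b)$, which immediately implies the closed-interval necessary condition stated in the corollary (closing the endpoints only weakens an ``only if'' statement, so it remains valid).

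I do not expect a genuine obstacle here; the only point requiring care is tracking the orientation of the slope vector, where the hypothesis $t\le 1/2$ is exactly what pins down $s_1(\v{r}) = 2(1-t)$. The essential structural fact is that for a single-elbow (qubit) curve the two a priori distinct catalyst quantities appearing in Theorem~\ref{thm:cat_beta} -- the worst-case elbow sharpness $\max_v f'_{\v{r}}(v_-)/f'_{\v{r}}(v_+)$ and the global slope ratio $s_1(\v{r})/s_k(\v{r})$ -- collapse to the same number $(1-t)/t$, so the pair of necessary conditions pins $t$ into one two-sided interval rather than two independent constraints. I would also remark that $\max_v f'_{\v{r}}(v_-)/f'_{\v{r}}(v_+)\ge 1$ forces $a,b\ge 1$, hence both endpoints lie in $(0,1/2]$, consistent with the standing assumption $t\le 1/2$.
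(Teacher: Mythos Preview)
Your proof is correct and is precisely the intended derivation: the paper presents this corollary without an explicit proof, treating it as an immediate specialisation of Theorem~\ref{thm:cat_beta} to $k=2$, and your computation of $s_1(\v{r})/s_2(\v{r}) = (1-t)/t$ together with the observation that the single-elbow curve makes the two catalyst-side quantities coincide is exactly the content that fills in that step.
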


In contrast to \cite{grabowecky2018bounds}, the provided bound on qubit catalyst is expressed in terms of coefficients $a, b$ which are implicitly functions of both the underlying probability vector $\v{p}$ and the underlying energy level structure given by the Gibbs state $\v{\gamma}$. Moreover, similar bounds can be given for qubit catalysts with nontrivial energy level structure.

\begin{cor}[Nontrivial qubit catalysts]\label{cor:qubit_cat_beta_nont}
    Consider a pair of incomparable states $\v     {p},\,\v{q}\in\Delta_d$ such that $\v{q}\in\mathcal{C}_+(\v{p})$ and a qubit catalyst in a state $\v{r} = (1-t,t)$ with Hamiltonian such that the Gibbs state is given by $\v{\gamma}_{\v{r}} = \qty(1-\gamma_{\v{r}},\,\gamma_{\v{r}})$. For $t \leq \gamma_{\v{r}}$ it can catalyse the process only if
    \begin{equation}
        \frac{\gamma_{\v{r}} }{a(1-\gamma_{\v{r}})+\gamma_{\v{r}} } \leq t \leq \frac{\gamma_{\v{r}} }{b(1-\gamma_{\v{r}})+\gamma_{\v{r}}};
    \end{equation}
    for $t \geq \gamma_{\v{r}}$ it can catalyse the transformation only if
    \begin{equation}
        \frac{1-\gamma_{\v{r}} }{b\gamma_{\v{r}} + 1 - \gamma_{\v{r}}} \leq 1-t\leq \frac{1-\gamma_{\v{r}} }{a\gamma_{\v{r}} + 1 - \gamma_{\v{r}}},
    \end{equation}
    where in the above we take $a,\,b$ from Theorem \ref{thm:cat_beta}.
\end{cor}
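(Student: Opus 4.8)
The plan is to obtain both intervals as a direct specialization of Theorem~\ref{thm:cat_beta} to a two-level catalyst, $k=2$, mirroring the derivation of the trivial-Hamiltonian Corollary~\ref{cor:qubit_cat_beta} but retaining the nontrivial Gibbs weight $\gamma_{\v{r}}$. Since Theorem~\ref{thm:cat_beta} supplies \emph{necessary} conditions for $\v{p}\otimes\v{r}\succ_\beta\v{q}\otimes\v{r}$, propagating them to $k=2$ yields the claimed ``only if'' constraints. The crucial starting observation is that for a qubit the two inequalities bundled into the definitions of $a$ and $b$ both constrain the \emph{same} scalar. Indeed, the thermomajorisation curve $f_{\v{r}}$ of a two-level state has a single interior elbow, so $\max_{v\in(0,1)} f'_{\v{r}}(v_-)/f'_{\v{r}}(v_+) = s_1(\v{r})/s_2(\v{r})$, which is exactly the ratio $s_1(\v{r})/s_k(\v{r})$ appearing in the bound on $b$. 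Hence the two necessary conditions $a>\max_v f'_{\v{r}}(v_-)/f'_{\v{r}}(v_+)$ and $b<s_1(\v{r})/s_k(\v{r})$ collapse to the single two-sided inequality
\begin{equation}
    b < \frac{s_1(\v{r})}{s_2(\v{r})} < a .
\end{equation}

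Next I would evaluate the slope ratio explicitly. For $\v{r}=(1-t,t)$ with Gibbs state $\v{\gamma}_{\v{r}}=(1-\gamma_{\v{r}},\gamma_{\v{r}})$ the two level-slopes are $(1-t)/(1-\gamma_{\v{r}})$ and $t/\gamma_{\v{r}}$, and their $\beta$-ordering flips precisely at $t=\gamma_{\v{r}}$. For $t\le\gamma_{\v{r}}$ the first level carries the larger slope, so $s_1(\v{r})/s_2(\v{r}) = (1-t)\gamma_{\v{r}}/[t(1-\gamma_{\v{r}})]$; for $t\ge\gamma_{\v{r}}$ the second level dominates and $s_1(\v{r})/s_2(\v{r}) = t(1-\gamma_{\v{r}})/[(1-t)\gamma_{\v{r}}]$. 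This case split is exactly what produces the two separate displays in the statement, and it is the only place where genuine care is needed.

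Finally I would insert each expression into $b<s_1(\v{r})/s_2(\v{r})<a$ and solve the resulting pair of linear-fractional inequalities. In the regime $t\le\gamma_{\v{r}}$, clearing denominators in $b<(1-t)\gamma_{\v{r}}/[t(1-\gamma_{\v{r}})]<a$ and isolating $t$ gives directly
\begin{equation}
    \frac{\gamma_{\v{r}}}{a(1-\gamma_{\v{r}})+\gamma_{\v{r}}}\le t\le\frac{\gamma_{\v{r}}}{b(1-\gamma_{\v{r}})+\gamma_{\v{r}}},
\end{equation}
while in the regime $t\ge\gamma_{\v{r}}$ the same manipulation, now isolating $1-t$, yields the companion two-sided bound on $1-t$ of identical structure. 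As a consistency check one sets $\gamma_{\v{r}}=1/2$, which collapses both chains to $1/(1+a)\le t\le 1/(1+b)$ and recovers Corollary~\ref{cor:qubit_cat_beta}. Since everything downstream of the collapsed inequality is elementary algebra, there is no analytic obstacle here; the one subtlety worth flagging is tracking which level carries the larger slope across $t=\gamma_{\v{r}}$, because getting the $\beta$-order backwards interchanges the roles of $a$ and $b$ (equivalently of $t$ and $1-t$) in the final intervals.
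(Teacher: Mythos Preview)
Your proposal is correct and is exactly the intended derivation: the paper states Corollary~\ref{cor:qubit_cat_beta_nont} without proof, treating it as an immediate specialization of Theorem~\ref{thm:cat_beta} to $k=2$, and your observation that for a qubit the single interior elbow forces $\max_{v}f'_{\v r}(v_-)/f'_{\v r}(v_+)=s_1(\v r)/s_2(\v r)$, collapsing the two constraints into $b<s_1(\v r)/s_2(\v r)<a$, is precisely the step that makes the corollary follow. The case split at $t=\gamma_{\v r}$ and the subsequent algebra are straightforward, and your consistency check against Corollary~\ref{cor:qubit_cat_beta} at $\gamma_{\v r}=1/2$ is apt.
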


It is worth noting that the two bounds above are symmetric with simultaneous replacement of $t\leftrightarrow1-t$ and $\gamma_{\v{r}}\leftrightarrow1-\gamma_{\v{r}}$ and thus, in a certain sense, the allowable region for qubit catalysts retains the symmetry with respect to the Gibbs state.
\begin{figure*}
    \centering
    \includegraphics{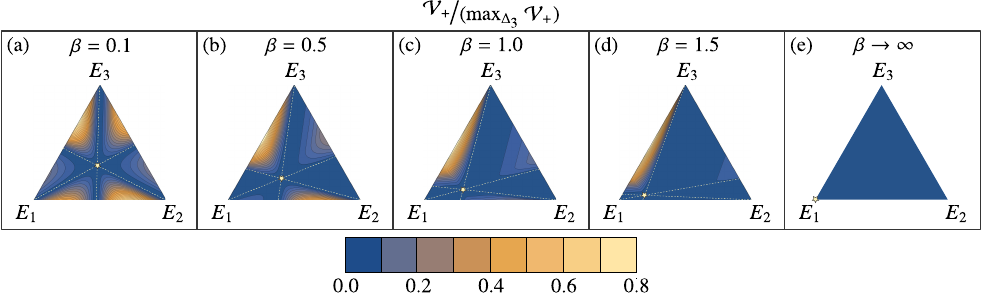}
    \caption{\textbf{Isovolumetric curves for $d=3$.} 
For a three-level system described by an equidistant energy spectrum with $E_1 = 0$, $E_2 = 1$, and $E_3 = 2$, we plot the volume of the catalysable future for several regimes of $\beta$. The lines indicate states with equal volumes, and the colours represent their magnitudes as a fraction of the maximal volume, $\mathcal{V}_+/\qty(\max_{\Delta_3}\mathcal{V}_+)$. The yellow star depicts the thermal state, and the dashed line indicates the boundaries between the six possible $\beta$-orderings, dividing the simplex into six chambers. Note that at the boundaries $\mathcal{V}_+$ goes to zero, while maxima are located at the outer edges of each chamber.}
    \label{F:iso-volumes}
\end{figure*}

\subsection{Quantifying catalytic advantages \label{sec:volume}} 

The characterization of the catalysable regions naturally enables the quantification of the catalytic advantage through their respective relative volumes. Given a state $\v{p}$, we define
\begin{equation}\label{Eq:volume-past-future-cat}
    \mathcal{V}_{i}(\v p) = \frac{\operatorname{Vol}[\mathcal{C}_i(\v p)]}{\operatorname{Vol}(\Delta_d)} \quad \text{for} \quad i\in \{+,-\},
\end{equation}
as the volumes of the catalysable past and future, where $\operatorname{Vol}(\cdot)$ denotes the Euclidean metric volume. Both volumes also help us to understand the behaviour of the catalytic set for a given choice of state and temperature.

We begin our analysis of the volumes of catalysable thermal cones by presenting general properties that can be directly read from our results. The starting point is a straightforward result that follows directly from Theorem~\ref{Thm:catalysable-set}, namely
\begin{cor}[Zero volume for non-full rank states]
\label{Cor:catalysable-past-volume}
the catalysable past thermal cone of a non-full rank state has zero volume.
\end{cor}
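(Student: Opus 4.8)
The plan is to prove the stronger statement that the catalysable past is in fact \emph{empty}, $\mathcal{C}_-(\v{p}) = \emptyset$, from which zero volume follows trivially. The single driving observation is that non-full rank means $p_i = 0$ for at least one index $i$; since every Gibbs weight is strictly positive, $\gamma_j > 0$, the minimal slope vanishes, $s_d(\v{p}) = \min_j (p_j/\gamma_j) = 0$.

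I would first read this off Theorem~\ref{Thm:catalysable-set} through the auxiliary set $T_d(\v{p})$. Inserting $s_d = s_d(\v{p}) = 0$ into the tangent-vector formulas, Eqs.~\eqref{Eq:tangent-vectors-1}--\eqref{Eq:tangent-vectors-2}, forces the bulk entries $s_d\gamma_{\pi(i)}$ to vanish and fixes $t^{(d,\pi)}_{\pi(1)} = \sum_i p^\beta_i = 1$ and $t^{(d,\pi)}_{\pi(d)} = 0$. Hence each tangent vector degenerates to the simplex vertex $e_{\pi(1)}$, and as $\pi$ ranges over $\mathcal{S}_d$ the first index $\pi(1)$ covers all of $\{1,\dots,d\}$. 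Therefore $T_d(\v{p}) = \operatorname{conv}\{e_1,\dots,e_d\} = \Delta_d$ is the entire simplex, and the bounding set collapses: $\mathcal{C}_-(\v{p}) = \mathcal{T}_\emptyset(\v{p}) \setminus [T_1(\v{p}) \cup T_d(\v{p})] = \mathcal{T}_\emptyset(\v{p}) \setminus \Delta_d = \emptyset$.

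The same conclusion follows even more transparently from the necessary condition of Lemma~\ref{lem:catalysable_future_cond}: since $\v{q} \in \mathcal{T}_{\mathcal{C}-}(\v{p})$ is equivalent to $\v{p} \in \mathcal{T}_{\mathcal{C}+}(\v{q})$, any catalysable-past target must obey $s_d(\v{q}) < s_d(\v{p}) = 0$, which is impossible because $s_d(\v{q}) = \min_j(q_j/\gamma_j) \geq 0$ for every probability vector. Thus $\mathcal{T}_{\mathcal{C}-}(\v{p}) = \emptyset$, and a fortiori the bounding set $\mathcal{C}_-(\v{p}) \subset \mathcal{T}_{\mathcal{C}-}(\v{p})$ is empty as well.

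There is no genuine obstacle in this corollary; the only subtlety worth flagging is that the degeneracy $T_d(\v{p}) = \Delta_d$ (equivalently $s_d(\v{p}) = 0$) hinges on the strict positivity of the Gibbs weights, so that a single vanishing population truly forces the minimal slope to zero. The boundary locus $s_d(\v{q}) = 0$ is excluded automatically by the strict inequality in the catalytic condition, and is in any case lower-dimensional and hence volume-free, so the verdict of zero volume is robust.
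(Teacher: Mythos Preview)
Your proposal is correct, and your first argument is essentially identical to the paper's proof: both compute the $d$-th tangent vectors for a non-full rank state, find that $s_d(\v{p}) = 0$ forces each $\v{t}^{(d,\pi)}$ to degenerate to the sharp state $\v{e}_{\pi(1)}$, and conclude via Theorem~\ref{Thm:catalysable-set} that $T_d(\v{p}) = \Delta_d$ and hence $\mathcal{C}_-(\v{p}) = \emptyset$. The paper adds a remark about the past thermal cone lying on the boundary of the simplex, but this is incidental; the logical core is the same.

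Your second argument, appealing directly to Lemma~\ref{lem:catalysable_future_cond} to rule out any $\v{q}$ with $s_d(\v{q}) < s_d(\v{p}) = 0$, is not in the paper's proof and is a genuinely cleaner route: it bypasses the tangent-vector machinery entirely and shows the stronger fact $\mathcal{T}_{\mathcal{C}-}(\v{p}) = \emptyset$ (not just the bounding set $\mathcal{C}_-$). The paper's approach has the minor advantage of exercising the general Theorem~\ref{Thm:catalysable-set} construction, which is the paper's main structural tool; your Lemma-based argument has the advantage of being shorter and yielding a sharper conclusion.
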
 
\begin{proof}
Without loss of generality, consider a non-full rank state $\v{p} = (p_1, ..., p_{d-1},0)$. Applying Eq.~\eqref{eq_thermaltangentvectors} yields $t^{(d,\v \pi)}_i(\v p) = \delta_{\v{\pi}(1),i}$, for all $\v \pi \in \mathcal{S}_d$. Consequently, the incomparable region is given by all points in the interior of the probability simplex, except those that are in the future of $\v{p}$. Then, all the points of the past will be located at the edge. Therefore according to Theorem~\ref{Thm:catalysable-set} there is no catalysable past and its volume is zero.
\end{proof}
The above result also enables us to state a general behaviour for sharp states $\v{v}_k$ with $(\v{v}_k)_j = \delta_{jk}$:
\begin{cor}[Zero volume for sharp states]
\label{Cor:catalysable-future-sharp}
the catalysable future thermal cone of a sharp state $\v{v}_k$ has zero volume.
\end{cor}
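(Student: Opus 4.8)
The plan is to feed the sharp state $\v{v}_k$ into the tangent-vector characterisation of Theorem~\ref{Thm:catalysable-set}, exactly as was done for the past in Corollary~\ref{Cor:catalysable-past-volume}, and to show that the constructed region $\mathcal{C}_+(\v{v}_k)$ collapses to the empty set. The starting observation is that the $\beta$-ordering of $\v{v}_k$ is trivial: since $(\v{v}_k)_j=\delta_{jk}$, the slope vector has a single nonzero entry, so that $s_1(\v{v}_k)=1/\gamma_k$ while $s_2(\v{v}_k)=\dots=s_d(\v{v}_k)=0$, with $p^{\beta}_1=1$ and $\gamma^{\beta}_1=\gamma_k$. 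These slopes are what I would substitute directly into Eqs.~\eqref{eq_thermaltangentvectors}--\eqref{Eq:tangent-vectors-2}.

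First I would compute $T_d$. Taking $n=d$ and $s_d=0$, the formulas give $t^{(d,\v\pi)}_{\v\pi(1)}=\sum_{i=1}^d p^{\beta}_i=1$, vanishing middle entries, and $t^{(d,\v\pi)}_{\v\pi(d)}=0$, so that $\v{t}^{(d,\v\pi)}(\v{v}_k)=\v{v}_{\v\pi(1)}$. As $\v\pi$ ranges over $\mathcal{S}_d$ the index $\v\pi(1)$ runs over all levels, whence $T_d(\v{v}_k)=\operatorname{conv}\{\v{v}_1,\dots,\v{v}_d\}=\Delta_d$. This is precisely the mechanism underlying the non-full-rank past corollary, and it reduces the catalysable future to $\mathcal{C}_+(\v{v}_k)=\bigl[T_1(\v{v}_k)\cap\Delta_d\bigr]\setminus\mathcal{T}_+(\v{v}_k)$.

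The crux is then to prove the inclusion $T_1(\v{v}_k)\cap\Delta_d\subseteq\mathcal{T}_+(\v{v}_k)$, which forces the difference to be empty. For $n=1$, $s_1=1/\gamma_k$, the tangent vector carries entries $\gamma_i/\gamma_k$ at every level except $j=\v\pi(d)$ and $1-(1-\gamma_j)/\gamma_k$ at level $j$; crucially it depends only on $j$, so $T_1(\v{v}_k)$ has at most $d$ vertices $\v{w}^{(j)}$. I would then evaluate an arbitrary convex combination $\v{x}=\sum_j\lambda_j\v{w}^{(j)}$ and obtain the key identity $x_i=\gamma_i/\gamma_k+\lambda_i\bigl(1-1/\gamma_k\bigr)$. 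Since $\gamma_k<1$ gives $1-1/\gamma_k\le0$ and $\lambda_i\ge0$, this yields $x_i\le\gamma_i/\gamma_k$ for all $i$, i.e.\ $s_1(\v{x})\le1/\gamma_k=s_1(\v{v}_k)$, so $\v{v}_k\succ_\beta\v{x}$ and $\v{x}\in\mathcal{T}_+(\v{v}_k)$. I would preface this with the elementary fact, read off from $f_{\v{v}_k}(x)=\min(x/\gamma_k,1)$, that $\mathcal{T}_+(\v{v}_k)=\{\v{x}\in\Delta_d:x_i\le\gamma_i/\gamma_k\ \forall i\}$.

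Assembling these steps gives $\mathcal{C}_+(\v{v}_k)=\emptyset$ and hence zero volume. The main obstacle is the third step: one must accommodate the quasi-probability nature of the type-1 tangent vertices $\v{w}^{(j)}$, which is exactly why intersecting with $T_d=\Delta_d$ is essential, and then verify the convex-combination inequality, which carries the real content. As an independent consistency check I would note that the same conclusion follows from Lemma~\ref{lem:catalysable_future_cond}: incomparability $\v{q}\perp_\beta\v{v}_k$ forces $s_1(\v{q})>1/\gamma_k$, whereas the necessary catalytic condition $s_1(\v{v}_k)>s_1(\v{q})$ forces $s_1(\v{q})<1/\gamma_k$, so that no incomparable state can be catalysed at all.
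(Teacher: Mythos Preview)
Your proposal is correct and follows the same overall plan as the paper: compute the tangent vectors $\v{t}^{(1,\v\pi)}$ and $\v{t}^{(d,\v\pi)}$ for the sharp state explicitly and show that $T_1\cap T_d\setminus\mathcal{T}_+(\v{v}_k)=\emptyset$. The computations of the slope vector and of $T_d=\Delta_d$ coincide with the paper's. Where you diverge is in the handling of $T_1$: the paper introduces a projection operator $\hat{P}$ onto the simplex (capping the thermomajorisation curve at $1$) and argues that each projected vertex $\hat{P}\v{t}^{(1,\v\pi)}(\v{v}_k)$ is either $\v{v}_k$ itself or an extreme point of $\mathcal{T}_+(\v{v}_k)$, whence $T_1$ (after projection) coincides with the future cone. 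Your route---writing an arbitrary convex combination $\v{x}=\sum_j\lambda_j\v{w}^{(j)}$, deriving the coordinate identity $x_i=\gamma_i/\gamma_k+\lambda_i(1-1/\gamma_k)\le\gamma_i/\gamma_k$, and invoking the explicit characterisation $\mathcal{T}_+(\v{v}_k)=\{\v{x}\in\Delta_d:x_i\le\gamma_i/\gamma_k\ \forall i\}$---is more self-contained and sidesteps the projection device entirely, at the cost of the small extra computation of the convex combination. Your closing consistency check via Lemma~\ref{lem:catalysable_future_cond} (incomparability with $\v{v}_k$ forces $s_1(\v{q})>1/\gamma_k$, while the catalytic condition demands $s_1(\v{q})<1/\gamma_k$) is an independent and shorter argument that does not appear in the paper.
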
 
\begin{proof}We proceed by showing that the extreme points of the catalysable set coincide with the extreme points of the future thermal cone of the sharp state. First, we
consider the slope vector $\v{s}(\v{v}_k)$ of a sharp state $\v{v}_k$. We readily determine that $s_j(\v{v}_k) = \frac{\delta_{j1}}{\gamma_k}$. From this, it is straightforward to conclude that the corresponding $d$-th tangent vectors are represented by sharp states, i.e., $\v{t}^{(d,\v \pi)}(\v{v}_k) = \v{v}_{\pi(1)}$. The components of the first ones are expressed as $t^{(1,\v \pi)}_i(\v{v}_k) = \gamma_i/\gamma_k$ for $i<d$. Second, upon defining a projection operator $\hat{P}$ that acts on a probability vector, we ensure that the thermomajorisation function at each elbow is defined to be
\begin{equation} \label{eq:projection_operation}
    f_{\hat{P}\v{t}}(\Gamma_i) = \max\qty[1,f_{\hat{P}\v{t}}(\Gamma_i)]\quad , \quad \forall\,\Gamma_i = \sum_{j=1}^i \gamma_{\pi(j)}.
\end{equation}
Using the above projection onto the  simplex $\Delta_d$ one finds that either $\hat{P}\v{t}^{(1,\v \pi)}(\v{v}_k) = \v{v}_k$ or $\hat{P}\v{t}^{(1,\v \pi)}(\v{v}_k)$ is an extreme point of the future thermal cone $\mathcal{T}_+(\v{v}_k)$. Thus, we find that $\mathcal{T}_+(\v{v}_k) = T_1(\v{v}_k)$ and, in consequence, $T_1(\v{v}_k)\cap T_d(\v{v}_k) \setminus \mathcal{T}_+(\v{v}_k) = \emptyset$, which completes the proof.
\end{proof}

Note that there is an interplay between the volumes of the catalysable future and past, and the volume of the incomparable thermal region. Understanding the general behaviour of their volumes is complex as it depends on the state under consideration. Nevertheless, a significant insight can be obtained by resorting to the fact that the state of non-full rank with the smallest future thermal cone is known~\cite{de2022geometric}. For this state, the following result can be proven:
\begin{cor}[Non-catalysable state]
\label{Cor:catalysable-sharp}
The non-full rank state with the smallest future thermal cone
\begin{equation}
\label{eq_nonfullrankstatethermalinc}
\v g = \frac{1}{Z}_{\v g}\left(e^{-\beta E_1}, ...,e^{-\beta E_{d-1}},0 \right) \quad \text{where} \quad Z_{\v g} = \sum_{i=1}^{d-1}e^{-\beta E_i},
\end{equation}
cannot be catalysed, i.e., the volume of its catalysable future is zero.
\end{cor}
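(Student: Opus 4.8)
The plan is to mirror the strategy used for sharp states in Corollary~\ref{Cor:catalysable-future-sharp}, showing that the catalysable future of $\v g$ collapses onto the boundary of its ordinary future thermal cone. First I would compute the slope vector of $\v g$. Since $g_i/\gamma_i = Z/Z_{\v g} =: s$ is constant for every $i<d$ while $g_d = 0$, the $\beta$-ordered slope vector is $\v s(\v g) = (s,\dots,s,0)$, with $d-1$ equal entries followed by a single zero. The ties among the first $d-1$ levels make the $\beta$-order non-unique, but this is irrelevant since those slopes coincide. Substituting $s_d(\v g)=0$ into Eq.~\eqref{eq_thermaltangentvectors} yields $\v t^{(d,\pi)}(\v g) = \v v_{\pi(1)}$, so that $T_d(\v g) = \operatorname{conv}[\{\v v_k\}_{k=1}^d] = \Delta_d$. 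Consequently $T_1(\v g)\cap T_d(\v g) = T_1(\v g)\cap\Delta_d$, and it suffices to prove $T_1(\v g)\cap\Delta_d \subseteq \mathcal{T}_+(\v g)$, which forces $\mathcal{C}_+(\v g) = [T_1\cap T_d]\setminus\mathcal{T}_+(\v g) = \emptyset$.

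Next I would identify the thermomajorisation curve of $\v g$ explicitly. Because $\v g$ has constant slope $s$ on a support of total Gibbs weight $1-\gamma_d = Z_{\v g}/Z = 1/s$, its curve is $f_{\v g}(x) = \min(sx,1)$: a straight line of slope $s$ from the origin up to height one, then flat. This is the pointwise-largest concave curve that starts with slope $s$ and terminates at $(1,1)$, since concavity forces $f_{\v q}(x) \le s_1(\v q)\,x$ and every curve obeys $f_{\v q}(x)\le 1$. Evaluating Eqs.~\eqref{Eq:tangent-vectors-1}--\eqref{Eq:tangent-vectors-2} at $n=1$ gives $[\v t^{(1,\pi)}(\v g)]^\beta = (s\gamma_{\pi(1)},\dots,s\gamma_{\pi(d-1)}, s(\gamma_{\pi(d)}-\gamma_d))$, whose first $d-1$ slopes all equal $s$ and whose final slope $s(1-\gamma_d/\gamma_{\pi(d)})$ is strictly smaller; hence $s_1(\v t^{(1,\pi)}(\v g)) = s$.

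The crucial step is then to show that, after the projection $\hat P$ onto the simplex defined in Eq.~\eqref{eq:projection_operation}, each first tangent vector is thermomajorised by $\v g$. When $\gamma_{\pi(d)}\ge\gamma_d$ the vector is already a genuine probability distribution, so its curve is concave with initial slope $s$ and therefore bounded above by $\min(sx,1) = f_{\v g}(x)$, giving $\v g\succ_\beta \v t^{(1,\pi)}(\v g)$. When $\gamma_{\pi(d)}<\gamma_d$ the last component is negative and the unprojected curve overshoots height one; capping it at one via $\hat P$ returns precisely $f_{\v g}$, so $\hat P\v t^{(1,\pi)}(\v g)$ is thermomajorisation-equivalent to $\v g$. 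In either case the projected vertex lies in $\mathcal{T}_+(\v g)$, and since the extreme points of $T_1(\v g)\cap\Delta_d$ are exactly these projected vectors while $\mathcal{T}_+(\v g)$ is convex, one obtains $T_1(\v g)\cap\Delta_d = \mathcal{T}_+(\v g)$, exactly as in the sharp-state argument. This yields $\mathcal{C}_+(\v g)=\emptyset$ and hence $\mathcal{V}_+(\v g)=0$.

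The main obstacle I anticipate is the bookkeeping around the projection $\hat P$ in the final step: the unprojected tangent vectors $\v t^{(1,\pi)}(\v g)$ can be quasi-probabilities lying outside $\Delta_d$, so one must argue carefully that intersecting their convex hull with $\Delta_d$ produces no extreme points beyond the projected images $\hat P\v t^{(1,\pi)}(\v g)$ -- the same technical point underlying Corollary~\ref{Cor:catalysable-future-sharp}. Making this precise, rather than reasoning loosely at the level of curves, is where the real care is needed; everything else reduces to the slope computations above.
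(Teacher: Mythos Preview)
Your argument is correct and is essentially the approach the paper \emph{intends}, though the paper's printed proof carries what looks like an index slip: it works with the last tangent vectors $\v t^{(d,\pi)}(\v g)$ and claims each is either $\v g$ or an extreme point of $\mathcal{T}_+(\v g)$, concluding $T_d(\v g)=\mathcal{T}_+(\v g)$. But your own computation shows that $s_d(\v g)=0$ forces $\v t^{(d,\pi)}(\v g)=\v v_{\pi(1)}$, the sharp states, so that $T_d(\v g)=\Delta_d$; these sharp states are generically \emph{not} in $\mathcal{T}_+(\v g)$. Replacing the index $d$ by $1$ throughout the paper's argument --- i.e.\ showing $T_1(\v g)\cap\Delta_d=\mathcal{T}_+(\v g)$ --- is exactly what you do, and it is the correct statement. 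You also dispense with the paper's energy-ordering hypothesis by treating both sign cases of the last entry of $\v t^{(1,\pi)}$ directly, which is a mild generalisation.

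Two small points to tighten. In the quasi-probability case the projected curve is not ``precisely $f_{\v g}$'' but the chord interpolation of $\min(sx,1)$ along the $\pi$-elbows; it sits \emph{at or below} $f_{\v g}$, which is really what you need (it is in fact the extreme point of $\mathcal{T}_+(\v g)$ with $\beta$-order $\pi$). And the assertion that the extreme points of $T_1(\v g)\cap\Delta_d$ are exactly the $\hat P\v t^{(1,\pi)}$ is not automatic for a polytope--simplex intersection. Both issues vanish if you bypass the vertex-by-vertex analysis: every $\v q\in T_1(\v g)\cap\Delta_d$ satisfies $s_1(\v q)\le s$ by construction of the first tangent vectors, and then concavity gives $f_{\v q}(x)\le s_1(\v q)\,x\le\min(sx,1)=f_{\v g}(x)$, so $\v q\in\mathcal{T}_+(\v g)$ directly.
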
 
\begin{proof}
Given the assumption that if $i>j\Rightarrow E_i<E_j$, we conclude that  $\gamma_{i}\geq \gamma_j$. From this, we observe that that either $\v{t}^{(d,\v \pi)}(\v{g}) = \v{g}$ or $\v{t}^{(d,\v \pi)}(\v{g})$ is an extreme point of the future thermal cone $\mathcal{T}_+(\v{g})$. Therefore, we find that $\mathcal{T}_+(\v{g}) = T_d(\v{g})$, leading to the conclusion that $T_1(\v{g})\cap T_d(\v{g}) \setminus \mathcal{T}_+(\v{g}) = \emptyset$.
\end{proof}

The analysis and discussion of the volumes has proceeded without explicit computation. Several algorithms are known for calculating the volumes of convex polytopes~\cite{iwata1962, buller1998}. More specifically, it is observed that the catalysable future is given by the intersection of two simplices, which reduces the complexity of volume calculation thanks to the reduction of the number of vertices (see Lemma~\ref{lem:less_points}) in comparison to the full thermal future cone. 

One might ask about the states in $\Delta_d$ that possess the highest volumes of catalysable future. Answering this question generally is highly non-trivial. The challenge arises because the volumes depend significantly on the initial state, and computing them for higher dimensions is not straightforward. However, by resorting to low-dimensional systems, we can still gain some insights into how catalysability behaves as a function of inverse temperature. In this regard, we illustrate the profile of the catalysable future's volume as a function of $\beta$ in Fig.~\ref{F:iso-volumes} for a three-level system. It is observed that states with the highest volumes are concentrated near the edges of the probability simplex, and the volume of the catalysable future tends to be higher for states that are not of full rank.
Interestingly, in the extreme cases of $\beta = 0$ and $\beta \to \infty$, catalysis has no effect. 


\section{Applications}
\subsection{Entanglement generation under thermal operations with catalyst \label{Sec:applications-1}} 

\begin{figure*}
    \centering
    \includegraphics{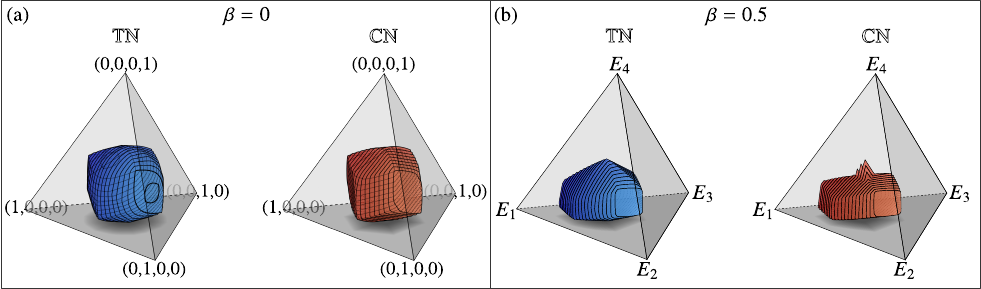}
    \caption{\textbf{Thermally non-entanglable states with (and without) catalyst}. 
    Two-qubit states that cannot become entangled under thermal operations are illustrated for two scenarios: without a catalyst (shown in the left panel in blue) and with a catalyst (shown in the right panel in red), for (a) $\beta =0$ and (b) $\beta = 0.5$}
    \label{F:entanglable-catalyst}
\end{figure*}

As a direct application of our results, we demonstrate the catalytic advantages of generating entanglement from separable bipartite states under thermal operations. Throughout this section, we will focus on a composite system consisting in two 2-level systems with identical energy levels $H = \operatorname{diag}(0, \delta)$. For this reason, the total Hamiltonian of the system, $H_{AB} = H\otimes\mathbbm{1} + \mathbbm{1}\otimes H$ has a degenerate energy subspace corresponding to $E_{01} = E_{10} = \delta$, and without loss of generality we may set $\delta = 1$. Within this subspace any unitary operation remains Gibbs-preserving, which allows for generation of entanglement while simultaneously respecting the laws of thermodynamics.

Recently, the necessary and sufficient conditions for producing bipartite qubit entanglement from an initially separable states via thermal operations were addressed in Ref.~\cite{de2024ent}. States that cannot get entangled under thermal operations are said to belong to the \emph{thermally non-entanglable set}, mathematically denoted by $\mathbb{TN}$. Conversely, states that can get entangled under thermal operations lie in the \emph{entanglable set} $\mathbb{TE}$.
By using the results on the catalysable future region $\mathcal{C}_+$, we may now answer the natural question whether borrowing catalyst allow more states to get entangled. Defining the \emph{catalytically non-entanglable set} $\mathbb{CN}$ as the set of states that cannot be entangled even with the aid of catalysis, the question can be reframed -- \emph{is the set $\mathbb{CN}$ smaller than $\mathbb{TN}$, or in terms of volumes, $\mathcal{V}(\mathbb{CN})/\mathcal{V}(\mathbb{TN})\overset{?}{<} 1$ for a given $\beta$?}

To begin with, we focus on the infinite temperature case $\beta = 0$. Remarkably, in this regime, the thermally non-entanglable set $\mathbb{TN}$ reaches its maximum volume and can be analytically constructed. In order to investigate the catalytically non-entanglable set $\mathbb{CN}$, we start by using Theorem~\ref{Thm:catalysable-set}, where we construct the catalysable future for two qubits and verify whether it lies within the thermally non-entanglable set. Next, by employing a numerical bisection-based algorithm inspired by vacuum-packaging, we generate an approximation of the boundary $\partial\mathbb{CN}$. This allows us to estimate the ratio of the volumes $\mathcal{V}(\mathbb{CN})/\mathcal{V}(\mathbb{TN})\approx 0.88$. In Fig.~\hyperref[F:entanglable-catalyst]{\ref{F:entanglable-catalyst}a}, we show the thermally non-entanglable set without (left panel) and with (right panel) a catalyst.

It is interesting to note that for $\beta > 0$, although the set $\mathbb{TN}$ remains convex as verified by numerical approximations, the set $\mathbb{CN}$ becomes non-convex~(see Fig.~\hyperref[F:entanglable-catalyst]{\ref{F:entanglable-catalyst}b} for an example considering $\beta =0.5$). In fact, by reverse-engineering the numerical approximations, we find that $\mathbb{CN}$ is a set sum of a certain convex set $\mathbb{CN}_0$ and a future of a certain distinguished subspace-thermalised state, $\mathcal{\v{p}_*}$. First, concerning the distinguished state, we have

\begin{prop}[Partially thermalised state \& non-entanglability] \label{prop:peculiar_point}
    The future thermal cone $\mathcal{T}_+(\v{p}^*)$ of the partially thermalised state
    \begin{equation}
        \v{p}^* = \frac{1}{4+2\cosh(\beta )}\qty(e^{\beta },\,1,\,1,\,2+e^{-\beta })
    \end{equation}
    is fully contained in the catalytically non-entanglable set, $\mathcal{T}_+(\v{p}^*)\subset\mathbb{CN}$.
\end{prop}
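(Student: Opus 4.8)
The plan is to reduce the statement to a single assertion about $\v{p}^*$ and then verify a polytope containment against the characterisation of $\mathbb{TN}$. First I would establish a monotonicity principle for catalytic non-entanglability: catalytic reachability composes with thermomajorisation, because $\v{p}^*\succ_\beta\v{q}$ implies $\v{p}^*\otimes\v{c}\succ_\beta\v{q}\otimes\v{c}$ for \emph{every} catalyst $\v{c}$, so by transitivity any state catalytically reachable from a $\v{q}\in\mathcal{T}_+(\v{p}^*)$ is already catalytically reachable from $\v{p}^*$. Hence the full catalytic future of $\v{q}$ is contained in that of $\v{p}^*$, and to prove $\mathcal{T}_+(\v{p}^*)\subset\mathbb{CN}$ it suffices to show the single inclusion $\v{p}^*\in\mathbb{CN}$, i.e.\ that no state catalytically reachable from $\v{p}^*$ lies in $\mathbb{TE}$. (The same composition argument shows $\mathbb{TN}$ is itself closed under $\mathcal{T}_+$, so the ordinary-future part is automatic once $\v{p}^*\in\mathbb{TN}$.)

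Next I would invoke Theorem~\ref{Thm:catalysable-set}. Since every tangent vector thermomajorises $\v{p}^*$, each $T_i(\v{p}^*)=\operatorname{conv}[\{\v{t}^{(i,\pi)}(\v{p}^*)\}_\pi]$ is future-closed and contains $\mathcal{T}_+(\v{p}^*)$ by Eq.~\eqref{Eq:t-vectors}, so the whole catalytic future of $\v{p}^*$ (true cone together with $\mathcal{T}_+$) is bounded by the outer region $T_1(\v{p}^*)\cap T_d(\v{p}^*)$. The goal therefore becomes the purely geometric containment $T_1(\v{p}^*)\cap T_d(\v{p}^*)\subseteq\mathbb{TN}$. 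To make this tractable I would exploit the defining structure of $\v{p}^*$: it is thermal on the subspace $\{00,01,10\}$, so its slope vector has three coincident entries $s_2(\v{p}^*)=s_3(\v{p}^*)=s_4(\v{p}^*)$ and a single larger slope $s_1(\v{p}^*)$ carried by the doubly-excited level. This degeneracy collapses much of the permutation dependence in Eqs.~\eqref{eq_thermaltangentvectors}–\eqref{Eq:tangent-vectors-2} and yields a short, explicit list of candidate extreme points for the two simplices $T_1$ and $T_d$.

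I would then combine this with the explicit form of the thermally non-entanglable set $\mathbb{TN}$ established in Ref.~\cite{de2024ent}. Because $\mathbb{TN}$ is convex for all $\beta\geq0$ and $T_1(\v{p}^*)\cap T_d(\v{p}^*)$ is a convex polytope, the containment reduces to the finite check that every extreme point of the intersection satisfies the defining inequalities of $\mathbb{TN}$ (equivalently, the concurrence/coherence condition $|c|\leq\sqrt{q_{00}q_{11}}$ for the reachable populations). Evaluating those inequalities on the tangent-vector vertices, using the explicit populations of $\v{p}^*$, then completes the argument; one expects to find that $\v{p}^*$ is tuned precisely so that the catalysable future is \emph{tangent} to, but never crosses, $\partial\mathbb{TN}$.

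The main obstacle is this final verification. Because the tangency is exact, the $\mathbb{TN}$ inequalities will be saturated at one or more vertices of $T_1\cap T_d$, leaving no slack to absorb crude estimates, and the $\beta$-dependence of both the tangent vectors and of $\partial\mathbb{TN}$ must be carried through in closed form. Care is also needed in identifying which vertices actually bound the intersection $T_1\cap T_d$: these arise from intersecting two permutation-indexed simplices, so rather than enumerating all facets I would use the necessary condition of Lemma~\ref{lem:catalysable_future_cond} to restrict attention to the relevant $\beta$-orders, and then check saturation only on that reduced vertex set.
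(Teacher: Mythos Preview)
Your overall architecture is sound and parallels the paper's, but you are missing the one structural observation that collapses the whole computation: the slope degeneracy you note, $s_2(\v p^*)=s_3(\v p^*)=s_4(\v p^*)$, means the thermomajorisation curve $f_{\v p^*}$ has only \emph{two} linear pieces. By concavity, any $\v q$ with $s_1(\v q)<s_1(\v p^*)$ and $s_4(\v q)>s_4(\v p^*)$ automatically satisfies $f_{\v q}(x)\le\min\!\big(s_1(\v p^*)\,x,\;1-s_4(\v p^*)(1-x)\big)=f_{\v p^*}(x)$, so $\v q\in\mathcal T_+(\v p^*)$. Hence $\mathcal C_+(\v p^*)=\emptyset$ and $T_1(\v p^*)\cap T_d(\v p^*)=\mathcal T_+(\v p^*)$ exactly; there is no intersection of two simplices to compute, and the monotonicity step for $\v q\prec_\beta\v p^*$ is immediate at the level of the bound $\mathcal C_+$ (not just the true catalytic cone), which is what the definition of $\mathbb{CN}$ actually requires.

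The second shortcut you do not exploit is the criterion from Ref.~\cite{de2024ent}: membership in $\mathbb{TN}$ is decided by a \emph{single} extreme point of $\mathcal T_+$, the one with $\beta$-order $(2,1,3,4)$. The paper therefore does not check all vertices against a convex $\mathbb{TN}$ (which in the text is only verified numerically); it reverse-engineers the claim by writing the distinguished extreme point $\v p^{**}$ as a subspace-thermalised state, solving $4p^{**}_1p^{**}_4=(p^{**}_2-p^{**}_3)^2$ for the mixing parameter, and then reading off $\v p^*$ as the subspace-thermalised state that tightly thermomajorises $\v p^{**}$. Your plan would arrive at the same single saturated inequality after enumerating vertices, but the paper's route gets there in one algebraic step and, as a bonus, explains \emph{why} $\v p^*$ has the specific form stated in the proposition.
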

We give the entire proof of this fact in Appendix \ref{app:peculiar_point}. Furthermore, based on numerics we put forward the following 
\begin{conj}[Non-entangable decomposition]
    The catalytically non-entanglable set can be decomposed into
    \begin{equation}
        \mathbb{CT} = \mathbb{CT}_0\cup\mathcal{T}_+(\v{p}^*),
    \end{equation}
    where
    \begin{equation}
        \mathbb{CT}_0 = \mathbb{CT} \setminus\qty[\mathcal{T}_+(\v{p}^*) \setminus\mathcal{T}_+(\v{p}^{**})],
    \end{equation}
    with $\v{p}^{**} = \frac{1}{4+2\cosh(\beta)}\qty(e^{\beta },\,3,\,1,\,e^{-\beta })$.
\end{conj}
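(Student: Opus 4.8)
The plan is to reduce the inclusion $\mathcal{T}_+(\v{p}^*)\subseteq\mathbb{CN}$ to the single containment $\mathcal{CT}_+(\v{p}^*)\subseteq\mathbb{TN}$ of the full catalysable future of the tip state $\v{p}^*$. Two structural facts drive the reduction. First, for this two-qubit system a thermal operation factors into population transport, constrained by thermomajorisation, followed by a Gibbs-preserving unitary acting within the degenerate subspace $E_{01}=E_{10}$; hence generating entanglement from a diagonal state requires the (catalytic) population dynamics to first reach a population lying in the entanglable set $\mathbb{TE}=\Delta_4\setminus\mathbb{TN}$. Second, the populations achievable from any $\v{q}\in\mathcal{T}_+(\v{p}^*)$ by a catalytic thermal operation are also achievable from $\v{p}^*$: since $\v{p}^*\succ_\beta\v{q}$, the free transformation $\v{p}^*\to\v{q}$ may be prepended to any protocol acting on $\v{q}$. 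By Theorem~\ref{Thm:catalysable-set} these achievable populations are contained in $\mathcal{CT}_+(\v{p}^*)=T_1(\v{p}^*)\cap T_d(\v{p}^*)$. Thus if $\mathcal{CT}_+(\v{p}^*)\subseteq\mathbb{TN}$, then no state in the future of $\v{p}^*$ can reach $\mathbb{TE}$, so every such state lies in $\mathbb{CN}$, proving the proposition.

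First I would compute the outer bound $\mathcal{CT}_+(\v{p}^*)$ explicitly. The defining feature of $\v{p}^*$ is that its slope vector has three coincident entries: evaluating $s_i(\v{p}^*)=p^*_i/\gamma_i$ against the Gibbs weights of the spectrum $(0,1,1,2)$ yields $s_1=s_2=s_3$ with only the top level elevated --- this is the precise sense in which $\v{p}^*$ is ``partially thermalised''. This coincidence of slopes collapses the families $\{\v{t}^{(1,\pi)}(\v{p}^*)\}_\pi$ and $\{\v{t}^{(d,\pi)}(\v{p}^*)\}_\pi$ from Eq.~\eqref{eq_thermaltangentvectors} to only a handful of distinct tangent vectors, so that $T_1(\v{p}^*)$, $T_d(\v{p}^*)$, and therefore their intersection, are low-vertex polytopes whose extreme points I would generate directly from Corollary~\ref{Cor:extreme-points-catalysable}. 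One technical point deserves care: the tangent-vector formalism was stated for non-degenerate spectra, whereas here $E_{01}=E_{10}$; I would justify applying Eq.~\eqref{eq_thermaltangentvectors} by treating the doubly degenerate level as a single thermomajorisation block and checking that the elbow construction is unaffected.

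Next I would import the explicit description of the thermally non-entanglable set $\mathbb{TN}$ from Ref.~\cite{de2024ent} and verify the containment. Because $\mathbb{TN}$ is convex (as noted in the main text and confirmed numerically), it suffices to show that every extreme point of the polytope $\mathcal{CT}_+(\v{p}^*)$ produced above satisfies the non-entanglability inequalities defining $\mathbb{TN}$ --- a finite check. The specific rational form $\v{p}^*=\frac{1}{4+2\cosh\beta}(e^\beta,1,1,2+e^{-\beta})$ strongly suggests it is chosen to be the extremal tip for which $\mathcal{CT}_+(\v{p}^*)$ is inscribed in $\mathbb{TN}$, so I would expect at least one vertex of $\mathcal{CT}_+(\v{p}^*)$ to lie exactly on a facet of $\partial\mathbb{TN}$; confirming this saturation uniformly in $\beta$ is what pins down the coefficients.

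The hard part will be this last step: translating the boundary of $\mathbb{TN}$ from the coordinates of \cite{de2024ent} into the $\beta$-ordered population coordinates used here, and showing for all $\beta>0$ simultaneously that each vertex of $T_1(\v{p}^*)\cap T_d(\v{p}^*)$ obeys the defining inequalities. The degeneracy of the spectrum and the joint $\beta$-dependence of both the tangent vectors and the facets of $\mathbb{TN}$ make this a genuinely two-parameter verification. I also note that, because $\mathcal{CT}_+$ is only an outer bound on the achievable populations, this strategy establishes the inclusion $\mathcal{T}_+(\v{p}^*)\subseteq\mathbb{CN}$ but cannot by itself resolve the exact boundary of $\mathbb{CN}$ --- precisely the finer structure that the subsequent decomposition is left to address as a conjecture.
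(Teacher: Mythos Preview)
The statement you were given is a \emph{conjecture}, and the paper does not prove it: immediately after stating it the authors write that it ``is based on numerical approximations with high precision, and we believe it to hold true,'' offering only partial numerical support. Your proposal does not prove it either --- you say so yourself in your final paragraph --- because what you actually outline is a route to the \emph{preceding} Proposition~\ref{prop:peculiar_point}, the inclusion $\mathcal{T}_+(\v{p}^*)\subseteq\mathbb{CN}$. The conjecture asserts the finer decomposition $\mathbb{CN}=\mathbb{CN}_0\cup\mathcal{T}_+(\v{p}^*)$ with $\mathbb{CN}_0$ convex, which requires control of the \emph{entire} boundary of $\mathbb{CN}$, not merely that one thermal cone sits inside it. Neither your argument nor anything in the paper supplies that; the paper leaves it open.

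For the Proposition you do address, your plan is close to the paper's (Appendix~\ref{app:peculiar_point}) but slightly less direct. You propose to enumerate the vertices of $\mathcal{CT}_+(\v{p}^*)$ and verify each lies in $\mathbb{TN}$ via convexity. The paper instead invokes a single-point criterion from Ref.~\cite{de2024ent}: a state belongs to $\mathbb{TN}$ iff its future-cone extreme point with $\beta$-order $(2,1,3,4)$ --- which is precisely $\v{p}^{**}$ --- fails the unitary-entanglability inequality $4p_1p_4<(p_2-p_3)^2$; solving for saturation fixes the coefficients of $\v{p}^*$ in one stroke. Both arguments rest on the structural observation you correctly identify, that the three coincident slopes of $\v{p}^*$ force $\mathcal{C}_+(\v{p}^*)=\emptyset$ (hence $\mathcal{CT}_+(\v{p}^*)=\mathcal{T}_+(\v{p}^*)$) and, for any $\v{q}\prec_\beta\v{p}^*$, keep $\mathcal{C}_+(\v{q})$ inside $\mathcal{T}_+(\v{p}^*)$.
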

The conjecture is based on numerical approximations with high precision, and we believe it to hold true. Furthermore, it can be partially supported by noting that for a ball $\mathcal{B}(\v{p}^*,\,\delta)$ of radius $\delta$ around the state $\v{p}^*$ any state $\v{q}\in\qty[\mathcal{B}(\v{p}^*,\,\delta)\setminus\mathcal{T}_+(\v{p}^*)]$ does not belong to $\mathbb{CN}$ by simple arguments of either not belonging to $\mathbb{TN}$ or by the fact that $\mathcal{C}_+\qty(\v{q})\setminus\mathbb{TN}\neq \emptyset$.

Using the introduced numerical approximations we derived the dependence of the volumes $\mathcal{V}(\mathbb{TN})$ and $\mathcal{V}(\mathbb{CN})$ for both thermal and catalytically non-entanglable sets and their ratio as a function of $ \beta$, demonstrated in Fig~\ref{fig:volume_ratio}. 

\begin{figure}[t]
    \centering
    \includegraphics{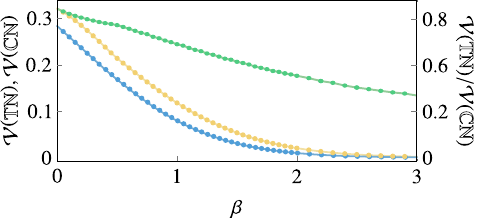}
    \caption{\textbf{Thermally non-entanglable volume with(out) catalyst.} Numerical estimates for the volumes $\mathcal{V}(\mathbb{CN})$ (blue line), $\mathcal{V}(\mathbb{TN})$ (orange line) and their ratio (green line), with the axis on the left for volumes and on the right for the ration. The character of the curves shows that, even though the volumes of both thermally and catalytically non-entanglable sets both go to zero together with $\beta$ going to infinity, the ratio of the volumes also diminishes, pointing to an increasing catalytic advantage in entanglement generation.}
    \label{fig:volume_ratio}
\end{figure}

\subsection{Optimal cooling with a catalyst}\label{sec:cool}

As a second application of our findings, we explore the task of cooling a three-level system with the aid of a catalyst. Our setup comprises a three-level system, a thermal environment, and a catalyst. We assume that the composite system forms a closed system and evolves unitarily under an energy-preserving interaction (as discussed in Sec. \ref{Sec:framework}). Consequently, the energy exchange in the process is account for as heat, which can be determined by calculating the energy difference between the initial and final states after the interaction. Thus, our figure of merit is defined as follows
\begin{align}
    Q_c(\rho) = \underset{\sigma \, \in\,  \T_{+}(\rho)}{\min} \:\: &\tr[H(\sigma-\rho)].
\end{align}
where $\rho$ and $\sigma$ are energy-incoherent states. In other words, we ask for the optimal thermal operation which cools down the system $Q_c(\rho) < 0$, while heating up the environment. As discussed in Sec.~\ref{Sec:framework}, the set of achievable states under thermal operation is convex and one can prove that the optimal cooling (when it happens) occurs when the initial state is mapped to one of the extreme points of its future thermal cone. Now one can ask, what is the optimal cooling that can be achieved when we allow a strict catalyst? This question can be answered by employing Theorem~\ref{Thm:catalysable-set} and evaluating the extreme point of the catalysable set. Since answering such a question in a general manner is highly non-trivial due to the state-dependent character of the problem, we can assume that we have access to the initial state preparation of the system and use our results to find the optimal cooling.

Consider, for instance, an initial state denoted by $\v{p} = (0.1, 0.2, 0.7)$, characterised by a Hamiltonian with an equidistant spectrum: $E_0 = 0$, $E_1 = 1$, and $E_2 = 2$. Additionally, let us assume that the system can interact with a thermal environment prepared at an inverse temperature of $\beta = 0.2$. This state has $\beta$-ordering $\v{\pi}_{\v{p}} = (3,2,1)$. To effectively cool the system, the optimal strategy is to send it to the extreme point of its future thermal cone with $\beta$-ordering $\v{\pi}^{\star} = (1,2,3)$. Consequently, the initial state $\v{p}$ transforms to $\v{p}^{\v{\pi}^{\star}} \approx (0.78, 0.15, 0.07)$. This transformation results in a heat exchange of approximately $Q_c \approx -1.3135$, effectively cooling the system down.

Allowing a strict catalyst to aid the process, as given in Theorem~\ref{Thm:catalysable-set}, we observe a shift in the extreme point with $\beta$-ordering $\v{\pi}^{\star} = (1,2,3)$. This shift allows for a more ``substantial cooling''. According to Corollary~\ref{Cor:extreme-points-catalysable}, our results suggest that the initial state $\v{p}$ could now be transformed to $\v{p}^{\star} \approx (0.85, 0.08, 0.07)$ (see Fig.~\ref{fig:cooling-cone}). In this process, the heat exchange is approximately $Q^{\star}_{c} \approx -1.38$, surpassing $|Q_c|$ in magnitude. This simple example illustrates the fundamental limits for cooling a quantum system under thermal operations.
\begin{figure}[t]
    \centering
    \includegraphics{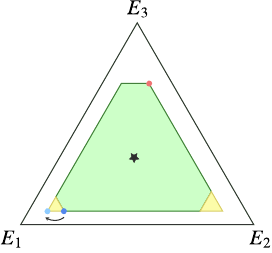}
    \caption{\textbf{Catalytic cooling}. For a three-level system with a population given by $\v{p} = (0.1, 0.2, 0.7)$, represented by a red dot, and energy spectrum $E_1 = 0, E_2 = 1, E_3 = 2$ for $\beta = 0$, the optimal cooling (without a catalyst) consists of mapping $\v{p}$ to the extreme point of its future thermal cone with $\beta$-ordering (1, 2, 3) (dark blue point). When a catalyst is allowed, its future thermal cone expands, and the optimal cooling is achieved by sending $\v{p}$ to the extreme point of the catalysable set with the same $\beta$-ordering (light blue dot). The arrow from the dark blue point simply indicates that a catalyst allows for bringing the initial state closer to the ground state compared to the case without a catalyst.}
    \label{fig:cooling-cone}
\end{figure}

A more general statement can be made by considering a $d$-dimensional system described by an equidistant Hamiltonian $H=\sum_{n=0}^{d-1}n\ketbra{n+1}{n+1}$, with $n\in \qty{0,d-1}$, in contact with a thermal bath at an inverse temperature $\beta$. Assuming that the system is prepared in a thermal state at some hotter inverse temperature $\beta_{\text{h}}<\beta$, its $\beta$-ordering is $(d,d-1,\hdots,1)$ and its slope vector has entries given by $s(\v{p})i = \exp[\qty(\beta - \beta_{\text{h}})E_{d-i}]/Z_\text{h} Z$, where $Z_{\text{h}} = \sum_{i=0}^{d-1}e^{-\beta_{\text{h}} E_i}$. The optimal cooling is generically achieved by transforming the state $\v{p}$ into a state $\v{q}$ with $\beta$-order $(1,2,\hdots,d)$. Observing that for each $j \in \{0, d-1\}$, there exists an $m_j \in \{0, d-1\}$ such that the inequality
\begin{equation}\label{eq:lowerupper_gamma_bounds}
\sum_{i=1}^{m_j} \gamma_{d-i+1} \leq \sum_{i=1}^j\gamma_i \leq \sum_{i=1}^{m_j+1} \gamma_{d-i+1}
\end{equation}
holds, we can state that $\sum_{i=1}^{m_1} p_{d-i+1} \leq q_1 \leq \sum_{i=1}^{m_1+1} p_{d-i+1}$ is valid for the population of $\v{q}$ corresponding to the ground state. On the other hand, we can consider the catalytic transformation $\v{p}\otimes\v{c}\rightarrow\v{q}_{c}\otimes\v{c}$, where $\v{q}_{c}$ represents the optimal target state achieved with the catalyst. Taking into account Theorem~\ref{Thm:catalysable-set}, we find that 
\begin{equation}
    (\v{q}_c)_1 \leq s(\v{p})_1 \gamma_1 = \frac{1}{Z_\text{h}} e^{\qty(\beta - \beta_{\text{h}})E_d - \beta E_1}.
\end{equation}
Given our assumption that the system Hamiltonian is described by an equidistant spectrum, we ensure that $(\v{q}_c)_1\geq q_1$ when the following inequality holds:
\begin{equation}\label{Eq:critical-cooling-1}
    \frac{1-e^{-\beta d}}{1-e^{-\beta}}\frac{1-e^{-(m_j+1)\beta_\text{h}}}{1 - e^{-\beta_\text{h}}} \leq  e^{(\beta-\beta_\text{h})(d-1)}.
\end{equation}
While there is no closed-form solution to the above inequality, one can still numerically evaluate critical temperatures $\beta^{\downarrow}_{\text{h}}$, for which the catalytic advantage is guaranteed. Conversely, if
\begin{equation}\label{Eq:critical-cooling-2}
    e^{(\beta-\beta_\text{h})(d-1)}\leq \frac{1-e^{-\beta d}}{1-e^{-\beta}}\frac{1-e^{-m_j\beta_\text{h}}}{1 - e^{-\beta_\text{h}}},
\end{equation}
we obtain a no-go result indicating that no catalytic advantage for cooling starting from a hot Gibbs state can be achieved above a certain inverse temperature $\beta^{\uparrow}_{\text{h}}$.

For very high temperatures, i.e., when $
\beta_\text{h}<\beta \ll 1$, we can expand both $\beta$ and $\beta_\text{h}$ in Eqs.~\eqref{Eq:critical-cooling-1}-\eqref{Eq:critical-cooling-2} to linear order to obtain approximate critical temperatures as
\begin{equation}
    \begin{aligned}
        \beta^{\downarrow}_{\text{h}} & = \frac{d \qty[3 \beta  (d-1)-2] (m_j+1)+2}{2 d-m_j-2}, \\
        \beta^{\uparrow}_{\text{h}} & = \frac{d \qty[3 \beta  (d-1)-2] m_j+2}{2 d-m_j-1}.
    \end{aligned}
\end{equation}
Consequently, catalytic advantages in cooling the initial system are guaranteed for $\beta_h \leq \beta^{\downarrow}{\text{h}}$. Conversely, for $\beta_h \leq \beta^{\uparrow}{\text{h}}$, we encounter a no-go result indicating that such cooling cannot occur.

Finally, let us mention that there have been several studies on the problem of cooling, either using the resource-theoretic approach to thermodynamics or under a different umbrella of quantum thermodynamics~\cite{Masanes2017, Scharlau2018, Clivaz2019, Taranto2023}. In particular, Ref.~\cite{Wilming2017} uses the approach of catalytic thermal operations for the task of cooling. However, their focus is on providing sufficient and necessary conditions on the amount of resources needed to cool a system close to its ground state. Additionally, the catalyst is allowed to return $\epsilon$-away from its initial state. Therefore, the results presented in this section complement the previous ones by finding the optimal bounds on the heat exchange, which leads to cooling the initial system.  


\section{Summary and Discussion} \label{sec:outlook}

In this paper, we explore the fundamental limits on state achievability with the aid of a strict catalyst under a thermal process. We propose an approach to address this question by constructing the sets of states that can be achieved from a given energy-incoherent state under catalytic thermal operations, as well as the sets of states that can, through catalytic thermal operations, achieve the initial state. These regions naturally highlight the advantages of using a strict catalyst in general thermodynamic processes, whose only assumption is energy conservation. Our construction enables us to characterise the catalysable past and future for main systems of arbitrary dimensions. This analytical approach is based solely on thermomajorisation relations and calculation of slopes of the resulting thermomajorisation curves, providing a way of verifying non-catalysability in finite number of steps. Furthermore, we established general bounds on the catalyst dimension for thermal operations and demonstrated that catalysis is effective under thermal operations even for systems of dimensionality as small as $d=3$. The catalytic advantages were quantified by the volume of the catalysable future, with a detailed discussion on its behaviour as a function of temperature. Moreover, we applied our findings to address two practical questions: the generation of entanglement under thermal operations and the optimal cooling of a quantum system. In the first application, we showed that thermal processes, which were once incapable of generating entanglement, allows its generation when a strict catalyst is employed. While in the second, we discussed the ultimate limits of cooling aided by catalysts.

To address this problem, we overcome several challenges. Firstly, there is no efficient method for characterizing the set of allowed state transformations under catalytic thermal operations. A comprehensive analysis for inexact catalysis (where the catalyst is allowed to be returned with some minor error) was performed in~\cite{Lipka-Bartosik2021}; while, in Ref.~\cite{son2022catalysis}, the authors solved the characterization problem for three-dimensional systems and for a subset of initial states of generic dimensions in the context of catalysis in elementary thermal operations. Secondly, keeping track of the the set of achievable state after ``decoupling'' the catalyst  without specifying the precise form of the catalyst state required by the transformation is a highly difficult mathematical problem. Approaching from the perspective of lattice theory, this problem can be framed by asking \emph{(i)} how to characterise the higher-dimensional lattice for composite systems (main system plus catalyst) and \emph{(ii)} how to project back to the native lattice of the main system while preserving the catalytic effects in the native lattice of the initial state. For small-dimensional systems (such as $d=3$), the entire convex polytope $\v{p} \otimes \v{c}$ can be constructed, and intersections between such a polytope and plane equations enforcing the catalyst recovery condition can be numerically identified. However, this method does not scale well and depends heavily on the initial state of the system and the catalyst. In contrast, our approach requires only constructing a set of quasi-probability vectors or determining the extremal points through thermomajorisation relations.

There are many possibilities for extending our results. For instance, they can be leveraged to study catalysis within the framework of elementary thermal operations. Note that the characterization of the catalysable past and future relies entirely on the concept of tangent vectors, which tightly thermomajorise $\v p$. On the other hand, the extreme points of set of states achievable under elementary thermal operations are known to be $\beta$-swaps that tightly thermomajorise the initial distribution. Consequently, one could explore how to adapt the technical aspects of our findings to this specific context. In a similar spirit, we could relax the condition requiring the catalyst to be returned unperturbed and allow for some small error. A natural quantifier for this error is the trace distance, and the problem could be reframed by asking for the set of states achievable with a given catalyst up to this specified amount of error~\cite{Lipka-Bartosik2021}. 

It is tempting to suggest that the results obtained in this work, as an extension of prior results of Grabowecky and Gour~\cite{grabowecky2018bounds}, may point to even more general statements. Noting that majorisation and thermomajorisation curves are, in fact, totally ordered set of monotones, one may imagine the following statement: given two states, $X$ and $Y$, are such that the first and last monotones for $X$ are greater than that of $Y$, then the transformation is catalysable, thus providing a method for constructing the catalysable past and future for more general resource theories. This, however, is beyond the scope of this paper.

Finally, our results imply that one can potentially (for a fixed finite dimension) replace an infinity family of second laws by a finite set of conditions. As it can be checked, within the catalysable region the second laws are fulfilled and as soon as we get out of it, they are broken. So, an ambitious question is whether our results can be reformulated (or adapted), so that given $\v p$ and $\v q$, one can check a finite number of conditions to verify whether $\v p$ can catalytically thermomajorises $\v q$. Note that, our results shows that for every $\v p$, we have a method to calculate the extended future thermal cone. For each $\beta$-order, this cone is defined by an extreme point of the catalysable set. While membership in this region does not guarantee the existence of a catalyst, non-membership guarantees its non-existence.

\begin{acknowledgments}
We thank Karol Życzkowski for asking a stimulating question at the beginning of the authors' PhD, which led to this work, and for his fruitful comments on the first version of this manuscript. We also thank Kamil Korzekwa for valuable discussions that significantly contributed to the execution of this project, as well as for his insightful comments during the final stages of preparation of this manuscript.

We acknowledge the financial support from Danish National Research Foundation's bigQ grant (DNRF 142) is gratefully acknowledged. JCz acknowledges financial support by NCN PRELUDIUM BIS no. DEC-2019/35/O/ST2/01049 and ``Research support module'' as part of the ``Excellence Initiative -- Research University'' program at the Jagiellonian University in Kraków. AOJ acknowledges financial support from VILLUM FONDEN through a research grant (40864) and EU Horizon Europe project QSNP (grant agreement no. 101114043).
\end{acknowledgments}

\appendix

\section{Reframing entanglement catalysis bounds for thermal operations}

We start by revisiting the results presented in Ref.~\cite{grabowecky2018bounds} and adapt them to the context of thermal operations at the infinite temperature limit. First, we establish a simple necessary condition: For a pair of vectors $\v{p}$ and $\v{q}$, a catalyst $\v{r}$ such that $\v{p} \otimes \v{r} \succ \v{q} \otimes \v{r}$ exists only if $p^\downarrow_1 < q^\downarrow_1$ and $p^\downarrow_d > q^\downarrow_d$. Based on this assumption, we then restate the main theorem from the aforementioned work. 
\begin{restatable}[Theorem 1 of Ref.~\cite{grabowecky2018bounds}]{thm}{thmCatInfty}\label{thm:cat_infty}
    Consider two states described by population vectors $\v{p},\,\v{q}\in\Delta_d$ which are incomparable, $\v p\perp\v q$ (ie. neither $\v  p\prec\v q$ nor $\v q \prec \v p$) and $p^\downarrow_1 > q^\downarrow_1$ and $p^\downarrow_d<q^\downarrow_d$. We define a set of indices $\mathcal{L}$ as:
    \begin{equation}
        \mathcal{L} = \qty{l\in\qty{1,2,...,d}\mid \sum_{i=1}^l p^\downarrow_i - q^\downarrow_i<0}
    \end{equation}
    and take $m = \min(\mathcal{L})$ and $n = \max(\mathcal{L})$. The catalyst $\v{r}\in\Delta_k$ can catalyse the transformation, ie. $\v p\otimes\v r\succ\v q \otimes \v r$, if and only if its dimensionality satisfies
    \begin{align}
        k & > k^* = \frac{\log b}{\log a} + 1 
    \end{align}
    with the coefficients $a, b$ defined as 
    \begin{align}
        a & = \min\qty(\frac{p^\downarrow_1}{p^\downarrow_m},\frac{p^\downarrow_{n+1}}{p^\downarrow_d}) > \max_{v\in\qty{1,2,\hdots,k-1}}\qty(\frac{r^\downarrow_v}{r^\downarrow_{v+1}})\label{eq:a_inft_ver}\\
        b & = \max_{l\in\mathcal{L}}\qty(\frac{p^\downarrow_l}{p^\downarrow_{l+1}}) < \frac{r^\downarrow_1}{r^\downarrow_k} \label{eq:b_inft_ver}
    \end{align}
    and provide bounds for the entries of the catalyst state $\v{r}$.
\end{restatable}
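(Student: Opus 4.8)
The plan is to translate the trumping condition $\v{p}\otimes\v{r}\succ\v{q}\otimes\v{r}$ entirely into the language of Lorenz (majorisation) curves and then reduce the existence question to an interleaving problem for the sorted product slopes $\{p^\downarrow_i r^\downarrow_j\}$. First I would record the elementary necessary conditions on the extreme components: the largest element of $\v{p}\otimes\v{r}$ is $p^\downarrow_1 r^\downarrow_1$ and the smallest is $p^\downarrow_d r^\downarrow_k$, so comparing the first and last elbows of the two product curves immediately forces $p^\downarrow_1 > q^\downarrow_1$ and $p^\downarrow_d < q^\downarrow_d$; hence the hypotheses are themselves necessary and single out the set $\mathcal{L}$ together with its boundary indices $m,n$ as the locus where catalysis must do its work.

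For the ``only if'' direction I would analyse the slopes of the product Lorenz curve. Sorting the products $p^\downarrow_i r^\downarrow_j$ in non-increasing order yields a concave piecewise-linear curve whose consecutive slope ratios are precisely the consecutive ratios of $\v{p}$ and of $\v{r}$, redistributed across the refined partition. The key claim is that the dip region indexed by $\mathcal{L}$ forces two independent constraints on the catalyst: \emph{(i)} it must be fine enough, $\max_v (r^\downarrow_v/r^\downarrow_{v+1}) < a$, so that no single product slope jumps across the gap controlled by the boundary ratios $p^\downarrow_1/p^\downarrow_m$ and $p^\downarrow_{n+1}/p^\downarrow_d$; and \emph{(ii)} it must be spread enough, $r^\downarrow_1/r^\downarrow_k > b$, so that its dynamic range covers the steepest local dip $\max_{l\in\mathcal{L}} p^\downarrow_l/p^\downarrow_{l+1}$. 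Establishing these two inequalities rigorously, by tracking exactly which products $p^\downarrow_i r^\downarrow_j$ straddle the critical elbows and bounding the associated partial sums from above and below, is the technical heart of the argument and the step I expect to be hardest.

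With \emph{(i)} and \emph{(ii)} in hand the dimension bound is immediate from a telescoping estimate: writing $r^\downarrow_1/r^\downarrow_k = \prod_{v=1}^{k-1}(r^\downarrow_v/r^\downarrow_{v+1})$ and bounding each factor by $a$ gives $b < r^\downarrow_1/r^\downarrow_k < a^{k-1}$, so that $\log b < (k-1)\log a$ and therefore $k > 1 + \log b/\log a = k^*$.

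For the converse ``if'' direction I would exhibit an explicit optimal catalyst. Since $k > k^*$ is equivalent to $b^{1/(k-1)} < a$, I can select a common ratio $\rho$ with $b^{1/(k-1)} < \rho < a$ and set the geometric catalyst $r^\downarrow_v \propto \rho^{-(v-1)}$, suitably normalised. By construction its consecutive ratios all equal $\rho < a$ and its dynamic range is $\rho^{k-1} > b$, so both \emph{(i)} and \emph{(ii)} hold; I would then verify directly that the product Lorenz curve of $\v{p}\otimes\v{r}$ dominates that of $\v{q}\otimes\v{r}$ at every elbow, the uniform geometric spacing guaranteeing that the smoothed curve no longer dips below the target over $\mathcal{L}$. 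Reading off the normalisation of this extremal catalyst finally yields the advertised bounds on the entries of $\v{r}$.
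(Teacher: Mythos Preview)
This theorem is not proved in the paper at all: it is Theorem~1 of Ref.~\cite{grabowecky2018bounds}, merely restated in Appendix~A as the infinite-temperature baseline for the paper's own extension (Theorem~\ref{thm:cat_beta}). The only part of your argument that the paper reproduces is the telescoping step
\[
b < \frac{r^\downarrow_1}{r^\downarrow_k} = \prod_{v=1}^{k-1}\frac{r^\downarrow_v}{r^\downarrow_{v+1}} < a^{k-1},
\]
which it invokes verbatim in Appendix~B to pass from the inequalities on $a,b$ to the dimension bound. Your plan for the ``only if'' direction---tracking which products $p^\downarrow_i r^\downarrow_j$ straddle the critical elbows to force \emph{(i)} and \emph{(ii)}---is the right shape, though the hard analysis you flag is indeed done in \cite{grabowecky2018bounds} rather than here.

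There is, however, a genuine gap in your ``if'' direction. Despite the phrase ``if and only if'' in the restatement, the paper treats these bounds purely as \emph{necessary} conditions: the thermal generalisation in Theorem~\ref{thm:cat_beta} is stated with ``only if'', and immediately after Corollary~\ref{cor:qubit_cat_infty} the authors stress that ``it is only a necessary condition\ldots\ not only the actual catalysis happens for a smaller sub-interval, but may also happen in several disjoint sub-intervals''. Your proposed geometric catalyst $r^\downarrow_v\propto\rho^{-(v-1)}$ with $b^{1/(k-1)}<\rho<a$ satisfies \emph{(i)} and \emph{(ii)} by construction, but the verification step ``the uniform geometric spacing guaranteeing that the smoothed curve no longer dips below the target over $\mathcal{L}$'' does not go through in general: conditions \emph{(i)} and \emph{(ii)} do not characterise trumping, and there exist pairs $\v{p},\v{q}$ with $k>k^*$ for which no $k$-dimensional catalyst exists. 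So the sufficiency argument, and with it the explicit catalyst construction you planned to use to extract entrywise bounds, would fail as written.
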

Note that the bound depends explicitly only on the entries of the initial state $\v{p}$, with the target state $\v{q}$ entering the picture indirectly via the set $\mathcal{L}$, which defines the range of the entries taken into consideration for optimisation purposes. In a more recent work bounds using both $\v{p}$ and $\v{q}$ entries explicitly have been proposed~\cite{Getal21}; however, the bounds introduced therein are necessarily weaker than the ones introduced in~\cite{grabowecky2018bounds}, and therefore redundant.

We also recall the resulting limitations of the catalytic processes for $\beta = 0$:
\begin{obs}
    For incomparable states of dimension \mbox{$d\leq 3$} catalysis with incoherent catalyst is not possible for noisy operations.
\end{obs}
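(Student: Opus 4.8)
The plan is to reduce everything to the infinite-temperature restatement of Grabowecky--Gour, Theorem~\ref{thm:cat_infty}, since ``noisy operations'' are precisely thermal operations at $\beta = 0$, where thermomajorisation collapses to ordinary majorisation. First I would dispose of the low-dimensional cases. For $d = 1$ there is a single state and nothing to prove; for $d = 2$ the majorisation curve of any $\v{p}\in\Delta_2$ is fixed by its largest entry $p_1^\downarrow$ alone, so $\succ$ is a total order and no two distinct states are incomparable -- the claim then holds vacuously. The entire content of the observation therefore lives in $d = 3$.

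For $d = 3$, suppose toward a contradiction that an incoherent catalyst $\v{r}$ exists with $\v{p}\otimes\v{r}\succ\v{q}\otimes\v{r}$. The necessary condition, namely the $\beta = 0$ specialisation of Lemma~\ref{lem:catalysable_future_cond} (where the slopes reduce to $s_1 \propto p_1^\downarrow$ and $s_d \propto p_d^\downarrow$), forces $p_1^\downarrow > q_1^\downarrow$ and $p_3^\downarrow < q_3^\downarrow$. The next step is to pin down the index set $\mathcal{L} = \{l : \sum_{i=1}^l (p_i^\downarrow - q_i^\downarrow) < 0\}$. Because both vectors are normalised, the full sum vanishes, so $3\notin\mathcal{L}$; the condition $p_1^\downarrow > q_1^\downarrow$ gives $1\notin\mathcal{L}$; and incomparability ($\v{p}\not\succ\v{q}$) forces the partial sum at $l=2$ to dip below, so $2\in\mathcal{L}$. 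Hence $\mathcal{L} = \{2\}$ and $m = n = 2$.

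The crux is then a direct evaluation of the coefficient $a$ from Eq.~\eqref{eq:a_inft_ver}. With $n = 2$ one has $n+1 = d = 3$, so the second ratio degenerates, $p_{n+1}^\downarrow/p_d^\downarrow = p_3^\downarrow/p_3^\downarrow = 1$, while the first ratio $p_1^\downarrow/p_2^\downarrow \geq 1$; therefore $a = 1$. But Theorem~\ref{thm:cat_infty} requires $a > \max_v (r_v^\downarrow/r_{v+1}^\downarrow)$, and the right-hand side is at least $1$ for any ordered catalyst, so $a = 1$ is incompatible with the existence of $\v{r}$. Equivalently, the dimension threshold $k^* = \log b/\log a + 1$ diverges as $\log a\to 0$ with $b \geq 1$. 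Either formulation yields the contradiction, completing the proof.

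I expect the main obstacle to be conceptual rather than computational: recognising that the collapse $a = 1$ is forced by the boundary coincidence $n+1 = d$, i.e.\ that in $d=3$ the violated partial sum is pinned to the single middle index, leaving no slack at the top of the spectrum for a catalyst to exploit. The only genuine loose end is the degenerate subcase $p_2^\downarrow = p_3^\downarrow$ (where $b = 1$ as well, so $k^*$ is formally of the indeterminate form $0/0$); here I would argue directly from the inequality $a > \max_v (r_v^\downarrow/r_{v+1}^\downarrow) \geq 1$, which never invokes $b$, so the argument goes through uniformly.
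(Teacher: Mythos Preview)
Your argument is correct. The paper itself does not supply a proof of this Observation at all; it merely writes ``The proof is given in Ref.~\cite{Plenio1999}'' and moves on. So there is nothing to compare against line-by-line.

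That said, your route is worth flagging as a genuine alternative. Rather than appealing to the original Jonathan--Plenio argument, you derive the $d=3$ impossibility internally from the paper's own Theorem~\ref{thm:cat_infty}: you pin down $\mathcal{L}=\{2\}$, observe that $n+1=d$ forces the second ratio in Eq.~\eqref{eq:a_inft_ver} to collapse to $p_3^\downarrow/p_3^\downarrow=1$, and conclude $a=1$. This is exactly the mechanism of the Corollary that the paper states \emph{immediately after} the Observation (``Whenever $p_1^\downarrow=p_m^\downarrow$ or $p_{n+1}^\downarrow=p_d^\downarrow$, catalysis is impossible''), so in effect you have shown that the Observation is a special case of that Corollary---a connection the paper leaves implicit. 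Your handling of the $b=1$ indeterminacy by falling back on the raw inequality $a>\max_v(r_v^\downarrow/r_{v+1}^\downarrow)\geq 1$ rather than the $k^*$ formula is the right call, and the $d\leq 2$ cases are dispatched cleanly. No gaps.
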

The proof is given in Ref.~\cite{Plenio1999}. Moreover, in Ref.~\cite{grabowecky2018bounds}, we find two further corollaries of interest.
\begin{cor}
    Whenever $p^\downarrow_1 = p^\downarrow_m$ or $p^\downarrow_{n+1} = p^\downarrow_d$, catalysis is impossible.
\end{cor}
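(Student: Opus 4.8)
The plan is to show that each hypothesis forces the coefficient $a$ from Theorem~\ref{thm:cat_infty} to equal $1$, and then to observe that $a=1$ is incompatible with the existence of any catalyst. Recall that by Eq.~\eqref{eq:a_inft_ver} a catalyst $\v{r}\in\Delta_k$ can exist only if $a > \max_v\qty(r^\downarrow_v/r^\downarrow_{v+1})$. Since every consecutive ratio of a probability vector sorted in non-increasing order satisfies $r^\downarrow_v/r^\downarrow_{v+1}\geq 1$, the value $a=1$ cannot strictly dominate this maximum, and so no catalyst can be found. The whole corollary thus reduces to computing $a$ under the two stated degeneracies.

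First I would record two range facts that keep the quantities well-defined. Because we work in the regime $p^\downarrow_1 > q^\downarrow_1$, the index $1$ fails the defining inequality of $\mathcal{L}$, so $m=\min(\mathcal{L})\geq 2$; and because $\sum_{i=1}^d\qty(p^\downarrow_i-q^\downarrow_i)=0$, the index $d$ also fails it, so $n=\max(\mathcal{L})\leq d-1$ and $p^\downarrow_{n+1}$ is a genuine entry of $\v{p}^\downarrow$. Next I would use that $\v{p}^\downarrow$ is non-increasing, which gives $p^\downarrow_1/p^\downarrow_m\geq 1$ and $p^\downarrow_{n+1}/p^\downarrow_d\geq 1$. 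In the case $p^\downarrow_1=p^\downarrow_m$ the first ratio collapses to $1$ while the second stays at least $1$, so the minimum defining $a$ equals $1$; the case $p^\downarrow_{n+1}=p^\downarrow_d$ is symmetric, with the second ratio collapsing to $1$. Either way $a=1$.

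The argument then closes by contradiction: a catalyst $\v{r}$ would force $\max_v\qty(r^\downarrow_v/r^\downarrow_{v+1}) < a = 1$, contradicting $r^\downarrow_v/r^\downarrow_{v+1}\geq 1$. I expect the only real subtlety to be one of routing rather than calculation. One is tempted to argue instead through the dimension threshold $k^* = \log b/\log a + 1$, noting that $a=1$ sends $\log a\to 0$ and hence $k^*\to\infty$, so no finite catalyst dimension qualifies; this is correct whenever $b>1$, but it leaves an awkward $0/0$ indeterminacy in the degenerate subcase $b=1$ (all ratios on $\mathcal{L}$ trivial). Routing the proof through the necessary slope-ratio condition on $\v{r}$ avoids this entirely and yields the impossibility directly, so that is the line I would follow.
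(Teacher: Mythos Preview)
Your argument is correct and follows the same core idea as the paper: both show that either hypothesis forces $a=1$ and then invoke Theorem~\ref{thm:cat_infty}. The paper simply writes that $a=1$ makes $k^*=\log b/\log a + 1\to\infty$, whereas you close via the necessary inequality $a>\max_v\qty(r^\downarrow_v/r^\downarrow_{v+1})\geq 1$ on the catalyst. Your route is slightly more careful: as you note, the $k^*\to\infty$ step tacitly assumes $b>1$, and in the degenerate subcase $b=1$ the ratio $\log b/\log a$ is formally $0/0$; arguing directly from the catalyst-ratio bound sidesteps this edge case entirely. Apart from that refinement and the extra bookkeeping you include on $m\geq 2$ and $n\leq d-1$, the two proofs are the same.
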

\begin{proof}
    In either case we have $a = 1$, and hence by Theorem \ref{thm:cat_infty} one finds that $k^*\rightarrow\infty$.
\end{proof}
Finally, it is important to note that the results immediately define an admissible region for an incoherent qubit catalyst.
\begin{cor}\label{cor:qubit_cat_infty}
    Consider a pair of incomparable states $\v{p},\,\v{q}\in\Delta_d$ such that $\v{p}\perp\v{q}$, $p_1^\downarrow < q_1^\downarrow$ and $p_d^\downarrow > q_d^\downarrow$. A qubit catalyst in a state $\v{r} = (1-t,t)$ with $t \leq 1/2$ can catalyse the process, $\v{p}\otimes\v{r} \succ \v{q}\otimes\v{r}$ only if
    \begin{equation}
        \frac{1}{1+a} \leq t \leq \frac{1}{1+b}
    \end{equation}
    with $a,b$ defined as in Theorem \ref{thm:cat_infty}.
\end{cor}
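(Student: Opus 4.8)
The plan is to specialise Theorem~\ref{thm:cat_infty} to the smallest possible catalyst, $k=2$, and then invert the resulting ratio inequalities to read off the admissible interval for $t$. First I would observe that, since $t\le 1/2$, the qubit catalyst is already arranged in non-increasing order, $\v{r}^\downarrow = (1-t,\,t)$, so that $r_1^\downarrow = 1-t$ and $r_2^\downarrow = t$. Consequently the two quantities appearing in Theorem~\ref{thm:cat_infty} that involve the catalyst collapse onto a single ratio: for $k=2$ the index set in the running ratio is $v\in\{1\}$, so $\max_{v}\frac{r_v^\downarrow}{r_{v+1}^\downarrow}$ and the extremal ratio $\frac{r_1^\downarrow}{r_k^\downarrow}$ both equal $\frac{1-t}{t}\ge 1$.

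Next I would invoke the necessary conditions packaged into Eqs.~\eqref{eq:a_inft_ver}--\eqref{eq:b_inft_ver} of Theorem~\ref{thm:cat_infty}, namely $\max_v \frac{r_v^\downarrow}{r_{v+1}^\downarrow} < a$ together with $\frac{r_1^\downarrow}{r_k^\downarrow} > b$. Applied to the qubit these merge into the single chain
\begin{equation}
    b < \frac{1-t}{t} < a .
\end{equation}
Because the map $t\mapsto \frac{1-t}{t} = \frac{1}{t}-1$ is strictly decreasing on $(0,1]$, each inequality inverts without changing the labelling of $a$ and $b$: from $\frac{1-t}{t}<a$ I obtain $1 < t(1+a)$, hence $t > \frac{1}{1+a}$, while $\frac{1-t}{t}>b$ gives $1 > t(1+b)$, hence $t < \frac{1}{1+b}$. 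Relaxing the strict inequalities then produces exactly the claimed bounds $\frac{1}{1+a}\le t\le \frac{1}{1+b}$.

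Since the statement is a purely necessary (``only if'') condition, only the forward implication is needed, so no sufficiency or explicit dimension count is required beyond quoting Theorem~\ref{thm:cat_infty}; the constraint $k>k^*$ is automatically consistent, as a nondegenerate interval for $t$ exists precisely when $a>b$, which is the $k=2$ instance of $k>k^*$. There is no genuinely hard step here — the proof is a short specialisation followed by elementary algebra. The only points that demand care are verifying that $t\le 1/2$ really places $1-t$ as the larger entry (so the sorted ratios reduce as claimed and both lie in $(0,1/2]$, consistent with $a,b\ge 1$), and that the monotonicity of $t\mapsto (1-t)/t$ preserves the inequality directions when inverting. The coefficients $a,b$ are inherited verbatim from Theorem~\ref{thm:cat_infty} and require no recomputation.
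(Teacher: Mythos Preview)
Your argument is correct and matches the paper's intended route: the corollary is stated there without an explicit proof, as an immediate specialisation of Theorem~\ref{thm:cat_infty} to $k=2$, which is precisely what you carry out. The only cosmetic point is that the strict inequalities from Theorem~\ref{thm:cat_infty} already imply the weak ones in the statement, so your ``relaxing'' step is harmless but also unnecessary to flag.
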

It has to be stressed, however, that it is only a necessary condition. As has been already noticed in \cite{grabowecky2018bounds}, not only the actual catalysis happens for a smaller sub-interval, but may also happen in several disjoint sub-intervals.

\section{Proof of Theorem \ref{thm:cat_beta}}
\label{app:proofs}

Below we provide two approaches to extend the Theorem~\ref{thm:cat_infty} to the thermal operations. First, we focus on a heuristic approach by treating the majorisation and thermomajorisation curves as continuous functions, which allows us to reformulate the expressions in terms of slopes instead of probabilities. The second, more formal approach, takes advantage of the embedding scheme introduced in \cite{horodecki2013fundamental}, which relates thermomajorisation relation with a majorisation relation in an extended space of potentially infinite dimension.

\subsection{Heuristic approach via derivatives}

Let us consider a state $\v{p}$ and the corresponding majorisation curve $f_{\v{p}}(x)$. In what follows, for brevity of notation, we will use primed notation $f'(x) \equiv \dv*{f(x)}{x}$ for derivative  of a function and $f'(x_i)\equiv \eval{f'(x)}_{x=x_i}$. For $\beta=0$, we find a simple relation between probabilities and derivatives of the majorisation curve, namely
\begin{equation}\label{Eq:app-prob-der}
    d \cdot p_i^\downarrow = \eval{f'_{\v{p}}(x)}_{\frac{i-1}{d}< x < \frac{i}{d}} \equiv f'_{\v{p}}(x_i),
\end{equation}
where we define $x_i\in \qty[(i-1)/d,i/d]$ as an arbitrary point from the interval between the consecutive elbows of the majorisation curve. Using the relation given by Eq.~\eqref{Eq:app-prob-der} in Eqs.~\eqref{eq:a_inft_ver}~-~\eqref{eq:b_inft_ver}, we find that the coefficients $a$ and $b$ can be expressed in terms of derivatives as follows:
    \begin{align}
        a & = \min\qty(\frac{f'_{\v{p}}(x_1)}{f'_{\v{p}}(x_{m})},\frac{f'_{\v{p}}(x_{n+1})}{f'_{\v{p}}(x_{d})}) > \max_{v\in\qty{1,2,\hdots,k-1}}\qty(\frac{f'_{\v{r}}(x_\nu)}{f'_{\v{r}}(x_{\nu+1})}),\label{eq:a_der_ver}\\
        b & = \max_{l\in\mathcal{L}} \frac{f'_{\v{p}}(x_l)}{f'_{\v{p}}(x_{l+1})} < \frac{f'_{\v{r}}(x_1)}{f'_{\v{r}}(x_{k})}.\label{eq:b_der_ver}
    \end{align}
Now, the expressions above remain valid when we replace the subset $\mathcal{L}$, over which we maximize, with an open interval $\mathcal{L} = (m',n') = \qty{x: f'_{\v{p}}(x)-f'_{\v{q}}(x) < 0}$;  the same substitution can be applied to the right-hand side~\eqref{eq:a_der_ver}. However, to maintain the correct interpretation, we must substitute the ratios of subsequent points with the ratios of the left and right derivatives, specifically focusing on the left-hand side of Eq.~\eqref{eq:b_der_ver}. As result, leads to the following 
\begin{equation}
    b = \max_{x\in\mathcal{L}} \frac{f'_{\v{p}}(x_-)}{f'_{\v{p}}(x_+)}.
\end{equation}
where we used shorthand notation for left and right derivatives $f'(x_{\pm})\equiv \lim_{y\rightarrow x_\pm} f'(x)$. Note that, whenever a function is differentiable at point $x$, the ratio between left and right derivative is equal to one. Therefore, any ratio value differing from one can be considered indicative of non-differentiability. This is a characteristic observed exclusively at the elbows of majorisation curves—a property we will later exploit to further simplify our calculations.

Finally, we return to the discussion of thermomajorisation curves by reintroducing $\beta > 0$, which establishes the connection between the derivatives and probabilities.
\begin{equation}
    p_i^\beta = \eval{\gamma_i^\beta f'_{\v{p}}(x)}_{\Gamma_{i-1}< x < \Gamma_i},
\end{equation}
where we used the subsums of the Gibbs distribution $\Gamma_i = \sum_{j=1}^i \gamma_j^\beta$, which correspond to the locations of the elbows on the thermomajorisation curve $f_{\v{p}}(x)$. While all inequalities remain valid, they now depend not only on the probability vectors $\v{p}$ and $\v{r}$ but also on the respective underlying Gibbs distributions $\v{\gamma}$ and $\v{\gamma}_{\v{r}}$.

Now, let us focus on the derivation of lower bound on catalyst dimensionality $k_*$ from the derivatives' perspective. In the original approach the starting point was to note the following chain of bounds,
\begin{equation}
    b < \frac{r^\downarrow_1}{r^\downarrow_k} = \prod_{i=1}^{k-1} \frac{r^\downarrow_i}{r^\downarrow_{i+1}} < \qty[\max_{v\in\qty{1,2,\hdots,k-1}}\qty(\frac{r^\downarrow_v}{r^\downarrow_{v+1}})]^{k-1}<a^{k-1}.
\end{equation}
Unfortunately, the heuristic approach does not allow us to extend this result directly, as we shifted our perspective from piecewise-linear curves defined solely by their elbows to continuous functions. However, by taking the logarithm of Eq.~\eqref{eq:b_der_ver}, one can observe that it involves the ratio between the first and last slope. This allows for the computation of the left-derivative at $x_1=0$ and the right-derivative at $x_k=1$,

\begin{equation}
    \log(\frac{f'_{\v{r}}(0_+)}{f'_{\v{r}}(1_-)}) = \log[f'_{\v{r}}(0_+)] - 
    \log[f'_{\v{r}}(1_-)].
\end{equation}
The above can also be reformulated as the integral of a derivative,
\begin{align}\label{Eq:app-log}
    \log(\frac{f'_{\v{r}}(0_+)}{f'_{\v{r}}(1_-)}) & = \lim_{\delta\rightarrow0}\int_\delta^{1-\delta} \qty[-\log(f'_{\v{r}}(x))]'\dd{x}. 
\end{align}
Next, we observe that the logarithmic terms appearing in Eq.~\eqref{Eq:app-log} can be explicitly expressed as
\begin{align}
    \log(f'_{\v{r}}(x)) =& \sum_{i=1}^k \log(s_i(\v{r})) \qty(\Theta\qty(x - \Gamma_{i-1})-\Theta\qty(x - \Gamma_{i})) \label{Eq:app-log-1}, \\
    \qty[\log(f'_{\v{r}}(x))]'  =&  \sum_{i=1}^k \log(s_i(\v{r})) \qty(\delta\qty(x - \Gamma_{i-1})-\delta\qty(x - \Gamma_{i})) \nonumber \\
    = & \log(s_1(\v{r}))\delta(x) - \log(s_d(\v{r}))\delta(x-1)\nonumber \\
    + & \sum_{i=1}^{k-1} \delta(x-\Gamma_i)\qty[\log(s_{i+1}(\v{r})) - \log(s_i(\v{r}))]. \label{Eq:app-log-2}
\end{align}
In Eqs.~\eqref{Eq:app-log-1} and \eqref{Eq:app-log-2}, $\Theta(x)$ represents the Heaviside step function, and $\delta(x)$ denotes the Dirac delta function, respectively.

Thus, using Eqs.\eqref{Eq:app-log-1} and \eqref{Eq:app-log-2}, one can compute the integral given in Eq.\eqref{Eq:app-log} as follows:
\begin{align}
    \lim_{\delta\rightarrow0}\int_\delta^{1-\delta} 
    \!\!\!\!\!\!\!\!
    \qty[-\log(f'_{\v{r}}(x))]'\dd{x} & =
    \sum_{i=1}^{k-1} \log(\frac{s_i(\v{r})}{s_{i+1}(\v{r})}) \nonumber \\
    & \leq (k-1)\max_i \log(\frac{s_i(\v{r})}{s_{i+1}(\v{r})}).
\end{align}
where we recall that $\v{s}(\v{r})$ is the slope vector corresponding to the population of the catalyst $\v{r}$. Finally, using~\eqref{eq:a_der_ver}, we retrieve the bound on the dimensionality of the catalyst state $\v{r}$, 
\begin{equation}
    k \geq \frac{\log b}{\log a} + 1 = k_*,
\end{equation}
within the heuristic approach. 

\subsection{Embedding map approach}

We start by recalling the embedding map~\cite{horodecki2013fundamental}~(see Ref.~\cite{Lostaglio2019} for a detailed discussion). This allows us to draw a connection between thermomajorisation and majorisation in a space of larger dimension.

\begin{defn}[Embedding map]\label{def:emb_map}
Consider a thermal distribution $\v{\gamma}$ with rational entries, \mbox{$\gamma_i=D_i/D$} and \mbox{$D_i,D\in\mathbb{N}$}, the embedding map $\Xi$ sends a $d$-dimensional probability distribution $\v{p}$ to a $D$-dimensional probability distribution $\hat{\v{p}}:=\Xi(\v{p})$ as follow:
	\begin{equation}
	\hat{\v{p}}=\Biggl[
    \underbrace{\frac{p_1}{D_1},\hdots,\frac{p_1}{D_1}}_{D_1 \: \text{times}},
    \underbrace{\frac{p_2}{D_2},\hdots,\frac{p_2}{D_2}}_{D_2 \: \text{times}},\hdots,
    \underbrace{\frac{p_d}{D_d},\hdots,\frac{p_d}{D_d}}_{D_d \: \text{times}}\Biggr]\label{eq:embedding}.
	\end{equation}
\end{defn}
\noindent A generic thermal distribution $\boldsymbol{\gamma}$ will not fall within the scope of Definition~\ref{def:emb_map}. Nevertheless, since the rational numbers $\mathbb{Q}$ are dense subset of the real numbers $\mathbb{R}$. Thanks to this the limit $\lim_{D\rightarrow\infty}  \frac{\lceil D \gamma_i\rceil}{D} = \lim_{D\rightarrow\infty}  \frac{\lfloor D \gamma_i\rfloor}{D} = \gamma_i$ exists. With this chain of approximations in mind, we proceed to work within the limit of $D\rightarrow\infty$.

It follows from the above definition that the majorisation between embedded vectors coincides exactly with the notion of thermo-majorisation
\begin{equation}
 \hat{\v{p}} \succ \hat{\v{q}} \Longleftrightarrow  \v{p}\succ_\beta\v{q}
\end{equation}
Now, we introduce a function $\varphi$ that maps elements of the embedded distribution to the corresponding elements of the original one
\begin{equation}
    \hat{p}^\downarrow_{j} = \frac{p_{\varphi(j)}}{D_{\varphi(j)}},
\end{equation}
which, as a consequence of non-increasing ordering, yields
\begin{equation}
    j>k \Rightarrow \frac{p_{\varphi(j)}}{D_{\varphi(j)}} \leq \frac{p_{\varphi(k)}}{D_{\varphi(k)}}.
\end{equation}
The function $\phi$ is used in order to convert the ordering of the embedded vector $\hat{\v{p}}^\downarrow$ to the $\beta$-ordering of $\v{p}^\beta$. Suppose, for example, that $p^\beta_1 = p_3$. Knowing this, we find that $\phi(j) = 3$ for all $1\leq j\leq D_3$. More generally, it is defined to be, we find that $\phi(j) = \pi^{-1}(i)$ for $\sum_{n=1}^i D_{\pi^{-1}(n)}+1\leq j \leq \sum_{n=1}^i D_{\pi^{-1}(n)}$.

Focusing on the expression \eqref{eq:b_inft_ver} for $b$ and applying it to the embedded vector $\hat{\v{p}}$, we substitute

\begin{align}
    \frac{\hat{p}^\downarrow_l}{\hat{p}^\downarrow_{l+1}} & = \frac{p_{\varphi(l)}}{D_{\varphi(l)}}\frac{D_{\varphi(l+1)}}{p_{\varphi(l+1)}} \\
    & = \frac{p_{\varphi(l)}}{D_{\varphi(l)}/D}\frac{D_{\varphi(l+1)}/D}{p_{\varphi(l+1)}} \\ & \overset{D\rightarrow\infty}{\longrightarrow} \frac{p_{\varphi(l)}}{\gamma_{\varphi(l)}}\frac{\gamma_{\varphi(l+1)}}{p_{\varphi(l+1)}} = \frac{s_{\pi(\varphi(l))}(\v{p})}{s_{\pi(\varphi(l+1))}(\v{p})}
\end{align}
where in the last expression $\pi$ is the permutation responsible for the beta-ordering of $\v{p}^\beta$. Treating all the remaining ratios in \eqref{eq:a_inft_ver}, \eqref{eq:b_inft_ver}, we retrieve the full form of Theorem~\ref{thm:cat_beta}.

\section{Upper-bounding number of intersections between d-point simplices} \label{app:geometric_fancy}

We start by considering a regular $d$-point simplex $\Delta_d$ spanned by vertices $\v{v}_i$ for $i = \{1,\hdots,d\}$, with the central point $\v{c} = \frac{1}{d}\sum_i\v{v}_i$. In what follows, we will consider the rescaled versions of the simplex defined as
\begin{equation}
    t\Delta_d = \operatorname{conv}\qty(\qty{t \v{v}_i + (1-t) \v{c}}_{i=1}^d).
\end{equation}

First, consider the intersection $\Delta_d \cap t\Delta_d$ for $t\leq 0$ which, in fact, defines an inverted or dual simplex. This will be denoted as $t_d \nabla_d$ with $t_d = -t$, a choice whose rationale will soon become apparent. For $t_d\in[0,1/(d-1)]$, we observe that there is no intersection between the boundaries, $\partial\Delta_d\cap \partial (t_d\nabla_d) = \emptyset$. The first intersection occurs at $t = -1/(d-1)$, at which point there is only a single extreme point per $d$-dimensional face of $\Delta_d$, since
\begin{equation}
    \frac{1}{d-1} \nabla_d= \operatorname{conv}\qty(\qty{\sum_{j=1}^d\frac{1-\delta_{j}}{d-1}\v{v}_j}_{i=1}^d).
\end{equation}
For $t_d \geq 1/(d-1)$, the intersection points of $-t\nabla_d$ with each of the $d$ faces of $\Delta_d$—which are effectively $(d-1)$-dimensional simplices—result in a $(d-1)$-dimensional inverted simplex, denoted as $t_{d-1}\nabla_{d-1}$. Here, we define $t_{d-1} = \frac{d-1}{d}\min(0, t_d - \frac{1}{d-1})$. Consequently, there are exactly $(d-1)$ intersections per face of $\Delta_d$, continuing until $t_{d-1} = 1/(d-2)$. The analysis then continues for the $(d-2)$-dimensional faces of $\Delta_d$.

Based on the reasoning above, we can identify only two types of situations:
\begin{enumerate}
    \item The intersection $\Delta_d \cap t\nabla_d$ has a single vertex per $k$-dimensional face of $\Delta_d$.
    \item The intersection $\Delta_d \cap t\nabla_d$ has exactly $(d-k)$ vertices per $k$-dimensional face of $\Delta_d$.
\end{enumerate}
This results in either $\binom{d}{k}$ vertices or $k\binom{d}{k}$ vertices in total, with the latter being the larger one and attaining maximum at $k = \left\lceil\frac{d}{2}\right\rceil$. This completes the reasoning for simplices with a common center $\v{c}$. 

The heuristic approach for the translations defined as
\begin{equation}
    \Delta_d + \v{\delta} = \operatorname{conv}\qty(\qty{\v{v}_i + \v{\delta}}_{i=1}^d).
\end{equation}
is based on the following reasoning. Let us focus on the intersection with a single face $F$ from $\Delta_d$ and assume that $t_{d-1}< \frac{2}{d-2}$. Any shift in a direction parallel to the plane the face is contained in, $\v{\delta}\parallel F$, will result in translating the intersection points between $t\nabla_d$ and $F+\v{\delta}$. Similarly, the shift in the perpendicular direction will result in scaling the ``intersection simplex''. This observation indicates that the general structure of the intersections remains unchanged, confirming that the upper bound on the number of vertices from concentric $\Delta_d$ and $t_d\nabla_d$ holds.

\section{Proof of Proposition \ref{prop:peculiar_point}} \label{app:peculiar_point}

In this section, we assume that the Gibbs state is given by $\v{\gamma}  =\qty(\gamma_1,\,\gamma_2,\,\gamma_2,\,\gamma_4)$. As a result, arbitrary unitary operation in the subspace $\operatorname{span}(\ket{01},\,\ket{10})$ is energy-preserving, and thus an admissible thermal operation (TO). Under these assumptions there exists a subset of states that can become entangled under TOs. In particular, for 2-qubit states with populations $\v{p}$ a state can be entangled using energy-preserving unitaries if and only if $4p_1p_4 - (p_2 - p_3)^2 < 0$. Furthermore, as shown in \cite{de2024ent}, a state $\v{p}$ can be entangled by TOs if and only if the extreme point of its future thermal cone $\v{p}^\pi \in \mathcal{T}_+(\v{p})$ with $\beta$-order given by $\pi = (2134)$. We denote by $\mathbb{TN}$ the set of states $\v{p}$ for which no point of the future cone $\mathcal{T}_+(\v{p})$ can become entangled under energy-preserving unitaries; this set is called thermally non-entanglable set. Similarly, we define a set of catalytically nonentaglable set $\mathbb{CN}$ as set of states for which no state from either their future cone or the catalysable set $\mathcal{C}_+(\v{p})$ can become entangled under energy-preserving unitaries.

In order to prove Proposition~\ref{prop:peculiar_point}, we need to show that the boundary of set $\mathbb{CN}$ always contains a subspace-thermalised state of the form
\begin{equation}
    \partial \mathbb{CN}\ni\v{p}^* = (1-t')\qty(0,0,0,1) + \frac{t'}{Z_{123}} \qty(\gamma_1,\gamma_2,\gamma_2,0),
\end{equation}
with $Z_{123} = \gamma_1 + 2\gamma_2$. This state by itself is clearly non-entanglable by using only energy-preserving unitaries, as it is proportional to identity on the subspace $\operatorname{span}(\ket{01},\,\ket{10})$ and as such, does not admit generation of any coherences within the aforementioned subspace. However, its future thermal cone $\mathcal{T}_+(\v{p}^*)$ may contain states that are entanglable.

In particular, one of the extreme points of $\mathcal{T}_+(\v{p}^*)$ is a subspace-thermalised state of the form
\begin{equation}
    \v{p}^{**} = (1-t)\qty(0,1,0,0) + \frac{t}{Z_{124}} \qty(\gamma_1,0,\gamma_2,\gamma_4),
\end{equation}
with $Z_{123} = \gamma_1 + \gamma_2 + \gamma_4$. This admits the distinguished $\beta$-order of $\v{\pi} = (2,1,3,4)$, and thus, by Theorem 2 from \cite{de2024ent}, if this is not an element of the entanglable set, $\v{p}^{**}\in\mathbb{TN}$ It is easy to find $t$ for which this state becomes non-entanglable by solving,
\begin{equation}
    4\frac{t^2}{Z_{124}^2}\gamma_1\gamma_4 - \qty(1-t-\frac{t}{Z_{124}}\gamma_2)^2 = 0,
\end{equation}
which is solved by setting 
\begin{equation}
    t = \frac{\gamma_1+\gamma_2+\gamma_4}{2 \sqrt{\gamma_1}
   \sqrt{\gamma_4}+1}.
\end{equation}

Now we would like to find a state $\v{p}^*$
which tightly majorises the state $\v{p}^{**}$, $\v{p}^*\succcurlyeq\v{p}^{**}$. It is achieved by setting $t' = \frac{Z_{123}}{Z_{124}}$. By reinstating $\gamma_1 = 1,\,\gamma_2 = \gamma_3 = e^{-\beta},\,\gamma_4 = e^{-2\beta}$, we retrieve Proposition \ref{prop:peculiar_point}.

According to the above derivation, we can say that $\mathcal{T}_+(\v{p}^*)\subset\mathbb{TN}$, but also $\mathcal{T}_+(\v{p}^*)\subset\mathbb{CN}$, as $\mathcal{C}_+(\v{p}^*) = \emptyset$ and for all $\v{q}\prec\v{p}^*$ we find that $\mathcal{C}_+(\v{q})\subset\mathcal{T}_+(\v{p}^*)$. However, by explicit check one can verify that for $\v{q}'\in\mathcal{B}(\v{p}^*,\epsilon)\cap\mathcal{T}_+(\v{p}^*)$ and $\epsilon\leq \epsilon_*$, where $\mathcal{B}(\v{p}^*,\epsilon)$ is a ball of radius $\epsilon$ centered at $\v{p}^*$ and $\epsilon_*\ll1$ a certain critical value, we find $\v{q}' \in \mathbb{CE}$.

\bibliography{biblio}
\end{document}